\newtheorem{theorem}{Theorem}
\newtheorem{remark}{Remark}
\newtheorem{lemma}{Lemma}
\newtheorem*{lemma*}{Lemma}
\newtheorem{definition}{Definition}
\newtheorem{example}{Example}
\def\trace{\mathop{\mathrm{tr}}}
\def\det{\mathop{\mathrm{det}}}
\def \SNR{\operatorname{SNR}}
\def \INR{\operatorname{INR}}
\def \mod{\operatorname{mod}}
\def \P{\operatorname{Pr}}
\def\b0{{\pmb{0}}} 
\newcommand\blfootnote[1]{%
  \begingroup
  \renewcommand\thefootnote{}\footnote{#1}%
  \addtocounter{footnote}{-1}%
  \endgroup}
\begin{document}

\title{Throughput Scaling of Covert Communication over Wireless Adhoc Networks}

 \author{\IEEEauthorblockN{Kang-Hee Cho, Si-Hyeon Lee, and Vincent Y. F. Tan}
}

\maketitle

\begin{abstract}
We consider the problem of covert communication over wireless adhoc networks in which (roughly) $n$ legitimate nodes (LNs) and $n^{\kappa}$ for $0<\kappa<1$ non-communicating warden nodes (WNs) are randomly distributed in a square of unit area. Each legitimate source wants to communicate with its intended destination node while ensuring that every WN is unable to detect the presence of the communication. In this scenario, we study the throughput scaling law. Due the covert communication constraint, the transmit powers are necessarily limited. Under this condition, we introduce a preservation region around each WN. This region serves to prevent transmission from the LNs and to increase   the transmit power of the LNs outside the preservation regions. For the achievability results, multi-hop (MH), hierarchical cooperation (HC), and hybrid HC-MH schemes are utilized with some appropriate modifications. In the proposed MH and hybrid schemes, because the preservation regions may impede communication along direct data paths, the data paths are suitably modified by taking a detour around each preservation region. To avoid the concentration of detours resulting extra relaying burdens, we distribute the detours evenly over a wide region. In the proposed HC scheme, we control the symbol power and the scheduling of distributed multiple-input multiple-output transmission. We also present matching upper bounds on the throughput scaling under the assumption that every active LN consumes the same average transmit power over the time period in which the WNs observe the channel outputs. 
\blfootnote{K.-H. Cho and S.-H Lee are with the Department of Electrical Engineering, Pohang University of Science and Technology, Pohang 37673, South Korea (e-mail: kanghee.cho@postech.ac.kr; sihyeon@postech.ac.kr). 

V. Y. F. Tan is with the Department of Electrical and Computer Engineering, National University of Singapore, Singapore 117583, and also with the Department of Mathematics, National University of Singapore, Singapore 119076  (e-mail: vtan@nus.edu.sg).

The material in this paper will be presented in part at ISIT 2019 in Paris, France in July 2019.}
\end{abstract}

\begin{IEEEkeywords}
Covert communication, low probability of detection, wireless adhoc networks, capacity scaling law.
\end{IEEEkeywords}
 
\section{Introduction}\label{sec:intro}
Covert communication is the problem of designing protocols to ensure that  communication between legitimate parties is reliable while the presence of such communication is undetectable to an adversary or warden. From the information-theoretic point of view, the fundamental limits of covert communication have been well-studied for point-to-point (P-to-P) scenarios such as additive white Gaussian noise (AWGN) channels \cite{Bash:13}, discrete memoryless channels (DMCs) \cite{Wang:16,Bloch:16}, and several models taking into account channel uncertainty with or without using multiple antennas in the channel model~\cite{deniableuncertainty, undetectable, MIMOAWGN, noncoherent, csit}. The study of covert communication in P-to-P scenarios has been extended to several network information theory models---e.g., multiple access channels \cite{MAC}, broadcast channels \cite{BC}, and random network models in which a multiple number of wardens and interferers (helpers) are distributed randomly \cite{Soltani:18}. In \cite{Soltani:18}, the authors considered a single legitimate source-destination (S-D) pair. In this work, we consider the scenario in which a large number of legitimate S-D pairs communicate covertly without any artificial noise generation. In the covert communication setup, it is known that in the most interesting non-degenerate cases, the maximum throughput over $l$ channel uses scales proportionally to square root of the blocklength $l$ \cite{Bash:13, Wang:16, Bloch:16, BC}---the so-called square-root law; this law has also been observed in several studies on steganography \cite{ste1, ste2}.

In this paper, we consider a more general network setup, namely, wireless adhoc networks where a number of legitimate nodes (LNs) and non-communicating warden nodes (WNs) are distributed randomly in the same network. In the covert communication setting, the communication engineer is required to ensure that communication between the legitimate parties must be undetectable to every WN when each WN observes and uses the channel outputs of a certain blocklength to perform a hypothesis test to determine whether communication takes place. A practical scenario of this setup is a military situation where an army is attempting to infiltrate the enemy's area but the presence of communication is to be hidden from the adversary. As an alternative to characterizing the exact capacity, we follow the traditional approach for wireless adhoc networks \cite{Gupta-Kumar:00}; we study the scaling of the network's capacity of covert communication in the number of LNs. In addition, we observe how the number of WNs and the number of channel uses utilized by the WNs to test the presence of communication are related to the optimal throughput scaling of covert communication. 

In the absence of the covert communication constraint (or WNs), the study of the capacity scaling of wireless adhoc networks has been extensively pursued; see for example \cite{Gupta-Kumar:00,percol, upper, Ozgur:07, arbitrary, regime, Jeon:11}. In particular, in \cite{regime}, the capacity scaling of network with an arbitrary size is derived by using a cutset bound technique and three network schemes: nearest multi-hop (MH) \cite{Gupta-Kumar:00}, hierarchical cooperation (HC) \cite{Ozgur:07}, and hybrid HC-MH \cite{regime} schemes of which the best scheme is decided by the typical nearest signal-to-noise ratio and the path loss exponent. In the presence of the WNs, our network model is somewhat similar to the cognitive network setup in \cite{Jeon:11}, where a primary network and a secondary network share the same network and the secondary network is required to communicate while not affecting the throughput scaling of the primary network. In \cite{Jeon:11}, a preservation region around each primary node is introduced. This region prevents the transmission from secondary nodes to reduce the interference from nearby secondary~nodes, and an MH scheme \cite{Gupta-Kumar:00} is modified to avoid transmission through the preservation regions. 

In this paper, we utilize a preservation region \cite{Jeon:11} around each WN in which the transmission of information from LNs is not permitted. Since the covert communication constraint is more stringent than the constraint given in the secondary network in \cite{Jeon:11}, in this paper, the sizes of preservation regions have to be determined appropriately. In the presence of the preservation regions, we propose three network communication schemes for covert communication that are based on existing schemes in the literature: MH \cite{Gupta-Kumar:00}, HC \cite{Ozgur:07}, and hybrid HC-MH \cite{regime} schemes. These schemes are modified by taking into account the preservation regions and the covert communication constraint. Even if such modifications are made, however, each modified scheme can achieve asymptotically same throughput scaling that the original scheme can achieve when the modified schemes and the original schemes are performed under the same average transmit power constraint. In addition, the modified schemes attain the upper bounds (converses) on the throughput. The scheme that yields the highest throughput amongst these schemes is decided by the number of WNs, the number of channel outputs that each WN observes, and the path loss exponent. These factors determine the operating regime of the network. For the converse part, by using a technique based on the cutset bound, we present matching upper bounds on the throughput scaling under the assumption that every LN outside the preservation regions consumes the same average transmit power over the time period in which the WNs observe the channel outputs. 

The paper is structured as follows. In Section \ref{sec:problem}, we formulate our problem by describing the network model, covert communication constraint, and goal of this paper. In Section \ref{sec:results}, we state the main results on the throughput scalings and give intuitive descriptions for the results. The details on the three proposed schemes are presented in Section \ref{sec:achiev}, and the proofs for cutset bounds are given in Section \ref{sec:conv}. In Section \ref{sec:conclusion}, we conclude this paper.

\section{Problem formulation}\label{sec:problem}

\subsection{Network Model}\label{subsec:network}
We consider a network of a square of unit area where LNs are distributed according to a Poisson point process (p.p.p.) of density $n$. The S-D nodes are chosen in pairs in such a way where each LN is a source or destination such that half of the LNs are sources and the other half are destinations. The communications between the legitimate S-D pairs are required to be covert against non-communicating WNs, which are distributed according to a p.p.p. of density $n^{\kappa}$ for $0<\kappa<1$ in the same network 
\begin{figure}
 \centering
  {
  \includegraphics[width=45mm]{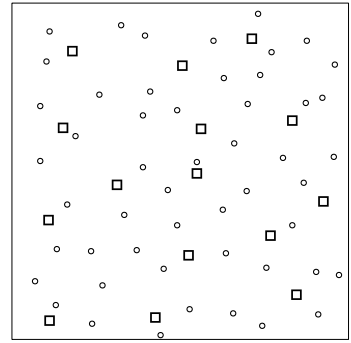}}
  \caption{The network model considered in this paper. The circles represent the LNs and the squares are the WNs.} \label{fig:network}
\end{figure}
(Fig. \ref{fig:network}). The locations of all LNs and WNs are assumed to be fixed during the communication, and every LN knows the locations of other LNs and WNs. We suppose that there are $n_{\mathrm{l}}$ LNs and $n_{\mathrm{w}}$ WNs in the network. 

Each LN has an average transmit power constraint of $P$ and the network is allocated a total bandwidth $B$ around the carrier frequency $f_{\mathrm{c}}\gg B$. Every legitimate source node (LSN) wants to communicate with its legitimate destination node (LDN) at the same rate of $R(n,\kappa)$. The discrete-time baseband-equivalent output at LN $i$ at time $m$ is given as
\begin{align}
Y_i[m]=\sum^{n_{\mathrm{l}}}_{k=1} H_{ik}[m]X_k[m] +N_i[m],
\end{align}
where $X_k[m]$ is the discrete-time baseband-equivalent input at LN $k$, $N_i[m]$ is white circularly symmetric Gaussian noise $\mathcal{CN}(0,N_0)$ at LN $i$, and $H_{ik}[m]$ is the discrete-time baseband-equivalent channel gain between LNs $k$ and $i$ given as
\begin{align}
H_{ik}[m]&=\frac{\sqrt{G}}{d_{ik}^{\alpha/2}}\exp\left(j\theta_{ik}[m] \right),
\label{eqn:H_su}
\end{align}
where $d_{ik}$ is the distance between LNs $k$ and $i$,  $\alpha > 2$ is the path loss exponent in the network, $\theta_{ik}[m]$ is independent and identically distributed (i.i.d.) random phase uniformly distributed over $[0,2\pi]$, and $G=\frac{\lambda^2 G_l}{16\pi^2}$  by Friis' formula, where $G_l$ is the product of the transmit and receive antenna gains. We assume that the legitimate receivers have channel phase information.
Similarly, the discrete-time baseband-equivalent output at WN $i$ at time $m$ is given as 
\begin{align}
Z_i[m]=\sum^{n_{\mathrm{l}}}_{k=1} H'_{ik}[m]X_k[m] +N_i'[m],
\end{align}
where $N_i'[m]$ is white circularly symmetric Gaussian noise $\mathcal{CN}(0,N_0)$ at WN $i$, and $H'_{ik}[m]$ is the channel gain between LN $k$ and WN $i$, which is defined similarly to \eqref{eqn:H_su}. Henceforth, we will omit the time index for brevity. We assume that there exists a sufficiently long secret key between every LN S-D pair.
\begin{remark}
Fix $\epsilon > 0$. According to \cite[Lemma 1]{Jeon:11}, $n_{\mathrm{l}}$ belongs to the interval $((1-\epsilon)n, (1+\epsilon)n)$ and $n_{\mathrm{w}}$ belongs to the interval $((1-\epsilon)n^{\kappa}, (1+\epsilon)n^{\kappa})$ w.h.p.\footnote{The term w.h.p.\ means that the probability of the event tends to one as $n$ tends to infinity.}
\end{remark}

\subsection{Covertness Constraint}\label{subsec:covertness}
In this subsection, the covert communication constraint is presented. Each WN observes and utilizes the channel outputs over $l$ channel uses to test whether LNs are communicating. The integer $l$ can be regarded as a memory constraint at each WN and the starting point of the window can be arbitrarily chosen over the whole duration in which communication takes place. We assume that $l$ is sufficiently large, and it can also tend to infinity at a certain rate as $n$ tends to infinity. Then, the covert communication constraint is expressed as follows:\footnote{It is known from Pinsker's inequality that the hypothesis test by a WN is not much better than a blind test when the relative entropy is small; see \cite{Cover:2006}. In this paper, the relative entropy is defined using the natural logarithm.}
\begin{align}\label{eqn:covertness}
D\left(\left. \hat{Q}_{Z_i^l} \right \| Q_{N_i'}^{\times l}\right)\leq \delta, \quad  i=1,2,\ldots,n_{\mathrm{w}},
\end{align} 
where $\delta >0$ and $Q_{N_i'}$ is the distribution of $N_i'$ when no communication occurs (i.e., the transmitter remains silent), $Q_{N_i'}^{\times l}$ is the $l$-fold product distribution of $Q_{N_i'}$, and $\hat{Q}_{Z_i^l}$ is the distribution of $Z_i^l$ (i.e., the induced output distribution) when the transmitter is active and a message is sent.

\subsection{Goal of This Paper}\label{subsec:goal}
We allow some LN pairs not to send their own message as long as the fraction of such pairs is {\em vanishing}.\footnote{By {\em vanishing}, we mean that the fraction of the LN pairs that do not send their own message tends to zero as $n$ tends to infinity.}  Let $\epsilon(n,\kappa) > 0$ denote this outage fraction of the LNs as a function of $n$ and $\kappa$ and we say that the LN pairs that do not send their own messages are {\em in outage}. The LNs in outage can, however, cooperate with other LNs to enable communication. Under this assumption, the following is the formal definition of an achievable throughput per LN.
\begin{definition}
 A throughput of $R(n,\kappa)$ per LN is said to be achievable under the covertness constraint if at least $1-\epsilon(n,\kappa)$ fraction of LSNs can transmit with a rate of $R(n,\kappa)$ to their LDNs w.h.p.\ while satisfying the covertness constraint \eqref{eqn:covertness}. We note that $\epsilon(n, \kappa)$ vanishes as $n \to \infty$.
\end{definition}
The aggregate throughput $T(n,\kappa)$ is defined as $\frac{n_{\mathrm{l}}R(n,\kappa)}{2}$. Our goal is to characterize the aggregate throughput scaling law in the presence of the covertness contraint.

\section{Main results}\label{sec:results}
In this section, we present achievable throughput scalings in Theorems \ref{thm:achv1} and \ref{thm:achv2}. In Theorems \ref{thm:conv1} and \ref{thm:conv2}, we also present upper bounds on the aggregate throughput scalings under the assumption that every active LN uses the {\em same} average transmit power over an arbitrary window of $l$ channel uses when the covertness constraint results in a more stringent average transmit power constraint compared to the average transmit power constraint of $P$ at each LN (as described in Section II-A). Our achievability schemes can achieve the upper bounds asymptotically. Proofs of Theorems \ref{thm:achv1} and \ref{thm:achv2} are presented in Section \ref{sec:achiev}, and proofs of Theorems \ref{thm:conv1} and \ref{thm:conv2} are in Section \ref{sec:conv}.

\begin{figure}
 \centering
  {
  \includegraphics[width=90mm]{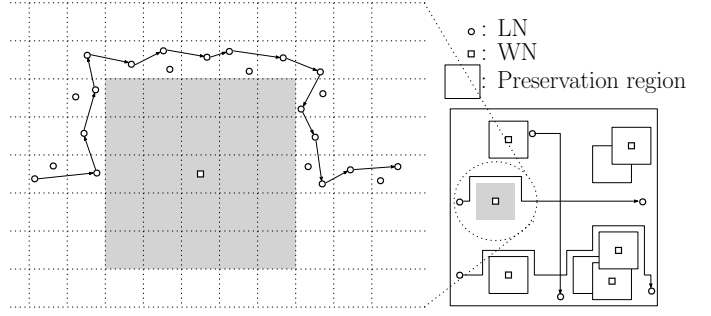}}
  \caption{Example of a MH scheme with detouring paths in the presence of the preservation regions. } \label{fig:detouringMHe}
\end{figure}

 For the achievability, since the covertness requirement constrains the interference power at each WN, the transmit powers of the LNs are also constrained, similarly to the AWGN case \cite{Wang:16}. Hence, we introduce a preservation region around each WN where the transmission from LNs is not permitted to increase the transmit power of the active LNs. In the presence of the covertness constraint and the preservation regions, we propose modified versions of three existing schemes: MH \cite{Gupta-Kumar:00}; HC \cite{Ozgur:07}; hybrid HC-MH \cite{regime}. In the proposed MH and hybrid schemes, because the preservation regions may block the direct paths of the S-D pairs, we modify these paths so that they take a detour around the preservation regions as in Fig. \ref{fig:detouringMHe}. In addition, to avoid the fact that the relaying burden is concentrated at some LNs thus causing throughput degradation, we distribute the detouring paths appropriately. In the proposed HC scheme, we control the transmit power and the scheduling of multiple-input multiple-output (MIMO) transmissions appropriately to satisfy the covertness constraint while retaining the performance of the original HC scheme. 

\begin{theorem}\label{thm:achv1}
Fix $\epsilon>0$. If $\frac{n^{(\frac{1}{2}-\frac{\kappa}{2})(\alpha-2)}}{\sqrt{l}}$ vanishes as $n$ tends to infinity, the following aggregate throughput $T(n,\kappa)$ is achievable under the covertness constraint w.h.p.:
\begin{align*}
T(n,\kappa)=
\begin{cases}
\Omega \left(n^{2-\frac{\alpha}{2}-\epsilon} \cdot \frac{n^{(\frac{1}{2}-\frac{\kappa}{2})(\alpha-2)}}{\sqrt{l}} \right), &\mbox{if } 2 < \alpha \leq 3 \\
\Omega \left( n^{\frac{1}{2}-\epsilon} \cdot \frac{n^{(\frac{1}{2}-\frac{\kappa}{2})(\alpha-2)}}{\sqrt{l}} \right), &\mbox{if } \alpha > 3
\end{cases}
\end{align*}
\end{theorem}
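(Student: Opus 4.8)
The plan is to reduce the theorem to two essentially independent ingredients: (i) a \emph{power budget} that the covertness constraint together with the preservation regions makes available to each active LN, and (ii) the throughput that the modified MH and HC schemes extract from that budget. I expect the factor $n^{(\frac12-\frac\kappa2)(\alpha-2)}/\sqrt l$ to appear uniformly as a per-link rate penalty, while the two ranges $2<\alpha\le 3$ and $\alpha>3$ differ only in the baseline they multiply ($n^{2-\alpha/2}$ for HC, $n^{1/2}$ for MH), exactly as in the non-covert setting \cite{regime}. Throughout I would use the shared secret key to randomize a Gaussian codebook, so that the induced output at each WN is approximately circularly symmetric Gaussian.

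First I would fix the preservation radius $r_{\mathrm p}$ and convert covertness into a power constraint. Choosing $r_{\mathrm p}=\Theta(n^{-\kappa/2-\epsilon'})$ for a small $\epsilon'>0$, the union of the $n_{\mathrm w}=\Theta(n^{\kappa})$ preservation disks has total area $\Theta(n^{\kappa}r_{\mathrm p}^2)=\Theta(n^{-2\epsilon'})\to 0$, so by a p.p.p.\ counting argument only a vanishing fraction of LNs fall inside them w.h.p.; these are declared in outage, which is admissible since the outage fraction is allowed to vanish (Section~\ref{subsec:goal}). For the power, with i.i.d.\ phases and Gaussian inputs the output at WN $i$ is approximately $\mathcal{CN}(0,N_0+\rho_i)$, where $\rho_i$ is the received interference power, and the Gaussian relative-entropy computation gives $D(\hat Q_{Z_i^l}\|Q_{N_i'}^{\times l})\approx l\rho_i^2/(2N_0^2)$, so \eqref{eqn:covertness} holds once $\rho_i=O(1/\sqrt l)$ at every WN. Since no active LN lies within $r_{\mathrm p}$ of a WN and the simultaneous transmitters have density $\lambda_{\mathrm a}=\Theta(1/d_{\mathrm{hop}}^2)$, bounding $\rho_i$ by the path-loss integral $\int_{r_{\mathrm p}}^{O(1)} r^{1-\alpha}\,dr\sim r_{\mathrm p}^{2-\alpha}/(\alpha-2)$ (finite because $\alpha>2$) yields an admissible per-LN power $P_{\mathrm t}=\Theta\!\big(r_{\mathrm p}^{\alpha-2}/(\sqrt l\,\lambda_{\mathrm a})\big)$. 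A union bound over the $\Theta(n^{\kappa})$ WNs, together with concentration of the aggregate interference, is what certifies covertness \emph{simultaneously} at all WNs w.h.p.

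Then I would insert this power into the modified MH scheme. Taking $d_{\mathrm{hop}}=\Theta(n^{-1/2})$, so that $\lambda_{\mathrm a}=\Theta(n)$, the received $\SINR$ on each hop becomes $\Theta\!\big(G P_{\mathrm t}n^{\alpha/2}/N_0\big)=\Theta(n^{\beta}/\sqrt l)$ with $\beta=(\tfrac12-\tfrac\kappa2)(\alpha-2)$, i.e.\ precisely the covert factor; since it vanishes the per-link rate is $\Theta(\SINR)$. Because $r_{\mathrm p}\gg d_{\mathrm{hop}}$, a straight source-destination line meets $\Theta(n^{\kappa/2-\epsilon'})$ preservation disks, each detour adds length $O(r_{\mathrm p})$, and the total added length per path is $O(n^{-2\epsilon'})$, negligible next to the $\Theta(1)$ source-destination distance; the remaining point is to \emph{spread} the detours over a strip of width $\gg r_{\mathrm p}$ so that no relay carries more than the nominal $\Theta(\sqrt n)$ load, which preserves the transport argument of \cite{Gupta-Kumar:00} and gives $T=\Omega(\sqrt n\cdot n^{\beta}/\sqrt l)$ up to the $n^{-\epsilon}$ slack absorbing $\epsilon'$ and the $\log n$ connectivity factors. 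For $2<\alpha\le 3$ I would instead run the modified HC scheme, imposing the same power budget by controlling the per-symbol power and scheduling of the distributed MIMO bursts; retaining the cluster hierarchy of \cite{Ozgur:07} then multiplies the HC baseline $n^{2-\alpha/2}$ by the same covert factor. Taking the better of the two schemes in each range, with the hybrid HC--MH scheme of \cite{regime} governing the comparison, yields the stated bound.

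The main obstacle I anticipate is the joint control in the second step: choosing $r_{\mathrm p}$ large enough to boost $P_{\mathrm t}$ to the claimed level, yet small enough to keep the outage vanishing, while certifying $\rho_i=O(1/\sqrt l)$ at \emph{all} $\Theta(n^{\kappa})$ WNs simultaneously and, in the routing, guaranteeing that the detours forced by the preservation disks do not concentrate the relaying burden. The HC covertness control, namely bounding the relative entropy generated by the long-range MIMO transmissions under an \emph{arbitrary} observation window of $l$ channel uses, is the second delicate point.
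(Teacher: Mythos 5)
Your overall architecture coincides with the paper's: preservation regions of radius $\Theta(n^{-\kappa/2-\epsilon'})$ with the nodes inside declared in outage, conversion of the covertness constraint \eqref{eqn:covertness} into a per-WN interference budget $\rho_i=O(1/\sqrt{l})$ via the Gaussian relative-entropy bound, the resulting per-hop $\SNR$ of order $n^{(\frac12-\frac\kappa2)(\alpha-2)}/\sqrt{l}$, detoured MH for $\alpha>3$, and a power/scheduling-controlled HC for $2<\alpha\le 3$. For the MH half your calculations match the paper's (Lemmas \ref{lem:detouringmethod}, \ref{lem:power}, and \ref{lem:MH}) up to polylogarithmic factors absorbed by $n^{-\epsilon}$, and the use of $\log(1+x)=\Theta(x)$ for the vanishing per-hop $\SNR$ is exactly the paper's step.

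However, the $2<\alpha\le 3$ half has a genuine gap. You assert that ``controlling the per-symbol power and scheduling of the distributed MIMO bursts'' preserves the HC baseline $n^{2-\alpha/2}$ multiplied by the covert factor, and you yourself flag this as a delicate point---but this is precisely the crux, and it is not a routine adaptation. As identified in Section \ref{subsubsec:bursty}, the relative entropy accumulates as $\frac12\sum_t (I_{i,t}/N_0)^2$, i.e., quadratically in the per-slot interference, so any bursty transmission (both the bursty-HC wrapper that \cite{Ozgur:07} uses to meet a stringent average-power constraint, and the inherently bursty MIMO phase~2 itself) inflates detectability by the burstiness factor, while naively cutting the MIMO power to compensate threatens the MIMO gain. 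The paper resolves this with three ingredients you do not supply: (i) a generalization of \cite[Lemma~3.1]{Ozgur:07} to average power constraints $P/n^{\gamma}$, $\gamma\ge 1$, achieving $\Theta(n^{\frac{\gamma}{2-b}+1-\gamma})$ and iterating to $\Theta(n^{2-\gamma-\epsilon})$ without burstiness (Lemma \ref{lem:general}); (ii) an even scheduling of the $M$ MIMO transmissions of each cluster across phase~2 so that no window of $l$ slots sees concentrated interference; and (iii) a proof that multiplying the MIMO power by $\sqrt{M/n}$ and re-optimizing the cluster size leaves the achievable exponent unchanged (Lemma \ref{lem:scalingdown}). Without these, the claimed $\Omega\bigl(n^{2-\frac{\alpha}{2}-\epsilon}\cdot n^{(\frac12-\frac\kappa2)(\alpha-2)}/\sqrt{l}\bigr)$ is unsupported. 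A secondary, smaller gap: your detouring plan treats preservation disks as isolated, but they can overlap or be arbitrarily close; the paper needs clusters, expanded preservation regions, and the guaranteed passable corridors of width greater than $b$ between them (Lemmas \ref{lem:num_pre} and \ref{lem:detouringmethod}) both to make the detours feasible and to account for the outage of pairs whose paths are completely blocked---your outage bookkeeping covers only the nodes lying inside the disks.
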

The proposed HC and MH schemes are used for $2<\alpha \leq 3$ and $\alpha > 3$, respectively. 

\begin{theorem}\label{thm:achv2}
Fix $\epsilon>0$. If $\frac{n^{(\frac{1}{2}-\frac{\kappa}{2})(\alpha-2)}}{\sqrt{l}}$ does not vanish as $n$ tends to infinity,\footnote{A positive sequence $g(n)$ does not vanish as $n$ tends to infinity if $\liminf_{n\to\infty}g(n)>0$.} the following aggregate throughput $T(n,\kappa)$ is achievable under the covertness constraint w.h.p.:
\begin{align*}
T(n,\kappa)=
\begin{cases}
\Omega \left(n^{2-\frac{\alpha}{2}-\epsilon} \cdot \frac{n^{(\frac{1}{2}-\frac{\kappa}{2})(\alpha-2)}}{\sqrt{l}} \right), &\mbox{if } 2 < \alpha \leq 3 \\
\Omega \left( n^{\frac{1}{2}-\epsilon} \left( \frac{n^{(\frac{1}{2}-\frac{\kappa}{2})(\alpha-2)}}{\sqrt{l}}\right)^{\frac{1}{\alpha-2}} \right), &\mbox{if } \alpha > 3
\end{cases}
\end{align*}
\end{theorem}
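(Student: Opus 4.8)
The plan is to establish both regimes by the same high-level strategy used for Theorem \ref{thm:achv1}, but to modify the power-allocation and scheduling bookkeeping to account for the fact that the quantity $\eta(n) := \frac{n^{(\frac{1}{2}-\frac{\kappa}{2})(\alpha-2)}}{\sqrt{l}}$ no longer vanishes. The key structural fact driving everything is the covertness constraint \eqref{eqn:covertness}: at each WN the aggregate received power over the $l$-channel window must keep the relative entropy below $\delta$, which caps the per-symbol interference that active LNs may contribute. First I would re-derive, from \eqref{eqn:covertness} together with the preservation-region geometry, the maximum average transmit power $P_{\mathrm{cov}}$ that an active LN outside all preservation regions may use. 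The size of each preservation region (a disc of some radius $r_{\mathrm{p}}$ around each WN) must be chosen so that an LN just outside it contributes power that decays like $r_{\mathrm{p}}^{-\alpha}$; summing the contributions over all $n_{\mathrm{w}} \approx n^{\kappa}$ WNs and imposing the $\sqrt{l}$-type square-root-law slack yields the scaling of $P_{\mathrm{cov}}$ in terms of $n$, $\kappa$, $l$, and $\alpha$. I expect $P_{\mathrm{cov}}$ to scale precisely so that the effective SNR gain over the baseline (no-covertness) case is captured by the factor $\eta(n)$.

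For the regime $2 < \alpha \le 3$ I would use the modified HC scheme. Here the claimed throughput $\Omega(n^{2-\alpha/2-\epsilon}\,\eta(n))$ is \emph{identical} to that of Theorem \ref{thm:achv1}, so the argument should carry over essentially verbatim: the distributed-MIMO long-range transmissions in HC already spread transmit power across many cooperating nodes, and I would verify that the per-symbol power control and the scheduling of the MIMO phases still satisfy \eqref{eqn:covertness} at every WN while delivering the same burst rate. The point to check carefully is that the preservation regions, which remove an $O(n^{\kappa} r_{\mathrm{p}}^2)$ fraction of area, keep the outage fraction $\epsilon(n,\kappa)$ vanishing and do not disrupt the hierarchical clustering; since $\kappa < 1$ and $r_{\mathrm{p}}$ is small, this area is $o(1)$, so the HC hierarchy is preserved w.h.p.

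The genuinely new regime is $\alpha > 3$, where the MH scheme is used and the throughput degrades to $\Omega\!\left(n^{1/2-\epsilon}\,\eta(n)^{\frac{1}{\alpha-2}}\right)$, with the exponent $\frac{1}{\alpha-2}$ replacing the linear dependence of Theorem \ref{thm:achv1}. I would trace this back to the hop-length optimization in the detoured MH scheme: with elevated power budget $P_{\mathrm{cov}}$, one can increase the per-hop distance, and the number of hops across the network scales inversely with hop length. When $\eta(n)$ does not vanish, the power budget is large enough that the binding constraint is no longer the covert power cap per se but the reliability (minimum-SNR) requirement over a single hop of the chosen length; balancing the per-hop rate against the number of hops, and feeding in the $r_{\mathrm{p}}^{-\alpha}$ power law, produces the $\frac{1}{\alpha-2}$ exponent. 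The main obstacle I anticipate is exactly this hop-length/power trade-off: I must show that the optimal hop distance can be realized by the detouring construction of Fig.~\ref{fig:detouringMHe} without concentrating relay load, i.e., that distributing the detours evenly keeps the per-node relaying burden balanced so that no node becomes a bottleneck. Establishing that the detours add only a constant-factor overhead to path lengths (so the throughput exponent is unchanged up to the $n^{-\epsilon}$ slack) and that the resulting traffic is balanced w.h.p.\ via a standard Poisson-concentration / Chernoff argument is where the real work lies.
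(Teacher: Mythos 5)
Your treatment of the regime $2<\alpha\le 3$ is consistent with the paper: there the claimed scaling is identical to Theorem \ref{thm:achv1}, and the paper indeed reuses the modified HC scheme (Lemma \ref{lem:modifiedHC}), whose power control and MIMO scheduling are designed so that the covertness constraint \eqref{eqn:covertness} holds whether or not $\eta(n) := \frac{n^{(\frac{1}{2}-\frac{\kappa}{2})(\alpha-2)}}{\sqrt{l}}$ vanishes. The gap is in the regime $\alpha>3$. You propose to obtain the factor $\eta(n)^{\frac{1}{\alpha-2}}$ from a hop-length optimization inside the \emph{pure} detoured MH scheme, but no choice of hop length can produce a polynomial-in-$\eta$ gain there. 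The per-hop link is point-to-point, so its rate is $\log(1+\SNR_{\mathrm{hop}})$; with the covert power cap, the SNR at nearest-neighbor distance is $\Theta(\eta)$ and it only \emph{decreases} if you lengthen hops (received power falls like $d^{-\alpha}$ while the power budget is fixed by the WNs, not by reliability). Hence pure MH saturates at $T(n,\kappa)=\Theta\bigl(n^{\frac{1}{2}-\epsilon}\log(1+\eta)\bigr)$ --- this is exactly the second case of the paper's Lemma \ref{lem:MH}, $\Omega(n^{\frac{1}{2}-\epsilon})$ --- which is strictly short of the claimed $n^{\frac{1}{2}-\epsilon}\,\eta^{\frac{1}{\alpha-2}}$ whenever $\eta$ grows polynomially in $n$ (e.g., large $\alpha$ or small $l$). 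The paper itself flags this: ``the proposed MH scheme cannot efficiently utilize the non-vanishing received SNR in a hop.''

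What the paper actually does for $\alpha>3$ is the \emph{detoured hybrid HC-MH} scheme (Section \ref{subsec:hybrid}, Lemma \ref{lem:originalhybrid}): partition the network into cells of area $\frac{M}{n}$, run the modified HC scheme locally inside each cell, and perform the multi-hop relaying \emph{between adjacent cells as distributed MIMO transmissions} of the $\Theta(M)$ nodes per cell, with the detouring method applied at the cell level to avoid preservation regions. The MIMO gain is linear in $M$ (and linear in transmit power), so a cell-to-cell hop carries $\Theta(M)$ degrees of freedom rather than a single logarithmic-rate stream; balancing the local HC overhead against the global MH burden with $\eta$ playing the role of $P_{\mathrm{short}}$ gives the optimal cluster size $M=\eta^{\frac{1}{\alpha/2-1}}=\eta^{\frac{2}{\alpha-2}}$ and hence $T(n,\kappa)=\Omega\bigl(n^{\frac{1}{2}-\epsilon}\,\eta^{\frac{1}{\alpha-2}}\bigr)$. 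This cooperative-MIMO ingredient, which converts excess SNR into spatial multiplexing instead of feeding it into a logarithm, is the missing idea in your proposal; your traffic-balancing and Poisson-concentration remarks about the detours remain valid, but they attach to cell-level MIMO hops, not node-level hops.
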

The proposed HC and hybrid schemes are used for $2<\alpha \leq 3$ and $\alpha > 3$, respectively.

Roughly speaking, in our three proposed schemes, the covertness constraint and the preservation regions result in the average transmit power constraint over an arbitrary window of $l$ channel uses at each LN scaling  with the order of $\frac{n^{(\frac{1}{2}-\frac{\kappa}{2})(\alpha-2)}}{n^{\alpha/2}\sqrt{l}}$. Then, the received signal-to-noise ratio ($\SNR$) between the typical nearest neighbor LNs, i.e., between two LNs distance of $\sqrt{1/n}$, is given as $\frac{n^{(\frac{1}{2}-\frac{\kappa}{2})(\alpha-2)}}{\sqrt{l}}$. The $\sqrt{l}$ in the denominator demonstrates the similarity to the square-root law in covert communication \cite{Wang:16}. In addition, because large $\alpha$ results in more attenuated interference signal at each WN, the LNs can retain higher transmit power when $\alpha$ is large. This short-range $\SNR$ term and the path loss exponent $\alpha$ are the two parameters that constitute the condition of the network and have to be carefully considered to select an appropriate high-throughput scheme (amongst the three proposed ones) for a certain network model. 

When the term $\frac{n^{(\frac{1}{2}-\frac{\kappa}{2})(\alpha-2)}}{\sqrt{l}}$ is vanishing, it plays the role of a linear scaling factor of the throughput in the proposed MH and HC schemes. This is because in the proposed MH scheme, the rate in a hop is propotional to the received $\SNR$, which vanishes (by assumption). In the proposed HC scheme, the MIMO gain is propotional to the transmit power regardless of whether the short-range $\SNR$ term is vanishing.
This yields the throughput scalings given by the product of two terms, where one is the same as the throughput scalings of the extended network \cite{Ozgur:07} and the other term, namely $\frac{n^{(\frac{1}{2}-\frac{\kappa}{2})(\alpha-2)}}{\sqrt{l}}$ is a linear scaling factor. When the term $\frac{n^{(\frac{1}{2}-\frac{\kappa}{2})(\alpha-2)}}{\sqrt{l}}$ is non-vanishing, because the proposed MH scheme cannot efficiently utilize the non-vanishing received $\SNR$ in a hop, the hybrid scheme that can efficiently utilize the received $\SNR$ outperforms the proposed MH scheme.

The following theorems present upper bounds on the aggregate throughput scalings under the assumption that every active LN uses the same average transmit power over an arbitrary window of $l$ channel uses. 

\begin{theorem}\label{thm:conv1}
Fix $\epsilon>0$. When every active LN uses the same average transmit power over an arbitrary window of $l$ channel uses and $\frac{n^{(\frac{1}{2}-\frac{\kappa}{2})(\alpha-2)}}{\sqrt{l}}$ vanishes as $n$ tends to infinity, the aggregate throughput $T(n,\kappa)$ under the covertness constraint is upper bounded as
\begin{align*}
T(n,\kappa) =
\begin{cases}
O \left(n^{2-\frac{\alpha}{2}+\epsilon} \cdot \frac{n^{(\frac{1}{2}-\frac{\kappa}{2})(\alpha-2)}}{\sqrt{l}} \right), &\mbox{if } 2 < \alpha \leq 3 \\
O \left( n^{\frac{1}{2}+\epsilon} \cdot \frac{n^{(\frac{1}{2}-\frac{\kappa}{2})(\alpha-2)}}{\sqrt{l}} \right), &\mbox{if } \alpha > 3
\end{cases}
\end{align*}
\end{theorem}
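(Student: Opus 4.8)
The plan is to reduce the covertness constraint to an effective per-LN power budget and then feed that budget into a cutset (MIMO) bound; the hypothesis that $\frac{n^{(\frac{1}{2}-\frac{\kappa}{2})(\alpha-2)}}{\sqrt{l}}$ vanishes is precisely what collapses the second step to the non-covert extended-network scaling multiplied by this vanishing factor. Throughout I would work on the high-probability event of the Remark (equivalently \cite[Lemma 1]{Jeon:11}), on which all node counts in balls and strips concentrate around their p.p.p.\ means. For the first step I would convert \eqref{eqn:covertness} into a cap on the average received power at each WN. Since the reference $Q_{N_i'}^{\times l}$ is a product, relative entropy is super-additive over coordinates, $D(\hat{Q}_{Z_i^l}\|Q_{N_i'}^{\times l})\ge\sum_{m=1}^{l}D(\hat{Q}_{Z_i[m]}\|Q_{N_i'})$, and for each coordinate the maximum-entropy bound gives $D(\hat{Q}_{Z_i[m]}\|Q_{N_i'})\ge\rho_{i}[m]-\log(1+\rho_{i}[m])$, where $\rho_i[m]$ is the received interference-to-noise ratio at WN $i$ at time $m$. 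As $x-\log(1+x)$ is convex, Jensen over the window gives $\delta\ge l\,(\bar\rho_i-\log(1+\bar\rho_i))$ for the window-average $\bar\rho_i$, hence $\bar\rho_i=O(1/\sqrt{l})$, i.e.\ $\sum_{k\ \mathrm{active}} G\,\bar{P}\,d_{ik}^{-\alpha}=O(N_0/\sqrt{l})$ at every WN $i$, where $\bar P$ denotes the common average power.

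Feeding this received-power cap into the geometry, as in the preservation-region argument of \cite{Jeon:11}, yields the power bound. With common power $\bar P$, the aggregate cap forces a silent disk of radius $r_{\mathrm p}\gtrsim (n\bar P\sqrt{l})^{1/(\alpha-2)}$ around each WN (the integral $\int_{r_{\mathrm p}} r^{1-\alpha}\,dr$ is dominated by its lower limit for $\alpha>2$), and a rearrangement argument shows that any silencing meeting the cap must remove at least $\Theta(n\,r_{\mathrm p}^2)$ LNs per WN. Summing over the $\Theta(n^{\kappa})$ (essentially disjoint) disks, the number of silenced LNs is $\Omega(n^{1+\kappa}r_{\mathrm p}^2)$, so the vanishing-outage hypothesis forces $n^{1+\kappa}r_{\mathrm p}^2=o(n)$, i.e.\ $r_{\mathrm p}=o(n^{-\kappa/2})$ (which also self-justifies the disjointness). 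Combining the two estimates gives $\bar P=O\!\big(n^{(\frac12-\frac\kappa2)(\alpha-2)}/(n^{\alpha/2}\sqrt l)\big)$, equivalently the short-range $\SNR$ claim $G\bar P n^{\alpha/2}/N_0 = O(n^{(\frac12-\frac\kappa2)(\alpha-2)}/\sqrt l)$.

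For the second step I would cut the square by a vertical line; under the random pairing $\Theta(n)$ S-D pairs cross w.h.p., so $T(n,\kappa)=O(C_{\mathrm{cut}})$, where $C_{\mathrm{cut}}=\max_{\bQ}\log\det(\bI+N_0^{-1}\bH\bQ\bH^\dagger)$ is the MIMO capacity from the left LNs $\cS$ to the right LNs $\cS^{c}$ under the per-node budget $\bar P$. Here the vanishing-$\SNR$ hypothesis is decisive: by the first step the total received $\SNR$ at every node is $o(1)$, so, up to the usual degrees-of-freedom/beamforming correction, the capacity is governed by the total cross-cut received power $N_0^{-1}G\bar P\sum_{i\in\cS^c}\sum_{k\in\cS}d_{ik}^{-\alpha}$. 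This cross-cut sum concentrates as $\Theta(n^2)$ for $2<\alpha<3$ (the cut integral converges) and as $\Theta(n^{(\alpha+1)/2})$ for $\alpha>3$ (it is dominated by the $\Theta(\sqrt n)$ near-cut nodes, with spacing $n^{-1/2}$ as cutoff), with an extra $\log n$ at $\alpha=3$. Multiplying by $\bar P$ yields $O(n^{1-\kappa(\alpha-2)/2}/\sqrt l)$ and $O(n^{1/2+(\frac12-\frac\kappa2)(\alpha-2)}/\sqrt l)$ respectively, which are exactly the two claimed bounds; the concentration slack and the $\log n$ factors are absorbed into $n^{\epsilon}$.

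The \textbf{main obstacle} is the first step---coupling the information-theoretic covertness bound to the spatial outage budget to pin down $\bar P$---and I expect the KL-to-received-power reduction together with the rearrangement/concentration argument for $r_{\mathrm p}$ to require the most care, since this is the genuinely new content relative to the non-covert converse. In the second step the delicate point, which I would import from the extended-network converses \cite{Ozgur:07,regime}, is that transmitter cooperation can concentrate power into a few channel modes, so one cannot simply linearize $\log\det$ via $\trace$; one must instead control the singular-value profile of $\bH$ to show the coherent gain does not change the scaling. Because every per-mode $\SNR$ is $o(1)$ in our regime, these modes still contribute linearly, and the extended-network scaling times the factor $\frac{n^{(\frac12-\frac\kappa2)(\alpha-2)}}{\sqrt l}$ survives.
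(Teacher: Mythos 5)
Your high-level architecture is the same as the paper's: reduce the covertness constraint \eqref{eqn:covertness} to a per-warden received-power cap of order $\sqrt{\delta/l}$, combine that cap with the equal-power assumption and the vanishing-outage requirement to pin down $\bar P = O\bigl(n^{(\frac12-\frac\kappa2)(\alpha-2)}n^{-\alpha/2}/\sqrt{l}\,\bigr)$, and feed $\bar P$ into a vertical-cut MIMO bound. Your first two steps are sound and recover \eqref{eqn:constraint} and \eqref{eqn:uppower}; in fact your KL reduction (superadditivity of relative entropy against a product reference, complex-Gaussian maximum entropy per coordinate, Jensen) is a cleaner route than the paper's magnitude/exponential-distribution argument in Section \ref{subsec:necessary}, though, like the paper, it implicitly needs the observation that powers add at the warden because the inputs are independent of the channel phases. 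Your silent-disk/rearrangement argument for $\bar P$ is at the same (somewhat heuristic) level of rigor as the paper's equal-size preservation-region argument.

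The genuine gap is in the cutset step, and it is concentrated in the case $\alpha>3$. You justify replacing $\log\det$ by the linearized cross-cut received power with the claim that ``by the first step the total received $\SNR$ at every node is $o(1)$.'' This does not follow: step one caps the power received by \emph{wardens}, not by LNs, and the theorem's hypothesis only forces the $\SNR$ at the typical spacing $n^{-1/2}$ to vanish. W.h.p.\ the closest cross-cut LN pair is at distance $\Theta(n^{-2/3})$ (the expected number of cross-cut pairs within distance $r$ scales as $n^2r^3$), so its pairwise $\SNR$ is $\Theta\bigl(n^{\alpha/6}\cdot n^{(\frac12-\frac\kappa2)(\alpha-2)}/\sqrt l\,\bigr)$, which is polynomially large for a wide range of $l$ consistent with the hypothesis (e.g.\ $\sqrt l = n^{(\frac12-\frac\kappa2)(\alpha-2)}\log n$). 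Worse, your concentration claim for the geometric sum fails outright: this single pair contributes $n^{2\alpha/3}$ to $\sum_{i,k}d_{ik}^{-\alpha}$, and $n^{2\alpha/3}\gg n^{(\alpha+1)/2}$ precisely when $\alpha>3$; so the bound your chain actually delivers for $\alpha>3$ is at best $O(\bar P\, n^{2\alpha/3})$, which is strictly weaker than the claimed $O\bigl(n^{\frac12+\epsilon}\cdot n^{(\frac12-\frac\kappa2)(\alpha-2)}/\sqrt l\,\bigr)$. (For $2<\alpha\le 3$ the near pairs are subdominant and your estimate survives.) The repair is exactly the paper's $D_1$/$D_2$ decomposition in Section \ref{subsec:cutset}: split off the strip of width $n^{-1/2}$ adjacent to the cut, invoke \cite[Appendix I]{regime} to assume it is LN-free---equivalently, bound any nodes there by the MISO \emph{log}-capacities \eqref{eqn:MISOs1}, under which close pairs cost only $O(\log n)$ each---and apply the trace bound \eqref{eqn:trace}, together with \cite[Lemma 5.2]{Ozgur:07} to absorb the beamforming gain into $n^{\epsilon}$, only to the far set, where every cross-cut distance is at least $n^{-1/2}$. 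Your closing remark about controlling the singular-value profile addresses the beamforming issue but not this near-cut problem, and the reason you give for the linearization being safe rests on the same unsupported $o(1)$-$\SNR$ claim.
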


\begin{theorem}\label{thm:conv2}
Fix $\epsilon>0$. When every active LN uses the same average transmit power over an arbitrary window of $l$ channel uses and $\frac{n^{(\frac{1}{2}-\frac{\kappa}{2})(\alpha-2)}}{\sqrt{l}}$ is does not vanish as $n$ tends to infinity, the aggregate throughput $T(n,\kappa)$ under the covertness constraint is upper bounded as
\begin{align*}
T(n,\kappa)=
\begin{cases}
O \left(n^{2-\frac{\alpha}{2}+\epsilon} \cdot \frac{n^{(\frac{1}{2}-\frac{\kappa}{2})(\alpha-2)}}{\sqrt{l}} \right), &\mbox{if } 2 < \alpha \leq 3 \\
O \left( n^{\frac{1}{2}+\epsilon} \left( \frac{n^{(\frac{1}{2}-\frac{\kappa}{2})(\alpha-2)}}{\sqrt{l}}\right)^{\frac{1}{\alpha-2}} \right), &\mbox{if } \alpha > 3
\end{cases}
\end{align*}
\end{theorem}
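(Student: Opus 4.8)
The plan is to combine the cutset (max-flow/min-cut) bound with an upper bound on the per-node transmit power that is forced by the covertness constraint, paralleling the converse of \cite{Ozgur:07,regime} but with the power level dictated by \eqref{eqn:covertness}. The proof follows the same two-step structure as Theorem \ref{thm:conv1}; the \emph{only} place where the non-vanishing hypothesis on $\frac{n^{(\frac{1}{2}-\frac{\kappa}{2})(\alpha-2)}}{\sqrt{l}}$ changes the outcome is in the high-SNR evaluation of the cutset integral for $\alpha>3$. Throughout I write $\rho:=\frac{n^{(\frac{1}{2}-\frac{\kappa}{2})(\alpha-2)}}{\sqrt{l}}$ for the typical nearest-neighbor SNR, and I will reduce the theorem to a single integral whose value changes qualitatively once $\rho$ stops vanishing.

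The first step is to show that, under the equal-power assumption, the common average power $\bar{P}$ over any window of $l$ channel uses satisfies $\frac{\bar{P}G}{N_0}=O\!\left(\rho/n^{\alpha/2}\right)$. At a WN $i$, the phases of $H'_{ik}$ in \eqref{eqn:H_su} are independent and uniform, so there is no coherent combining and the average excess received power per use is $S_i=\sum_{k\,\mathrm{active}}\bar{P}\,|H'_{ik}|^2$. A max-entropy lower bound on the relative entropy (the worst case being Gaussian signalling, as in the AWGN covert result \cite{Bash:13,Wang:16}) gives $D\big(\hat{Q}_{Z_i^l}\,\|\,Q_{N_i'}^{\times l}\big)\ge \tfrac{S_i}{N_0}-\log(1+\tfrac{S_i}{N_0})$ per use, hence $\gtrsim l\,(S_i/N_0)^2$ over the window, so \eqref{eqn:covertness} forces $S_i=O(N_0/\sqrt{l})$ at every WN. To convert this into a bound on $\bar{P}$ I use the geometry: since the outage fraction vanishes, at most $o(n)$ LNs may be silenced, while the WNs have density $n^{\kappa}$; keeping every active LN farther than $r_0$ from all WNs silences $\Theta(n\cdot n^{\kappa}r_0^2)$ LNs, which is $\Theta(n)$ once $r_0\asymp n^{-\kappa/2}$. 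Thus a constant fraction of WNs has active LNs within $O(n^{-\kappa/2})$, and for such a WN $S_i\asymp \bar{P}G\,n\,(n^{-\kappa/2})^{2-\alpha}=\bar{P}G\,n^{1+\kappa(\alpha-2)/2}$ (the shell near the preservation radius dominates for $\alpha>2$). Combining with $S_i=O(N_0/\sqrt{l})$ yields $\frac{\bar{P}G}{N_0}=O(\rho/n^{\alpha/2})$, so the SNR between two LNs at the typical distance $n^{-1/2}$ is $O(\rho)$.

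The second step is the cutset bound. I partition the unit square by a vertical line and bound the aggregate throughput by the MIMO mutual information across the cut, since $\Theta(n)$ source--destination pairs must cross it. Rescaling to the extended picture (distances multiplied by $\sqrt{n}$, per-node power $\asymp\rho$), a standard calculation shows that a receiver at distance $\tilde{d}\in[1,\sqrt{n}]$ from the cut sees total SNR $\mathrm{SNR}(\tilde{d})\asymp\rho\,\tilde{d}^{\,2-\alpha}$ from the left array, and that the bound reduces to $T(n,\kappa)=O\!\big(n^{\epsilon}\sqrt{n}\int_{1}^{\sqrt{n}}\log(1+\rho\,\tilde{d}^{\,2-\alpha})\,\mathrm{d}\tilde{d}\big)$, the $n^{\epsilon}$ absorbing the p.p.p.\ fluctuations and the cooperation gain. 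For $2<\alpha\le 3$ the integral is dominated by far nodes, is essentially power-limited, and gives $O(n^{2-\alpha/2}\rho)$ whether or not $\rho$ vanishes, reproducing the first case. For $\alpha>3$ with non-vanishing $\rho$, the integrand is in its logarithmic (high-SNR) regime for $\tilde{d}<d^{\ast}:=\rho^{1/(\alpha-2)}$ and in its linear (low-SNR) regime for $\tilde{d}>d^{\ast}$; splitting at $d^{\ast}$, both pieces evaluate to $\Theta(\rho^{1/(\alpha-2)})$, producing the claimed $O(n^{1/2+\epsilon}\rho^{1/(\alpha-2)})$. This is exactly where the hypothesis matters: when $\rho$ vanishes the near-cut region is absent, $\log(1+\rho\tilde d^{2-\alpha})\approx\rho\tilde d^{2-\alpha}$ everywhere, and one recovers the linear factor $\rho$ of Theorem \ref{thm:conv1}.

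The main obstacle is the rigorous MIMO step. Bounding the cutset capacity by a sum of per-receiver terms is delicate because full transmit cooperation on the left can beamform and, a priori, inflate each $\mathrm{SNR}(\tilde d)$ by a factor as large as $|L|=\Theta(n)$; the crux is to show, as in \cite{Ozgur:07,regime}, that this cooperation gain costs only an $n^{\epsilon}$ factor once the degrees of freedom near the cut and the total power transferred across it are accounted for separately, so that it is harmlessly absorbed into the exponent. A secondary difficulty is making the first step uniform over arbitrary, adversarially chosen outage sets -- establishing that no scheme can evade the WNs -- for which I would invoke the fact that the preservation radius is capped at $O(n^{-\kappa/2})$ w.h.p.\ by the vanishing-outage requirement together with the WN density $n^{\kappa}$.
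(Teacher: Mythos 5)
Your proposal is correct and follows essentially the same route as the paper's proof: the covertness constraint is turned into a per-WN received-power cap of order $N_0\sqrt{\delta/l}$ (the paper does this via convexity of relative entropy plus a max-entropy bound, fixing the per-use looseness in your sketch), the vanishing-outage geometry with exclusion radius $\Theta(n^{-\kappa/2})$ then yields the same power bound $P_{\mathrm{CB}}$, and your cutset integral split at $d^{*}=\rho^{1/(\alpha-2)}$ is exactly the paper's decomposition into a rectangle $D_1$ of optimized width $W=P_{\mathrm{CB}}'^{1/(\alpha-2)}$ (bounded by MISO/Hadamard, where the beamforming gain is absorbed inside the log) and the far region $D_2$ (bounded by total power transfer). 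The ``main obstacle'' you identify is resolved in the paper precisely as you indicate, by invoking \cite[Lemma 5.2]{Ozgur:07} and the power-transfer estimate of \cite{regime} for the far region instead of a per-receiver logarithmic bound.
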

In the converse proof, the preservation region is also introduced, and a cutset bound technique similar to \cite[Section III]{regime} is used to derive Theorems \ref{thm:conv1} and \ref{thm:conv2}. 

\begin{figure}
 \centering
  {
  \includegraphics[width=85mm]{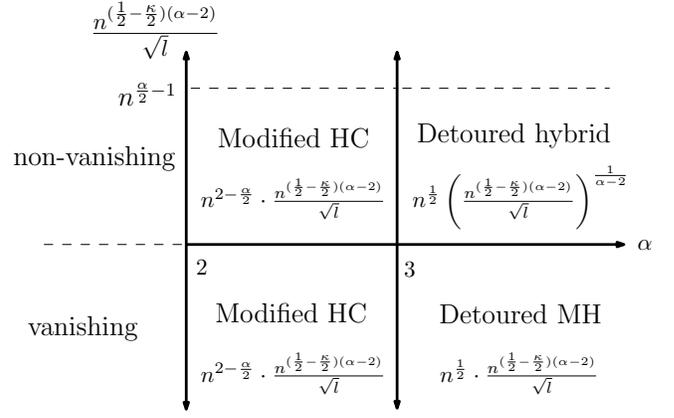}}
  \caption{An illustration of the four operating regimes depending on whether $\frac{n^{(\frac{1}{2}-\frac{\kappa}{2})(\alpha-2)}}{\sqrt{l}}$ vanishes and the range of $\alpha$. In each regime, the optimal scheme (under the assumption on the transmit power of LNs in the converse) and its optimal throughput scaling is shown. } \label{fig:theorems}
\end{figure}

The networks can be partitioned into four operating regimes according to the two the following parameters: the short-range $\SNR$ $\frac{n^{(\frac{1}{2}-\frac{\kappa}{2})(\alpha-2)}}{\sqrt{l}}$ and the path loss exponent $\alpha$ as shown in Fig. \ref{fig:theorems}. Under the assumption that every active LN has the same average transmit power constraint over an arbitrary window of $l$ channel uses, the optimal scheme and its optimal throughput scaling in each regime are illustrated in Fig. \ref{fig:theorems}. 

The partitioning of the operating regimes of our network is similar but slightly different from the operating regimes in \cite{regime}. The term $\frac{n^{(\frac{1}{2}-\frac{\kappa}{2})(\alpha-2)}}{\sqrt{l}}$ plays a role similarly to that of the received $\SNR$ in the typical nearest neighbor distance in \cite{regime}; the latter being denoted in \cite{regime} as $\SNR_s$. 
Since the covertness constraint makes it impossible to retain sufficient long-range $\SNR$ regardless of the values of $\kappa$ and $l$, there is no regime in our network that corresponds to Regime I (retaining sufficient long-range $\SNR$) in \cite{regime} in which the HC scheme is the optimal scheme. 
In our case with $2<\alpha \leq 3$ is similar to Regime II (insufficient long-range $\SNR$ but path-loss is weak, i.e., $2<\alpha \leq 3$) in \cite{regime} in which the bursty HC scheme is optimal. 
Regimes III (path-loss is strong, i.e., $\alpha > 3$, but $\SNR_s$ is sufficient and long-range $\SNR$ is not sufficient) in \cite{regime} in which the hybrid scheme is optimal corresponds to our setting with $\alpha > 3$ and non-vanishing $\frac{n^{(\frac{1}{2}-\frac{\kappa}{2})(\alpha-2)}}{\sqrt{l}}$.  
Our setting with $\alpha > 3$ and vanishing $\frac{n^{(\frac{1}{2}-\frac{\kappa}{2})(\alpha-2)}}{\sqrt{l}}$ is analogous to Regime IV ($\alpha > 3$ and $\SNR_s \ll 0$) in \cite{regime} in which the MH scheme is optimal.

\section{Achievability}\label{sec:achiev}
In this section, we describe the nature of the preservation regions in Section \ref{subsec:preserv} and also the three proposed schemes, namely detoured MH, modified HC, and detoured hybrid schemes in Sections \ref{subsec:MH}, \ref{subsec:HC}, and \ref{subsec:hybrid}, respectively.

\subsection{Preservation Region}\label{subsec:preserv}
In the presence of the covertness constraint, the power of the interference at each WN is limited and thus the transmit powers of the LNs are also constrained. Hence, we introduce a preservation region \cite{Jeon:11}, which is shaped as a square around each WN in which transmission from LNs is not permitted. Each preservation region ensures that the interference to each WN from the nearby LNs is not too strong. This makes it possible to increase the transmit powers of the LNs outside the preservation regions. Intuitively, when the received $\SNR$ between the LNs is low compared to the noise level, increasing the transmit power of the LNs leads to a significant increase in the throughput. From this intuition, we set the width of the preservation region $b$ to be\footnote{For the scenario in which the outage fraction vanishes, $\gamma > \kappa/2$ is necessary.}
\begin{align}
b = \Theta(n^{-\gamma}\sqrt{\log n}),~ \gamma \leq \frac{1}{2},
\end{align}
where a preservation region with $b$ such that $\gamma = \frac{1}{2}$ corresponds to a square of nine cells each with size $\frac{2\log n}{n}$ that contains a WN in the center cell, and $b$ with $\gamma < \frac{1}{2}$ is an integer multiple of $\sqrt{\frac{2\log n}{n}}$. The preservation region exponent $\gamma$ has to be chosen in a way that maximizes the throughput while ensuring that the resulting outage fraction vanishes as $n$ tends to infinity. The proofs for vanishing outage fraction of the three proposed schemes are presented in Appendix \ref{app:outage}.

\subsection{Detoured MH Scheme}\label{subsec:MH}
We propose a modified MH scheme, termed {\em detoured} MH scheme that avoids the preservation regions by taking a detour around each of the preservation regions. The detoured MH scheme follows the structure of the original MH scheme described as follows. 
\subsubsection{Original MH Scheme}\label{subsubsec:originalMH}
The summary of the original MH scheme \cite{Gupta-Kumar:00, Jeon:11} is given as follows: 
\begin{itemize}
\item Divide the network into square cells with area of $\frac{2\log n}{n}$.
\item Every LN uses the same power for transmission.

\begin{figure}
 \centering
  {
  \includegraphics[width=45mm]{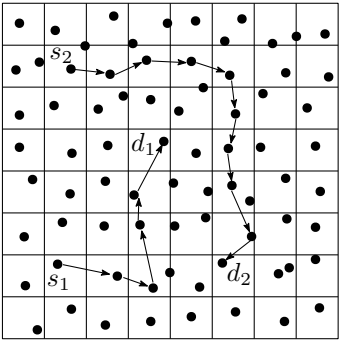}}
  \caption{Example of the original MH scheme. Each data path consists of a horizontal data path and a vertical data path.} \label{fig:originalMH}
\end{figure}

\item The data packet of each LSN is delivered to the LDN by sequentially transmitting to a LN in the adjacent cells first along the horizontal data path (HDP) and then along the vertical data path (VDP) as shown in Fig. \ref{fig:originalMH}.
\item Use a 9-TDMA scheme for the communication between adjacent cells. This means that for each group of a square of nine cells, only a single cell in the same position in each group is active for each time slot. 
\end{itemize}

The following lemma presents lower and upper bounds on the number of LNs in each cell with area of $\frac{2 \log n}{n}$. This lemma is used in some proofs throughout this paper.
\begin{lemma}\label{lem:num_cell}
When the area of a cell is $\frac{2\log n}{n}$, there is at least one LN and are at most $4\log n$ LNs in each cell w.h.p.
\end{lemma}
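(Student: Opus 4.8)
The plan is to establish both bounds via the standard Poisson tail/Chernoff argument combined with a union bound over all cells. Since the LNs are distributed according to a Poisson point process of density $n$ over a unit-area square, the number of LNs in any fixed cell of area $a=\frac{2\log n}{n}$ is a Poisson random variable with mean $\mu = na = 2\log n$. There are $\frac{n}{2\log n}=\Theta\!\left(\frac{n}{\log n}\right)$ such cells tiling the network. The key is that each individual tail probability must decay faster than the reciprocal of the number of cells, so that after a union bound the total failure probability still tends to zero.

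First I would handle the lower bound. Let $M$ denote the number of LNs in a given cell, so $M\sim\mathrm{Poisson}(2\log n)$. The probability that the cell is empty is $\P(M=0)=e^{-2\log n}=n^{-2}$. Taking a union bound over the $\Theta(n/\log n)$ cells gives a total probability of $O\!\left(\frac{n^{-2}\cdot n}{\log n}\right)=O\!\left(\frac{1}{n\log n}\right)$, which vanishes as $n\to\infty$. Hence w.h.p.\ every cell contains at least one LN.

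Next I would handle the upper bound using a Chernoff bound for the upper tail of a Poisson variable. For $t>\mu$ one has the standard estimate $\P(M\geq t)\leq e^{-\mu}\frac{(e\mu)^{t}}{t^{t}}=\exp\!\left(-\mu + t - t\log\frac{t}{\mu}\right)$. Setting $t=4\log n$ and $\mu=2\log n$ gives an exponent $-2\log n + 4\log n - 4\log n\cdot\log 2 = (2-4\log 2)\log n$. Since $4\log 2 \approx 2.77 > 2$, the exponent is a negative constant times $\log n$, so $\P(M\geq 4\log n)\leq n^{-c}$ for some $c>0$; a slightly more careful choice of the multiplicative constant in the threshold, or a marginally larger constant than $4$, guarantees $c>1$ so that after the union bound over $\Theta(n/\log n)$ cells the total probability $O(n^{1-c}/\log n)$ still vanishes. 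Combining the two union bounds via a final union bound over the two failure events yields the claim w.h.p.

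I do not expect a genuine obstacle here, since this is a routine concentration argument; the only point requiring mild care is verifying that the constant $4$ in the upper bound is large enough that the Chernoff exponent beats the $\log n$ factor coming from the number of cells after the union bound. If the bare constant $4$ gives $c$ too close to $1$, the fix is simply to note that the lemma as stated asserts \emph{at most} $4\log n$, and the Chernoff exponent $(2-4\log 2)\log n \approx -0.77\log n$ already gives $\P(M\geq 4\log n)=O(n^{-0.77})$, so one sharpens the per-cell bound by instead bounding $\P(M > 4\log n)$ and using that the sum over cells is dominated once one verifies the product $\frac{n}{\log n}\cdot n^{-(4\log 2 - 2)}\to 0$, which holds because $4\log 2 - 2 > 1$. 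This confirms both tail events are simultaneously negligible and completes the proof plan.
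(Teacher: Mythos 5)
Your argument follows the paper's proof exactly---the same Poisson--Chernoff tail bound $\P(X\ge k)\le e^{-\lambda}(e\lambda/k)^k$ and the same union bound over the $\frac{n}{2\log n}$ cells---and your lower-bound half (no empty cell) is correct. The gap is in the upper-bound half, at precisely the step you flagged as needing care: your closing claim that $4\log 2 - 2 > 1$ is false, since $4\log 2 - 2 \approx 0.77$. Consequently the per-cell tail bound is $n^{-(4\log 2 - 2)}\approx n^{-0.77}$, and the union bound yields
\begin{align}
\frac{n}{2\log n}\cdot n^{-(4\log 2 - 2)} \;=\; \frac{n^{3-4\log 2}}{2\log n}\;=\;\Theta\!\left(\frac{n^{0.23}}{\log n}\right),
\end{align}
which diverges rather than vanishes. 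Your first instinct was the right one: the constant $4$ must be enlarged, and not merely ``marginally.'' For a threshold of $c\log n$ the same computation gives a per-cell bound of $n^{-(c\log(c/2)-c+2)}$, and the union bound closes precisely when $c\log(c/2)-c+1\ge 0$; this fails at $c=4$ (value $\approx -0.23$) but holds, e.g., at $c=5$ (value $\approx 0.58$), so ``at most $5\log n$ LNs per cell'' goes through verbatim.

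You should also know that the paper's own proof commits the identical error: the right-hand side of \eqref{eqn:Poi2} equals $1-\frac{n^{3-4\log 2}}{2\log n}$, which tends to $-\infty$, not to $1$ as asserted there. In fact the upper-bound half of Lemma \ref{lem:num_cell} is false as stated: the Chernoff exponent is tight for Poisson variables up to polynomial factors (by Stirling, $\P(N_1\ge 4\log n)\ge \frac{c'}{\sqrt{\log n}}\,n^{-(4\log 2-2)}$ for a constant $c'>0$), the cell counts are independent, and therefore the expected number of cells containing more than $4\log n$ LNs grows like $n^{0.23}/(\log n)^{3/2}$, so w.h.p.\ at least one such cell exists. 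The lemma must be restated with a larger constant (any $c$ with $c\log(c/2)-c+1>0$, e.g.\ $5\log n$ or $6\log n$), which is harmless downstream since the rest of the paper only uses that every cell holds $\Theta(\log n)$ LNs. In short: you took the same approach as the paper and correctly identified the problematic exponent that the paper overlooked, but an arithmetic slip in your final paragraph erased that correct observation instead of repairing the constant.
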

\begin{proof}
We first record the following fact. For a Poisson random variable $X$ with rate $\lambda$, the following inequality holds:
\begin{align}\label{eqn:inequality}
\P(X \geq k) \leq \frac{e^{-\lambda}(e\lambda)^k}{k^k},~\mbox{for}~k>\lambda.
\end{align} 
We refer to \cite{percol} for the proof of the above inequality.

Let $A$ be the event that there is at least one LN in each cell. Also let $B$ be the event that there are at most $4\log n$ LNs in each cell. If we define $N_k$ as the number of LNs in the $k^{\mathrm{th}}$ cell, then each $N_k$ for $k=1,2,...,\frac{n}{2\log n}$ follows a p.p.p. of density $n$, and $N_1$, $N_2,...$ are mutually independent. Thus, each of the $N_k$ is equivalent to a Poisson random variable with rate $2\log n$, and hence
\begin{align}
\P(A) &= 1 - \P(A^c) \\
&= 1 - \P\left(\bigcup_{k=1}^{\frac{n}{2\log n}}\{ N_k = 0 \}\right) \\ \label{eqn:uni1}
& \geq 1 - \frac{n}{2\log n}\P\left(N_1 = 0\right) \\ \label{eqn:Poi1}
& = 1 - \frac{n}{2\log n}e^{-2\log n},
\end{align}   
where \eqref{eqn:uni1} is due to the union bound and the mutual independence of the cells; \eqref{eqn:Poi1} follows from the fact that $N_1$ is a Poisson random variable with rate $2\log n$. The last term tends to $1$ as $n$ tends to infinity. 

Next, 
\begin{align}
\P(B) &= 1 - \P(B^c) \\
&= 1 - \P\left(\bigcup_{k=1}^{\frac{n}{2\log n}}\{ N_k > 4\log n \}\right) \\ \label{eqn:uni2}
& \geq 1 - \frac{n}{2\log n}\P( N_1 > 4\log n ) \\ \label{eqn:Poi2}
& \geq 1 - \frac{n}{2\log n}e^{-2\log n}\left( \frac{e}{2}\right)^{4\log n},
\end{align}
where \eqref{eqn:uni2} is because of the union bound and the mutual independence of the cells; \eqref{eqn:Poi2} is from \eqref{eqn:inequality}. The last term goes to $1$ as $n$ tends to infinity.  
\end{proof}


\subsubsection{A Detouring Method}\label{subsubsec:detouredMH}
Because the preservation regions may block the HDPs and VDPs of some S-D pairs, such paths must be modified so that they do not overlap with the preservation regions. Thus, we modify these paths by taking a detour around the preservation regions. However, the detouring paths generate extra relaying burdens to some cells, and this causes significant throughput degradation when the detouring paths are concentrated at some cells. 
\begin{figure}
 \centering
  {
  \includegraphics[width=90mm]{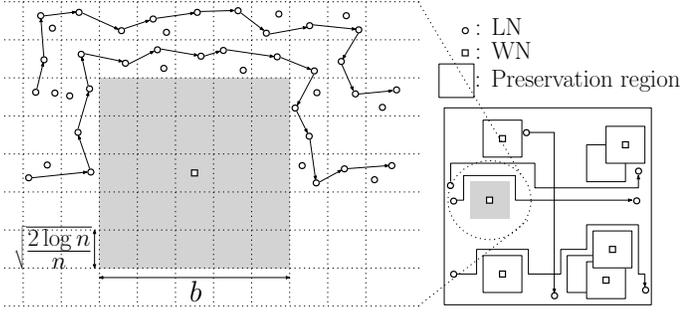}}
  \caption{Examples of modified data paths that detour the preservation regions. They are distributed to a wide region with width of the order of $b$.} \label{fig:seperate}
\end{figure}
Our solution is to distribute the detouring paths to a wide region with width of the order $b$ to avoid the concentration of the relaying burdens to some cells as shown in Fig. \ref{fig:seperate}.

Let us describe the detouring method in more detail. First we note that there can be some preservation regions that overlap with each other or are close to each other because the WNs are distributed randomly. Therefore, the detouring paths have to be chosen appropriately. Our proposed detouring method is as follows. First, define the {\em distance} between two preservation regions as the shortest distance of any two points in each preservation region. In addition, we say a set of preservation regions comprise a {\em cluster} if for each preservation region in the cluster there is at least one preservation region in the same cluster with a distance less than $b$. 
\begin{figure}
 \centering
  {
  \includegraphics[width=50mm]{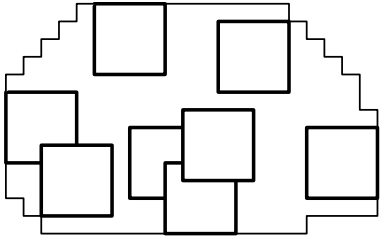}}
  \caption{Example of an expanded preservation region. The white squares are the preservation regions in a cluster.} \label{fig:expanded}
\end{figure}
Finally, we define the {\em expanded preservation region} as the set of cells that partially or completely belong to the convex closure of the preservation regions in a cluster (Fig. \ref{fig:expanded}). Then, the following lemma holds. 
\begin{lemma}\label{lem:num_pre}
If $\gamma > \kappa/2$, there are at most $4\kappa \log n$ preservation regions in an expanded preservation region w.h.p. 
\end{lemma}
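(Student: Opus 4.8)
\textbf{Proof proposal for Lemma \ref{lem:num_pre}.}

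The plan is to bound the number of preservation regions in an expanded preservation region by first bounding the spatial extent (diameter) of a cluster, and then bounding the number of WNs that can fall inside a region of that size via a Poisson tail bound. The key observation is that the clustering condition---each preservation region being within distance $b$ of another in the same cluster---limits how spread out a cluster can be. Each preservation region is a square of side $\Theta(b)$, and the ``distance less than $b$'' linking condition means that consecutive regions in a cluster chain are geometrically close. First I would argue that if a cluster contains $M$ preservation regions, then because every region is linked to the cluster through a chain of gaps each smaller than $b$ and each region has side $\Theta(b)$, the convex closure of the cluster fits inside a square whose side length scales linearly in the number of regions, i.e., $O(Mb)$; more usefully for the tail bound, I want to turn this around and say that the \emph{area} of the expanded preservation region is at most some $C M b^2$, but a cleaner route is to bound things by the number of WNs directly.

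The core step is the Poisson tail estimate. Since the WNs form a p.p.p.\ of density $n^{\kappa}$ and each preservation region has area $\Theta(b^2) = \Theta(n^{-2\gamma}\log n)$, I would set up the following dichotomy: either a candidate agglomeration of $k$ preservation regions exists whose WNs all lie within a bounded neighborhood, or not. Concretely, I would cover the unit square by a grid at scale $b$ and observe that a cluster of $k$ linked preservation regions is contained in a connected union of $O(k)$ grid cells, hence in some region of area $O(k b^2)$. The expected number of WNs in such a region is $\Theta(k b^2 n^{\kappa}) = \Theta(k\, n^{\kappa - 2\gamma}\log n)$. Using the hypothesis $\gamma > \kappa/2$, i.e., $2\gamma > \kappa$, this mean is $\Theta(k\, n^{\kappa-2\gamma}\log n)$ with $\kappa - 2\gamma < 0$, so the per-region expected WN count is vanishing; this is exactly what forces clusters to be small. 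I would then apply the Poisson deviation inequality \eqref{eqn:inequality} from Lemma \ref{lem:num_cell}, namely $\P(X \geq k) \leq e^{-\lambda}(e\lambda/k)^{k}$, with $\lambda$ the expected number of WNs in the candidate region, to show that the probability of having $k > 4\kappa\log n$ WNs (hence that many preservation regions) concentrated into a single cluster decays faster than the reciprocal of the number of candidate positions.

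The final step is a union bound over all possible clusters. There are at most $O(n/\log n)$ grid cells at scale $b$ that could anchor a cluster (taking $\gamma = 1/2$ as the densest case, where $b^2 = \Theta(\log n / n)$), so I would multiply the single-cluster failure probability by this polynomial factor and verify that the product still tends to zero. Choosing the threshold $4\kappa\log n$ makes the bound $(e\lambda/k)^k$ decay like $n^{-c\kappa\log n}$ for some constant, which is superpolynomially small and easily absorbs the union-bound factor. I expect the main obstacle to be the geometric bookkeeping in the first step: carefully defining a cluster as a connected structure under the ``within distance $b$'' relation and showing rigorously that $M$ mutually-linked regions are confined to an area $O(Mb^2)$, so that ``$M$ regions in a cluster'' genuinely implies ``$\Theta(M)$ WNs in a region of area $O(Mb^2)$.'' Once that containment is pinned down, the probabilistic tail and union bound are routine applications of \eqref{eqn:inequality}, mirroring the structure of the proof of Lemma \ref{lem:num_cell}.
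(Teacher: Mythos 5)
Your proposal is correct in substance, but it takes a genuinely different route from the paper. The paper's proof is a short contradiction argument that simply reuses Lemma \ref{lem:num_cell} at the coarse scale of the WN process: applying that lemma with density $n^{\kappa}$, every cell of area $\frac{2\log n^{\kappa}}{n^{\kappa}}$ contains at most $4\kappa\log n$ WNs w.h.p.; if an expanded preservation region contained $4\kappa\log n + 1$ preservation regions, its footprint would have area $\Theta\bigl((\log n)^2 n^{-2\gamma}\bigr)$, which for $\gamma > \kappa/2$ is asymptotically smaller than one coarse cell, so more than $4\kappa\log n$ WNs would be crowded into such a cell---contradicting the w.h.p.\ bound already established. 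You instead redo the concentration from scratch at the fine scale $b$: containment of a $k$-region cluster in a connected union of $O(k)$ cells of side $b$, the Poisson tail bound \eqref{eqn:inequality} with mean $\lambda = \Theta\bigl(k\, n^{\kappa-2\gamma}\log n\bigr) = o(k)$, and a union bound. Both arguments exploit $\gamma > \kappa/2$ in the same way (the cluster's footprint has vanishing expected WN count relative to the threshold $4\kappa\log n$); what yours buys is self-containment and an explicit superpolynomially small failure probability $n^{-c\kappa\log n}$, at the cost of combinatorial bookkeeping that the paper's reduction avoids. On that bookkeeping, one step in your union bound must be completed: you enumerate only the $O(n/\log n)$ anchor cells, but the covering region is data-dependent, so you must also enumerate the shapes of the connected unions of $O(k)$ grid cells (lattice animals). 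Their number is $C^{O(k)} = C^{O(\log n)} = n^{O(1)}$, still polynomial in $n$, so it is indeed absorbed by your superpolynomial tail and the conclusion stands; alternatively, you can sidestep shape enumeration by noting that a cluster of $k$ linked regions has diameter $O(kb)$, and union-bounding over anchored squares of side $O(kb)$, whose larger mean $\Theta\bigl(k^2 n^{\kappa-2\gamma}\log n\bigr)$ still satisfies $e\lambda/k \to 0$ for $k = \Theta(\log n)$.
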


\begin{proof}
According to Lemma \ref{lem:num_cell}, there are at most $4\kappa \log n$ WNs in a square of area $\frac{2\log n^{\kappa}}{n^{\kappa}}$. Consider an expanded preservation region and assume there are $4\kappa \log n + 1$ preservation regions in the expanded preservation region. Then, the size of the expanded preservation region can be as large as $\Theta \left( \frac{(\log n)^2}{n^{2\gamma}} \right)$ which corresponds to the case when the preservation regions in the expanded preservation region do not overlap with each other. This is asymptotically smaller than $\frac{2\log n^{\kappa}}{n^{\kappa}}$. This contradicts Lemma \ref{lem:num_cell}.
\end{proof}

Transmission is not allowed in the expanded preservation regions. We modify the HDPs and VDPs that overlap with some expanded preservation regions to detour the expanded preservation regions. The following lemma presents an upper bound on the relaying burden at each cell when our proposed detouring method is used.

\begin{lemma}\label{lem:detouringmethod}
If $\gamma > \kappa/2$, there exists a detouring method in which each cell needs to carry at most $\Theta(\sqrt{n}(\log n)^{3/2})$ data paths w.h.p.
\end{lemma}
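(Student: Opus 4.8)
The plan is to view the detoured MH scheme as the original MH scheme plus a perturbation that is localized near the expanded preservation regions, and to bound the extra relaying load one expanded region at a time. First I would record the baseline load of the original scheme: writing $c=\sqrt{2\log n/n}$ for the cell side, there are $n/2$ source--destination lines, and an HDP (resp.\ VDP) crosses a fixed cell only if its source lies in that cell's row (resp.\ column), an event of probability $\Theta(c)=\Theta(\sqrt{\log n/n})$, so the expected number of data paths through any cell is $\Theta(\sqrt{n\log n})$. Feeding this mean into the Poisson tail bound \eqref{eqn:inequality} and union-bounding over the $\Theta(n/\log n)$ cells, exactly as in the proof of Lemma~\ref{lem:num_cell}, upgrades this to an $O(\sqrt{n\log n})$ bound w.h.p. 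Since the detoured scheme leaves unchanged every path segment lying outside the expanded preservation regions, it then suffices to bound the additional load created by the rerouting around each expanded region.

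Next I would count, for one expanded preservation region $E$, how many paths must detour around it. By Lemma~\ref{lem:num_pre} the area of $E$ is $O((\log n)^2/n^{2\gamma})=O((\log n)b^2)$, so in the worst case (the regions of a cluster lined up) $E$ has longest extent $\Theta((\log n)b)$, spanning $\Theta((\log n)b/c)=\Theta(n^{1/2-\gamma}\log n)$ rows of cells. A path is blocked by $E$ only if its row lies in this extent and its horizontal span covers the location of $E$; multiplying the number of blocked rows by the per-cell path bound $O(\sqrt{n\log n})$ gives at most $O(n^{1-\gamma}(\log n)^{3/2})$ blocked HDPs, and symmetrically for the VDPs. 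The detouring method of Fig.~\ref{fig:seperate} reroutes these paths around $E$ inside a band of width $\Theta(b)$, that is, over $\Theta(b/c)=\Theta(n^{1/2-\gamma})$ parallel detour tracks, assigning the blocked paths to the tracks as evenly as possible.

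I would then bound the per-cell load. Distributing the $O(n^{1-\gamma}(\log n)^{3/2})$ detours around $E$ over its $\Theta(n^{1/2-\gamma})$ tracks, each track, and hence each cell lying on it, carries at most $O\!\left(n^{1-\gamma}(\log n)^{3/2}/n^{1/2-\gamma}\right)=O(\sqrt n(\log n)^{3/2})$ detour paths; since the tracks are parallel and fill a band of width $\Theta(b)$, a given cell lies on $O(1)$ tracks of any single region. To finish I must rule out super-position of the bands of distinct expanded regions: a cell can lie in the band of $E$ only if a WN sits within $\Theta((\log n)b)$ of it, and the expected number of such WNs is $\Theta(n^{\kappa-2\gamma}(\log n)^3)$, which vanishes precisely because $\gamma>\kappa/2$. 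Applying the tail bound \eqref{eqn:inequality} to this vanishing mean and union-bounding over all $\Theta(n/\log n)$ cells and all $O(n^\kappa)$ regions shows that w.h.p.\ every cell lies in only $O(1)$ bands. Hence the total per-cell load is the baseline $O(\sqrt{n\log n})$ plus $O(1)$ band contributions of $O(\sqrt n(\log n)^{3/2})$ each, giving the claimed $O(\sqrt n(\log n)^{3/2})$.

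The main obstacle, and the step that needs genuine care rather than an order-of-magnitude estimate, is the even-spreading construction together with the clustering. I must give an explicit assignment of blocked paths to the $\Theta(b/c)$ detour tracks together with a routing around the corners of a possibly long and thin $E$ for which the per-track load really matches the averaged bound, and I must invoke Lemma~\ref{lem:num_pre} so that even the worst-case elongated expanded regions produced by clustered WNs cannot funnel the detours through a vanishingly small set of cells. The hypothesis $\gamma>\kappa/2$ is exactly what controls both effects simultaneously: it bounds the cluster size through Lemma~\ref{lem:num_pre} and forces the band-overlap count down to $O(1)$.
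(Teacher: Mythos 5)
Your proposal is correct and follows essentially the same route as the paper's proof: both bound the number of paths blocked by a single expanded preservation region by $\Theta(n^{1-\gamma}(\log n)^{3/2})$ (using Lemma~\ref{lem:num_pre} together with the node density) and then spread these detours evenly over a band of width $\Theta(b)$, i.e., over $\Theta(n^{1/2-\gamma})$ cell-wide tracks, giving the claimed $\Theta(\sqrt{n}(\log n)^{3/2})$ per-cell load. The only notable difference is secondary: you rule out super-position of the bands of distinct expanded regions probabilistically (vanishing expected number of WNs within $\Theta((\log n)b)$ of a cell, plus the Poisson tail bound \eqref{eqn:inequality}), whereas the paper obtains this separation geometrically from the cluster/convex-closure construction, which guarantees a passable region of width greater than $b$ between any two expanded preservation regions; both arguments are sound.
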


\begin{proof}
\begin{figure}
 \centering
  {
  \includegraphics[width=70mm]{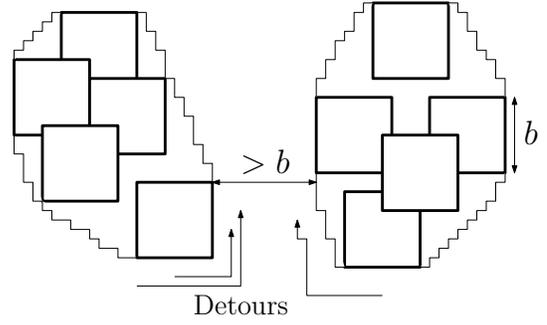}}
  \caption{Example of a passable region between two expanded preservation regions. The width of the passable region is longer than $b$.} \label{fig:tunnel}
\end{figure}
According to the definition of a cluster of preservation regions and the convex closure, there exists a passable region of which width is longer than $b$ between any two expanded preservation regions as shown in Fig. \ref{fig:tunnel}. The detouring paths are distributed into the areas that surround the expanded preservation regions and have width of $b/2$ as follows:
\begin{itemize}
\begin{figure}
 \centering
  {
  \includegraphics[width=70mm]{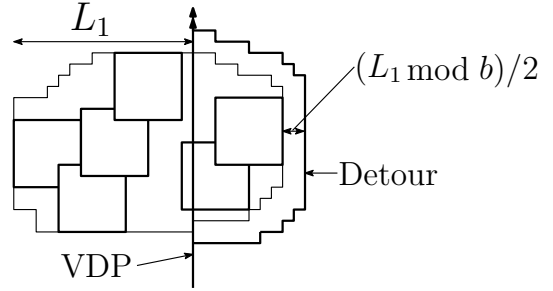}}
  \caption{Example of a detoured VDP.} \label{fig:detouringmethod}
\end{figure}
\item Refer to Fig. \ref{fig:detouringmethod}. Consider a VDP crossing an expanded preservation region. The case of the HDP can be considered in a similar way. To choose a detouring path, first calculate the length between the line of VDP and the farmost line from the VDP that is parallel with the VDP and overlaps with the edge of the expanded preservation region ($L_1$ in Fig. \ref{fig:detouringmethod}). 
\item On the opposite side, take a detour so that the interval between the detour and the border of the expanded preservation region is $(L_1 \mod b)/2$. Using this method, the detouring paths are distributed into areas that have width approximately $b/2$.
\begin{figure}[t]
 \centering
  {
  \includegraphics[width=40mm]{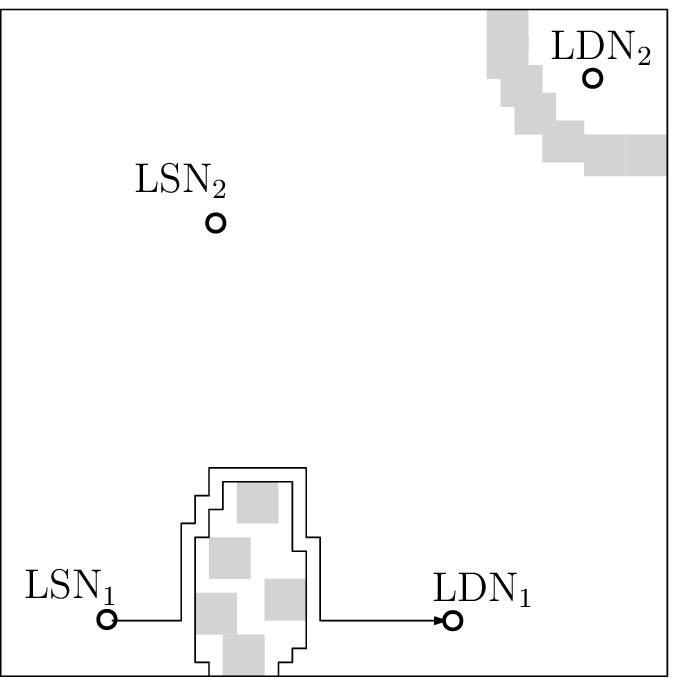}}
  \caption{The white circles are LNs and the gray squares are WNs. There is no possible data path between $\mbox{LSN}_2$ and $\mbox{LDN}_2$.} \label{fig:case}
\end{figure}
\item If a VDP cannot take a detour at one side, we will take a detour on the other side. Furthermore, if there is no possible data path between a LSN and the corresponding LDN, we regard the pair to be in outage (Fig. \ref{fig:case}).   
\end{itemize}
Because there are at most $4\kappa \log n$ preservation regions in a expanded preservation region, and there are at most $4 \log n$ LNs in a cell, the largest possible number of LNs whose VDPs overlap with a certain expanded preservation region is of the order $\Theta(n^{-\gamma}\sqrt{\log n}\cdot \log n \cdot n) = \Theta(n^{1-\gamma}(\log n)^{3/2})$. By distributing these relaying burdens into a region with width of the order of $\Theta(n^{-\gamma}\sqrt{\log n})$, this lemma is proved. 
\end{proof}

As shown in Lemma \ref{lem:detouringmethod}, the preservation region exponent $\gamma$ does not affect the scaling of the relaying burden of each cell. Thus, we enlarge the preservation region up to a certain level to increase the transmit powers of the LNs while ensuring that the outage fraction vanishes as $n$ tends to infinity. As will be proved in Appendix \ref{app:outage}, even if $\gamma = \frac{\kappa}{2} + \epsilon$ where $\epsilon$ is an arbitrarily small constant, the outage fraction vanishes. Thus, in the following, we simply set $\gamma=\frac{\kappa}{2}$ to ensure that the transmit powers of LNs are maximized for notational brevity.

\subsubsection{Codebook Generation}\label{subsubsec:codebook}
Each LN uses a codebook with codewords generated each in an i.i.d.\ manner. The components of the codewords are themselves i.i.d.\ complex Gaussian with zero mean and some appropriately chosen variance $P_{\mathrm{MH}}$, i.e., $\mathcal{CN}(0,P_{\mathrm{MH}})$. The secret key, which is assumed to be sufficiently sufficiently long, is utilized to construct the random codebook (which is known to all parties). We discuss the choice of $P_{\mathrm{MH}}$ in Section \ref{subsubsec:pMH}.

\subsubsection{Rate in a Hop}\label{subsubsec:rate}
We present an achievable rate of the communication between two adjacent cells based on the alloted power $P_{\mathrm{MH}}$. 
Let $\SNR_{\mathrm{hop}}$ be the $\SNR$ of the communication protocol between two adjacent cells. Then, because the longest distance between the two LNs in two neighboring cells is $\sqrt{\frac{10\log n}{n}}$, we have
\begin{align}\label{eqn:hoplow}
\SNR_{\mathrm{hop}} \geq \frac{GP_{\mathrm{MH}}}{N_0} \left( \sqrt{\frac{10\log n}{n}} \right)^{-\alpha}.
\end{align} 
Recall that $G$ is from the antenna gain given in \eqref{eqn:H_su}. 

The following lemma plays a key role in deriving an achievable rate in a hop and is useful throughout this paper.  

\begin{lemma}\label{lem:insn}
Let $\INR_{\mathrm{MH}}$ be the interference-to-noise ratio ($\INR$) at the LNs. Then, if $\alpha > 2$, each LN satisfies
\begin{align}
\INR_{\mathrm{MH}} \leq K_{\mathrm{I}} \SNR_{\mathrm{hop}},
\end{align}
where $K_{\mathrm{I}}$ is a constant independent of $\alpha$, $P_{\mathrm{MH}}$, and $n$. 
\end{lemma}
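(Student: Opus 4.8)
The plan is to bound the aggregate interference power seen at a typical legitimate receiver by grouping the simultaneously active interferers into concentric tiers induced by the 9-TDMA sub-grid, summing the per-tier contributions, and comparing the total against the lower bound on $\SNR_{\mathrm{hop}}$ in \eqref{eqn:hoplow}.

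First I would fix the geometry. Write $s = \sqrt{2\log n / n}$ for the side of a cell. Under the 9-TDMA schedule, the cells that transmit in a given slot form a sub-grid whose occupied cells have centers spaced $3s$ apart along each axis, and each active cell contains a single transmitter radiating power $P_{\mathrm{MH}}$. Fix a receiver $R$ lying in an (inactive) cell adjacent to its intended transmitter, so that its useful signal originates at distance at most $\sqrt{10\log n/n} = s\sqrt5$, consistent with \eqref{eqn:hoplow}.

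Next I would enumerate the interferers by tiers. Index the active cells other than the one serving $R$ by $t = \max(|a|,|b|)$, where $(3as, 3bs)$ is the displacement of an active cell center from the serving transmitter's cell center. Tier $t$ contains at most $8t$ active cells, hence at most $8t$ interferers. The essential geometric step is to show that every interferer in tier $t$ lies at Euclidean distance at least $c\,t\,s$ from $R$ for an absolute constant $c>0$: because the serving and receiving cells are adjacent while the nearest interfering active cell is two cells away from $R$, the tier-$1$ interferers are at distance at least $s$, and a direct bound of the form $3ts - (\text{within-cell and receiver offsets of order } s)$ handles $t\ge 2$. I expect this tier-distance bound to be the main obstacle, since the receiver is offset from its transmitter and can therefore sit unusually close to the innermost ring of interferers; the argument must use the $3s$ spacing guaranteed by the 9-TDMA scheme carefully and treat the first tier by hand.

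Finally I would assemble the estimate
\begin{align*}
\INR_{\mathrm{MH}} = \frac{1}{N_0}\sum_{j}\frac{GP_{\mathrm{MH}}}{d_j^{\alpha}} \le \frac{GP_{\mathrm{MH}}}{N_0\, s^{\alpha}}\Bigl(8 + 8\sum_{t\ge 2} t^{1-\alpha}\Bigr),
\end{align*}
where $d_j$ is the distance from $R$ to interferer $j$. Since $\alpha>2$, the series $\sum_{t\ge 2}t^{1-\alpha}$ converges (it is dominated by $\zeta(\alpha-1)$), so the bracketed factor is finite; this is precisely where the hypothesis $\alpha>2$ enters. Dividing by the lower bound $\SNR_{\mathrm{hop}}\ge \frac{GP_{\mathrm{MH}}}{N_0}(s\sqrt5)^{-\alpha}$ from \eqref{eqn:hoplow}, the common factor $GP_{\mathrm{MH}}/(N_0 s^{\alpha})$ cancels and one obtains $\INR_{\mathrm{MH}}\le K_{\mathrm{I}}\,\SNR_{\mathrm{hop}}$ with $K_{\mathrm{I}}$ built only from the convergent geometric series and the factor $5^{\alpha/2}$. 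Crucially, $K_{\mathrm{I}}$ involves only the fixed path-loss exponent and the universal constants of the tiling and schedule; it carries no dependence on $P_{\mathrm{MH}}$ or on $n$, which is exactly what the subsequent rate and scaling arguments require.
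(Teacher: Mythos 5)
Your proposal is correct and follows essentially the same route as the paper: the paper groups the 9-TDMA interfering cells into rings $F_i$ of $8i$ cells at distance at least $(3i-2)\sqrt{2\log n/n}$ from the receiver, sums the resulting series $\sum_i 8i(3i-2)^{-\alpha}$ (convergent precisely because $\alpha>2$), and compares against the lower bound \eqref{eqn:hoplow}, exactly your tier decomposition, convergence step, and cancellation of the common factor $GP_{\mathrm{MH}}/(N_0 s^{\alpha})$. Your extra care with the first tier and the constant $c$ corresponds to the paper's explicit $(3i-2)$ factor, so there is no substantive difference.
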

\begin{proof}
To aid the reader's understanding of this proof, s/he is referred to Fig. \ref{fig:inrbound}. Based on the 9-TDMA scheme, when the communication between the cell marked with $S$ and the cell marked with $D$ occurs, the gray cells except the cell marked with $S$ are the interfering cells. These interfering cells are grouped into the interfering groups $F_1$, $F_2,...$ $F_i,...$ such that $F_i$ consists of $8i$ interfering cells and the LNs in $F_i$ have distance longer than $(3i-2)\sqrt{\frac{2\log n}{n}}$ from the LNs in the cell marked with $D$. Then, the $\INR_{\mathrm{MH}}$ of the receiver LN in the cell marked with $D$ in Fig. \ref{fig:inrbound} is upper bounded as follows: 
\begin{figure}
 \centering
  {
  \includegraphics[width=50mm]{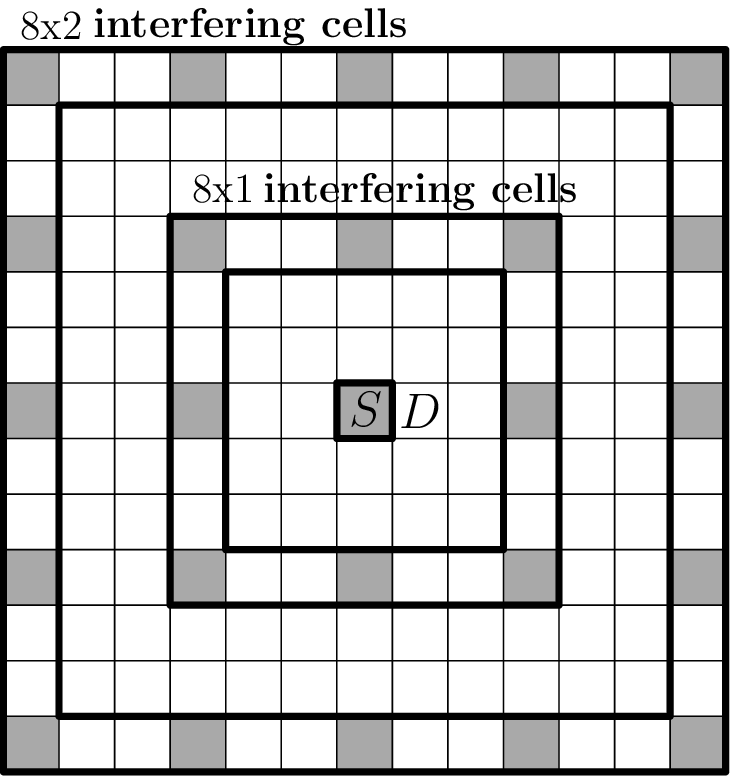}}
  \caption{9-TDMA scheme and interfering cells. The cell marked with $S$ is the cell that a transmitter is in and the cell marked with $D$ is the cell that the corresponding receiver is in. The gray cells except the cell $S$ are the interfering cells that form several interfering groups.} \label{fig:inrbound}
\end{figure}
\begin{align}
 \INR_{\mathrm{MH}} &< \frac{GP_{\mathrm{tx}}}{N_0}\sum_{i=1}^n8i\left((3i-2)\sqrt{\frac{2\log n}{n}}\right)^{-\alpha}\\
&= \frac{GP_{\mathrm{tx}}}{N_0}\left(\sqrt{\frac{2\log n}{n}}\right)^{-\alpha}\sum_{i=1}^n8i(3i-2)^{-\alpha}\\ \label{eqn:ibound1}
&= k\frac{GP_{\mathrm{tx}}}{N_0}\left(\sqrt{\frac{2\log n}{n}}\right)^{-\alpha} \\ \label{eqn:ibound2}
&\leq K_{\mathrm{I}} \SNR_{\mathrm{hop}}, 
\end{align}
where $k$ and $K_{\mathrm{I}}$ are constants independent of $\alpha$, $\delta$, $l$ and $n$, \eqref{eqn:ibound1} is because $\sum_{i=1}^{\infty}8i(3i-2)^{-\alpha}$ converges when $\alpha > 2$, and \eqref{eqn:ibound2} is due to the inequality in \eqref{eqn:hoplow}. 
\end{proof}

Based on the above lemma, we now derive a lower bound on the rate of a hop.

\begin{lemma}\label{lem:cellrate}
The rate $R_{\mathrm{hop}}$ that every cell can achieve in a hop is lower bounded as follows:
\begin{align}
R_{\mathrm{hop}} \geq \frac{1}{9} \log \left(1 + \frac{c  P_{\mathrm{MH}}\left( \sqrt{\frac{2\log n}{n}} \right)^{-\alpha}}{N_0 + c K_I  P_{\mathrm{MH}} \left( \sqrt{\frac{2\log n}{n}} \right)^{-\alpha}}\right) ,
\end{align}
where $c$ and $K_I$ are constants independent of $\alpha$, $P_{\mathrm{MH}}$, and $n$. 
\end{lemma}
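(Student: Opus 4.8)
The plan is to realize $R_{\mathrm{hop}}$ as the achievable rate of the point-to-point Gaussian link between the transmitting cell $S$ and receiving cell $D$ with all simultaneous interference treated as noise, and then to bound the resulting $\SINR$ from below using the two preceding lemmas. First I would account for the leading factor $\frac{1}{9}$: under the 9-TDMA schedule each cell is active in exactly one of nine slots, so it transmits a $\frac{1}{9}$ fraction of the time. During an active slot the receiving LN in cell $D$ observes its intended signal, the aggregate interference from the simultaneously active cells, and the thermal noise $\mathcal{CN}(0,N_0)$. Because every LN transmits i.i.d.\ $\mathcal{CN}(0,P_{\mathrm{MH}})$ codeword symbols and each gain has the form \eqref{eqn:H_su} (deterministic magnitude, independent uniform phase), the aggregate interference, conditioned on the node locations and gains, is a zero-mean circularly symmetric complex Gaussian with variance $N_0\cdot\INR_{\mathrm{MH}}$. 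Hence treating it as extra Gaussian noise is exact, and the standard Gaussian-channel argument gives the achievable per-use rate $\log(1+\SINR_{\mathrm{hop}})$ with $\SINR_{\mathrm{hop}}=\frac{\SNR_{\mathrm{hop}}}{1+\INR_{\mathrm{MH}}}$, whence $R_{\mathrm{hop}}\ge\frac{1}{9}\log(1+\SINR_{\mathrm{hop}})$.

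The next step is the crux, and it is where Lemma \ref{lem:insn} is essential. One cannot simply lower-bound the numerator $\SNR_{\mathrm{hop}}$ via \eqref{eqn:hoplow} while discarding the denominator, since $\INR_{\mathrm{MH}}$ is \emph{not} bounded by an absolute constant---it grows in proportion to the signal strength. Feeding $\INR_{\mathrm{MH}}\le K_{\mathrm{I}}\SNR_{\mathrm{hop}}$ into the ratio folds the interference into a single expression in $\SNR_{\mathrm{hop}}$:
\begin{align*}
\SINR_{\mathrm{hop}}=\frac{\SNR_{\mathrm{hop}}}{1+\INR_{\mathrm{MH}}}\ \ge\ \frac{\SNR_{\mathrm{hop}}}{1+K_{\mathrm{I}}\SNR_{\mathrm{hop}}}.
\end{align*}
The map $x\mapsto \frac{x}{1+K_{\mathrm{I}}x}$ is strictly increasing on $[0,\infty)$, so it is now safe to substitute the distance-based lower bound \eqref{eqn:hoplow}, namely $\SNR_{\mathrm{hop}}\ge \frac{GP_{\mathrm{MH}}}{N_0}\bigl(\sqrt{\tfrac{10\log n}{n}}\bigr)^{-\alpha}$, without reversing the inequality.

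Finally I would tidy up the constants. Writing $\bigl(\sqrt{\tfrac{10\log n}{n}}\bigr)^{-\alpha}=5^{-\alpha/2}\bigl(\sqrt{\tfrac{2\log n}{n}}\bigr)^{-\alpha}$, setting $c:=G\,5^{-\alpha/2}$ (a positive constant depending only on the fixed network parameters, and in particular not on $P_{\mathrm{MH}}$ or $n$), and clearing $N_0$ from numerator and denominator yields
\begin{align*}
\SINR_{\mathrm{hop}}\ \ge\ \frac{c\,P_{\mathrm{MH}}\bigl(\sqrt{\tfrac{2\log n}{n}}\bigr)^{-\alpha}}{N_0+c\,K_{\mathrm{I}}\,P_{\mathrm{MH}}\bigl(\sqrt{\tfrac{2\log n}{n}}\bigr)^{-\alpha}},
\end{align*}
and combining with $R_{\mathrm{hop}}\ge\frac{1}{9}\log(1+\SINR_{\mathrm{hop}})$ together with the monotonicity of $\log(1+\cdot)$ delivers the stated bound. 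The main obstacle is the middle step: the interference term is not negligible and cannot be bounded independently of the signal, so Lemma \ref{lem:insn} must be invoked to recast the $\SINR$ lower bound as a single monotone function of $\SNR_{\mathrm{hop}}$ \emph{before} the distance-based SNR bound is inserted; everything else is bookkeeping of constants.
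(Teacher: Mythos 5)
Your proposal is correct and follows essentially the same route as the paper's proof: express the per-hop rate as $\frac{1}{9}\log(1+\SINR_{\mathrm{hop}})$ under 9-TDMA with interference treated as noise, fold the interference into the signal term via Lemma \ref{lem:insn}, insert the distance bound \eqref{eqn:hoplow} using monotonicity of $x\mapsto x/(1+K_{\mathrm{I}}x)$, and absorb the factor $5^{-\alpha/2}$ into the constant $c$. Your explicit justification of the monotonicity step (which the paper leaves implicit) is a welcome, but not substantively different, refinement.
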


\begin{proof}
Based on the 9-TDMA scheme, by using the codebook as desribed in Section \ref{subsubsec:codebook},
\begin{align}
R_{\mathrm{hop}} &= \frac{1}{9} \log \left(1 + \frac{N_0 \SNR_{\mathrm{hop}}}{N_0 + N_0 \INR_{\mathrm{l}}}\right) \\ \label{eqn:Rhop1}
&\geq \frac{1}{9} \log \left(1 + \frac{N_0 \SNR_{\mathrm{hop}}}{N_0 + N_0 K_{\mathrm{I}}\SNR_{\mathrm{hop}}}\right) \\ \label{eqn:Rhop2}
&\geq \frac{1}{9} \log \left(1 + \frac{GP_{\mathrm{MH}}\left( \sqrt{\frac{10\log n}{n}} \right)^{-\alpha}}{N_0 + K_{\mathrm{I}}GP_{\mathrm{MH}} \left( \sqrt{\frac{10\log n}{n}} \right)^{-\alpha}}\right) \\
&= \frac{1}{9} \log \left(1 + \frac{c  P_{\mathrm{MH}}\left( \sqrt{\frac{2\log n}{n}} \right)^{-\alpha}}{N_0 + c K_I  P_{\mathrm{MH}} \left( \sqrt{\frac{2\log n}{n}} \right)^{-\alpha}}\right),
\end{align}
where \eqref{eqn:Rhop1} is due to Lemma \ref{lem:insn}; \eqref{eqn:Rhop2} is from \eqref{eqn:hoplow}, and $c$ and $K_I$ are constants independent of $\alpha$, $P_{\mathrm{MH}}$, and $n$. 
\end{proof}

\subsubsection{Transmit Power $P_{\mathrm{MH}}$}\label{subsubsec:pMH}
We now derive an available transmit power of the LNs that makes it possible to achieve the desired throughput scaling while satisfying the covertness constraint in \eqref{eqn:covertness}. Because each LN uses a codebook generated according to the distribution $\mathcal{CN}(0,P_{\mathrm{MH}})$, the covertness measure in \eqref{eqn:covertness} at WN $i$ is upper bounded as follows:
\begin{align} 
D\left(\left. \hat{Q}_{Z_i^l} \right \| Q_{N_i'}^{\times l}\right) \label{eqn:first}
&= \sum_{t=1}^l D\left(\left. \hat{Q}_{Z_{i,t}} \right \| Q_{N_i'}\right) \\
&= \sum_{t=1}^l D\left (\left. \mathcal{CN}(0,I_{i,t} + N_0)\right \| \mathcal{CN}(0,N_0) \right ) \\
&\leq l\cdot D\left (\left. \mathcal{CN}(0,I_i + N_0)\right \| \mathcal{CN}(0,N_0) \right ) \\
&= l\cdot \left(\frac{I_i}{N_0} -\log \frac{I_i+N_0}{N_0}  \right)\\ \label{eqn:last}
&\leq \frac{l}{2}\left( \frac{I_i}{N_0}\right)^2,
\end{align}
where $I_{i,t}$ is the power of the interference received by WN $i$ at time $t$, $I_i \coloneqq \max_t\{I_{i,t}\}$, and \eqref{eqn:last} follows from the inequality $\log(1+x) \geq x - \frac{x^2}{2}$ for $x\geq0$.

To satisfy the covertness constraint, we upper bound \eqref{eqn:last} by $\delta$. Based on the aforementioned detouring method and the generated random codebook, we next present a value for $P_{\mathrm{MH}}$ such that \eqref{eqn:last} is upper bounded by $\delta$. 
\begin{lemma}\label{lem:power}
If $\gamma = \frac{\kappa}{2}$, and 
\begin{align}\label{eqn:pMH}
P_{\mathrm{MH}} =  k\cdot n^{(\frac{1}{2}-\frac{\kappa}{2})(\alpha-2)}\sqrt{\frac{\delta}{l}}\left(\sqrt{\frac{2\log n}{n}}\right)^{\alpha},
\end{align}
where $k$ is a constant independent of $\alpha$, $\kappa$, $\delta$, $l$, and $n$, then \eqref{eqn:last} is upper bounded by $\delta$ for all $i$.
\end{lemma}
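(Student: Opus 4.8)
The plan is to show that the final bound \eqref{eqn:last}, namely $\frac{l}{2}(I_i/N_0)^2$, is at most $\delta$ by controlling the aggregate interference $I_i$ at each WN. Since \eqref{eqn:last} is increasing in $I_i$, it suffices to establish the estimate $I_i/N_0 = O(\sqrt{\delta/l})$ uniformly over all $i$: then $\frac{l}{2}(I_i/N_0)^2 = O(\delta)$, and the free constant $k$ in \eqref{eqn:pMH} can be taken small enough to force the right-hand side below $\delta$. Thus the whole lemma reduces to one interference-summation estimate, whose two decisive ingredients are (i) the preservation region guarantees that no active LN lies within distance $\Theta(b)$ of WN $i$, and (ii) under the detoured MH scheme with 9-TDMA and a single transmitter per active cell, the spatial density of simultaneous transmitters is $\Theta(n/\log n)$, \emph{not} $\Theta(n)$.

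For the interference bound I would argue exactly as in the proof of Lemma~\ref{lem:insn}, but centered at the WN and starting the summation at the preservation-region boundary. At any time $t$, each active cell contributes one transmitter of power $P_{\mathrm{MH}}$, and by Lemma~\ref{lem:num_cell} every cell is nonempty w.h.p., giving transmitter density $\Theta(n/\log n)$. Grouping the active transmitters into concentric annuli around WN $i$, the $j$-th annulus holds $\Theta(j)$ transmitters at distance $\Omega(b + j\cdot(\text{annulus width}))$, so that
\begin{align*}
I_{i,t} \leq G P_{\mathrm{MH}} \sum_{\text{active } k} d_{ik}^{-\alpha} \leq C \cdot G P_{\mathrm{MH}} \cdot \frac{n}{\log n} \cdot b^{2-\alpha},
\end{align*}
where the resulting series converges precisely because $\alpha > 2$ (the same mechanism that makes $\sum_{i} 8i(3i-2)^{-\alpha}$ converge in Lemma~\ref{lem:insn}), and $C$ is a constant. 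As the right-hand side is independent of $t$ and of the chosen WN, the bound also governs $I_i = \max_t I_{i,t}$ for every $i$; clustered or overlapping preservation regions only push transmitters farther out and hence can only decrease $I_i$.

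It then remains to substitute $b = \Theta(n^{-\kappa/2}\sqrt{\log n})$ (i.e.\ $\gamma = \kappa/2$) and the prescribed power \eqref{eqn:pMH} into the display and verify that the exponents of $n$ and $\log n$ cancel. Using $\big(\sqrt{2\log n/n}\big)^{\alpha} = (2\log n)^{\alpha/2} n^{-\alpha/2}$ and $b^{2-\alpha} = \Theta\big(n^{\frac{\kappa}{2}(\alpha-2)}(\log n)^{-\frac{\alpha-2}{2}}\big)$, the exponent of $n$ collects to $(\tfrac12-\tfrac\kappa2)(\alpha-2) - \tfrac\alpha2 + 1 + \tfrac\kappa2(\alpha-2) = 0$, while the exponent of $\log n$ collects to $\tfrac\alpha2 - 1 - \tfrac{\alpha-2}{2} = 0$. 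Hence $I_i/N_0 = C' k \sqrt{\delta/l}$ for a constant $C'$, giving $\frac{l}{2}(I_i/N_0)^2 = \frac{(C')^2 k^2}{2}\,\delta$; choosing $k$ so that $(C')^2 k^2 \leq 2$ completes the argument for all $i$ simultaneously.

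The main obstacle, and the step deserving the most care, is the interference summation — specifically confirming that the correct transmitter density is $\Theta(n/\log n)$ rather than $\Theta(n)$. This lone $\log n$ factor is exactly what cancels the residual $\log n$ contributed jointly by the $(2\log n)^{\alpha/2}$ factor in $P_{\mathrm{MH}}$ and the $(\log n)^{-(\alpha-2)/2}$ factor in $b^{2-\alpha}$; if one neglected the fact that only one LN per active cell transmits, the $\log n$ powers would not cancel and the lemma would fail. A secondary subtlety is proving the uniform distance lower bound $\Omega(b)$ from each WN to its nearest active transmitter and checking that it survives the clustering of preservation regions, but since stronger exclusion only lowers interference, this direction is the benign one.
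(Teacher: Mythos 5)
Your proposal is correct and follows essentially the same route as the paper's proof: both bound the worst-case interference $I_i$ by summing over rings of 9-TDMA-active transmitters starting at the preservation-region boundary (your continuum annulus/density computation $\frac{n}{\log n}\,b^{2-\alpha}$ is exactly the paper's discrete sum $\sum_{j=n^{(1-\kappa)/2}}^{n}8j\bigl((3j-2)\sqrt{2\log n/n}\bigr)^{-\alpha}$ evaluated by integral comparison), then substitute $P_{\mathrm{MH}}$ so that $I_i = O\bigl(N_0\sqrt{\delta/l}\bigr)$ and hence $\frac{l}{2}(I_i/N_0)^2\leq\delta$. Your observation that the active-transmitter density $\Theta(n/\log n)$ (one transmitter per active cell) is what makes the $\log n$ factors cancel is also implicit in the paper's ring structure, so no gap remains.
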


\begin{proof}
We first derive an upper bound on $I_i$. 
\begin{figure}
 \centering
  {
  \includegraphics[width=80mm]{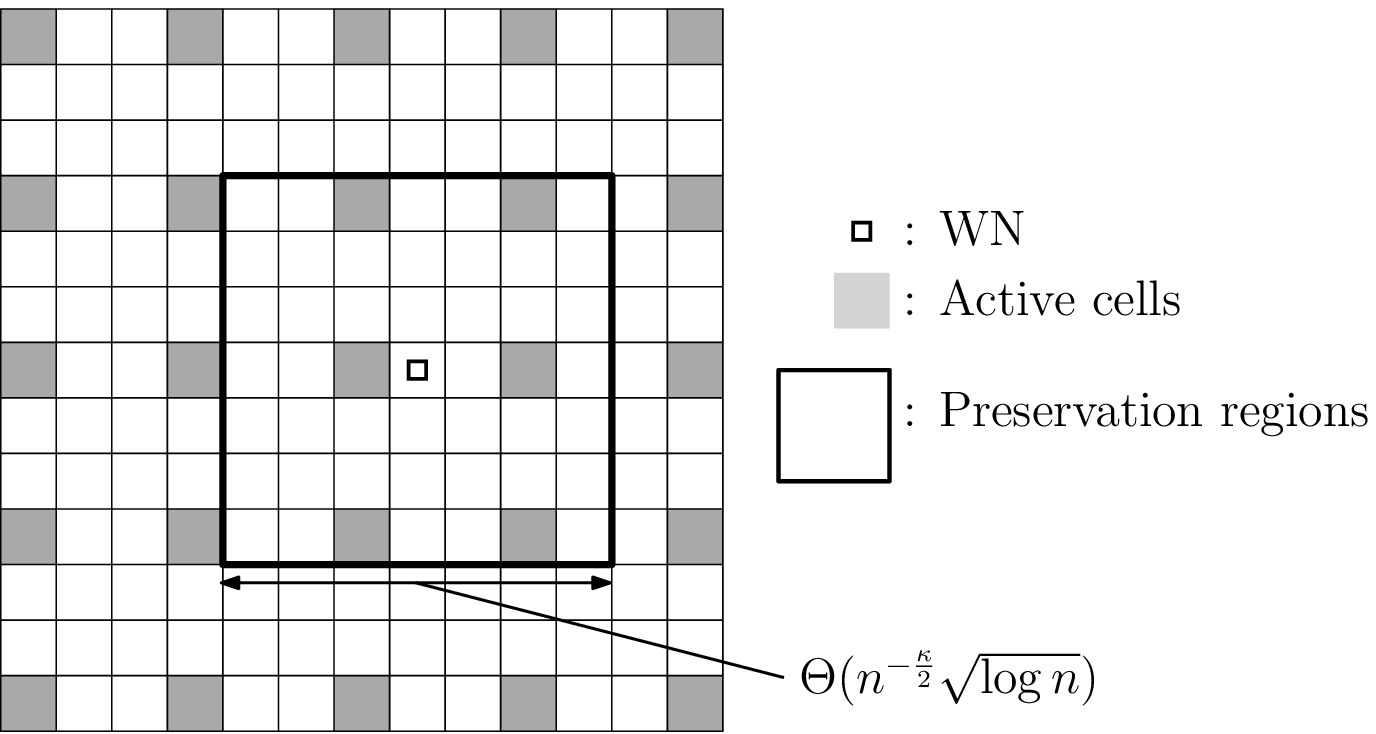}}
  \caption{Example of the worst case that the interference to a WN is strongest under the 9-TDMA scheme.} \label{fig:9TDMA}
\end{figure}
We consider the worst case scenario in which the interference to WN $i$ is the strongest. As shown in Fig.~\ref{fig:9TDMA}, this occurs when WN $i$ is adjacent to an active cell and the 9-TDMA scheme is employed. Then,
\begin{align}
 I_{i} &< GP_{\mathrm{MH}}\sum_{i=n^{\frac{1}{2}-\frac{\kappa}{2}}}^n8i\left((3i-2)\sqrt{\frac{2\log n}{n}}\right)^{-\alpha}\\ \label{eqn:sum}
&= GP_{\mathrm{MH}}\left(\sqrt{\frac{2\log n}{n}}\right)^{-\alpha}\sum_{i=n^{\frac{1}{2}-\frac{\kappa}{2}}}^n8i(3i-2)^{-\alpha}.
\end{align}
To calculate $\sum_{i=n^{\frac{1}{2}-\frac{\kappa}{2}}}^n8i(3i-2)^{-\alpha}$, we first show that this value tends to zeros as $n$ tends to infinity.
\begin{align}
\sum_{i=n^{\frac{1}{2}-\frac{\kappa}{2}}}^{n} 8i(3i-2)^{-\alpha}  \label{eqn:simplebound}
&\leq \sum_{i=n^{\frac{1}{2}-\frac{\kappa}{2}}}^{n} 8i(2i)^{-\alpha} \\
&= \sum_{i=n^{\frac{1}{2}-\frac{\kappa}{2}}}^{n} 2^{3-\alpha}\cdot i^{-\alpha+1} \\
&= 2^{3-\alpha}\sum_{i=n^{\frac{1}{2}-\frac{\kappa}{2}}}^{n} \frac{1}{i^{\alpha-1}} \\ \label{eqn:pseries}
&= 2^{3-\alpha} \left( \sum_{i=1}^{n}\frac{1}{i^{\alpha-1}} - \sum_{i=1}^{n^{\frac{1}{2}-\frac{\kappa}{2}}}\frac{1}{i^{\alpha-1}}   \right),
\end{align}
where \eqref{eqn:simplebound} holds if $n$ is sufficiently large. Recall that $\alpha > 2$. The two summations in \eqref{eqn:pseries} are $p$-series when $n$ tends to infinity and converge because $\alpha - 1 > 1$. Therefore, these two summations converge to the same value and thus the value in \eqref{eqn:pseries} converges to zero as $n$ tends to infinity. 

Instead of the exact expression of $\sum_{i=n^{\frac{1}{2}-\frac{\kappa}{2}}}^n8i(3i-2)^{-\alpha}$, we derive the order of $\sum_{i=n^{\frac{1}{2}-\frac{\kappa}{2}}}^n8i(3i-2)^{-\alpha}$. Because $8x(3x-2)^{-\alpha}$ is monotonically decreasing in $x$ for sufficiently large $x$ and $\alpha > 2$, we have
\begin{align}
\int_{n^{\frac{1}{2}-\frac{\kappa}{2}}}^{n}8x(3x-2)^{-\alpha}\, \mathrm{d}x  &< \sum_{i=n^{\frac{1}{2}-\frac{\kappa}{2}}}^n8i(3i-2)^{-\alpha} \\
&< \int_{n^{\frac{1}{2}-\frac{\kappa}{2}}}^{n}8(x-1)(3x-5)^{-\alpha}\, \mathrm{d}x.
\end{align}
We obtain the order of $\sum_{i=n^{\frac{1}{2}-\frac{\kappa}{2}}}^n8i(3i-2)^{-\alpha}$ by deriving the orders of the above two integrals. First consider the integral in the left side. When $n$ is sufficiently large,
\begin{align} 
\int_{n^{\frac{1}{2}-\frac{\kappa}{2}}}^{n}8x(4x)^{-\alpha}\, \mathrm{d}x &< \int_{n^{\frac{1}{2}-\frac{\kappa}{2}}}^{n}8x(3x-2)^{-\alpha}\, \mathrm{d}x \\
&< \int_{n^{\frac{1}{2}-\frac{\kappa}{2}}}^{n}8x(2x)^{-\alpha}\, \mathrm{d}x.
\end{align}
Thus, we have
\begin{align}
\int_{n^{\frac{1}{2}-\frac{\kappa}{2}}}^{n}8x(3x-2)^{-\alpha}\, \mathrm{d}x < c\int_{n^{\frac{1}{2}-\frac{\kappa}{2}}}^{n} x^{1-\alpha}\, \mathrm{d}x,
\end{align}
where $c$ is a constant independent of $\alpha$ and $n$. Furthermore,
\begin{align}
\int_{n^{\frac{1}{2}-\frac{\kappa}{2}}}^{n} x^{1-\alpha}\, \mathrm{d}x &= \left [ \frac{1}{2-\alpha} x^{2-\alpha} \right ]_{n^{\frac{1}{2}-\frac{\kappa}{2}}}^{n} \\
&= \frac{1}{2-\alpha} \left( n^{2-\alpha} - n^{(\frac{1}{2}-\frac{\kappa}{2})(2-\alpha)} \right) \\
&= \frac{1}{\alpha-2} \left( \frac{1}{n^{(\frac{1}{2}-\frac{\kappa}{2})(\alpha-2)}} - \frac{1}{n^{\alpha-2}} \right),
\end{align}
where $\frac{1}{n^{(\frac{1}{2}-\frac{\kappa}{2})(\alpha-2)}}$ dominates $\frac{1}{n^{\alpha-2}}$ if $n$
is sufficiently large. Therefore, we have
\begin{align} \label{eqn:leftintegral}
\int_{n^{\frac{1}{2}-\frac{\kappa}{2}}}^{n}8x(3x-2)^{-\alpha}\, \mathrm{d}x = \Theta\left(\frac{1}{n^{(\frac{1}{2}-\frac{\kappa}{2})(\alpha-2)}}\right).
\end{align}
The case of $\int_{n^{\frac{1}{2}-\frac{\kappa}{2}}}^{n}8(x-1)(3x-5)^{-\alpha}\, \mathrm{d}x$ can be done similarly, and we also have
\begin{align} \label{eqn:rightintegral}
\int_{n^{\frac{1}{2}-\frac{\kappa}{2}}}^{n}8(x-1)(3x-5)^{-\alpha}\, \mathrm{d}x = \Theta\left(\frac{1}{n^{(\frac{1}{2}-\frac{\kappa}{2})(\alpha-2)}}\right).
\end{align}
Therefore, we can conclude that
\begin{align}
\sum_{i=n^{\frac{1}{2}-\frac{\kappa}{2}}}^n8i(3i-2)^{-\alpha} = \Theta\left(\frac{1}{n^{(\frac{1}{2}-\frac{\kappa}{2})(\alpha-2)}}\right),
\end{align}
and thus
\begin{align} \label{eqn:iub}
I_{i} < \frac{cG P_{\mathrm{MH}}}{n^{(\frac{1}{2}-\frac{\kappa}{2})(\alpha-2)}}\left(\sqrt{\frac{2\log n}{n}}\right)^{-\alpha},
\end{align}
for sufficiently large $n$ and a constant $c$ independent of $\alpha$, $\kappa$, $P_{\mathrm{MH}}$, and $n$. 

To upper bound $I_i$ by $\sqrt{2}N_0\sqrt{\frac{\delta}{l}}$, we instead upper bound the right hand side in \eqref{eqn:iub} by $\sqrt{2}N_0\sqrt{\frac{\delta}{l}}$. This gives
\begin{align}
P_{\mathrm{MH}} \leq k\cdot n^{(\frac{1}{2}-\frac{\kappa}{2})(\alpha-2)}\sqrt{\frac{\delta}{l}}\left(\sqrt{\frac{2\log n}{n}}\right)^{\alpha},
\end{align}
where $k$ is a constant independent of $\alpha$, $\kappa$, $\delta$, $l$, and $n$.
\end{proof}

\subsubsection{An Achievable Throughput Scaling}\label{subsubsec:throughputMH}
Finally, we obtain an achievable throughput scaling of the detoured MH scheme. If each LN uses the transmit power $P_{\mathrm{MH}}$ as given in Lemma \ref{lem:power}, we may further lower bound the rate in a hop (continuing from Lemma \ref{lem:cellrate}) as follows: 
\begin{align}\label{eqn:rate}
R_{\mathrm{hop}} \geq \frac{1}{9} \log \left(1 + \frac{k'  \frac{n^{(\frac{1}{2}-\frac{\kappa}{2})(\alpha-2)}}{\sqrt{l}}}{N_0 + k^{''}  \frac{n^{(\frac{1}{2}-\frac{\kappa}{2})(\alpha-2)}}{\sqrt{l}}}\right) ,
\end{align}
where $k'$ and $k^{''}$ are constants independent of $\alpha$, $\kappa$, $l$, and $n$. 

When $\frac{n^{(\frac{1}{2}-\frac{\kappa}{2})(\alpha-2)}}{\sqrt{l}}$ vanishes as $n$ tends to infinity, the received power in each LN is weak compared to the noise level and hence $R_{\mathrm{hop}} = \Theta \left( \frac{n^{(\frac{1}{2}-\frac{\kappa}{2})(\alpha-2)}}{\sqrt{l}} \right)$ because $\log(1+x) = x + o(x^2)$ as $x \to 0$. For the other case, $R_{\mathrm{hop}}$ is a constant.

Then, an achievable throughput scaling of the detoured MH scheme is given in the following lemma.
\begin{lemma}\label{lem:MH}
Fix $\epsilon > 0$. The following aggregate throughput $T(n,\kappa)$ is achievable under the covertness constraint w.h.p.:
\begin{align}
T(n, \kappa) = \Omega \left(  n^{\frac{1}{2}-\epsilon} \cdot \frac{n^{(\frac{1}{2}-\frac{\kappa}{2})(\alpha-2)}}{\sqrt{l}}  \right),
\end{align}
if $\frac{n^{(\frac{1}{2}-\frac{\kappa}{2})(\alpha-2)}}{\sqrt{l}}$ vanishes as $n$ tends to infinity, and
\begin{align}
T(n, \kappa) =\Omega \left( n^{\frac{1}{2}-\epsilon} \right),
\end{align}
if $\frac{n^{(\frac{1}{2}-\frac{\kappa}{2})(\alpha-2)}}{\sqrt{l}}$ does not vanish as $n$ tends to infinity. 
\end{lemma}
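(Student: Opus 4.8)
The plan is to combine the per-hop rate bound \eqref{eqn:rate} with a standard counting argument on the number of simultaneous S--D paths that the network can support, following the classical multi-hop analysis of \cite{Gupta-Kumar:00} but accounting for the detouring overhead established in Lemma \ref{lem:detouringmethod}. First I would fix the transmit power at $P_{\mathrm{MH}}$ as in Lemma \ref{lem:power}, so that the covertness constraint \eqref{eqn:covertness} is satisfied and the per-hop rate $R_{\mathrm{hop}}$ obeys \eqref{eqn:rate}. The key quantity to estimate is how many distinct data paths a single cell must relay, since the aggregate throughput is essentially (number of S--D pairs carried) $\times R_{\mathrm{hop}}$, divided by the per-cell relaying load. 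By Lemma \ref{lem:detouringmethod}, each cell carries at most $\Theta(\sqrt{n}(\log n)^{3/2})$ paths w.h.p., so each path gets a fraction $\Theta\bigl(R_{\mathrm{hop}}/(\sqrt{n}(\log n)^{3/2})\bigr)$ of a cell's transmission opportunities.

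Next I would multiply through: with $n_{\mathrm{l}}=\Theta(n)$ source--destination pairs (Remark 1) each achieving per-pair rate $R(n,\kappa)=\Theta\bigl(R_{\mathrm{hop}}/(\sqrt{n}(\log n)^{3/2})\bigr)$, the aggregate throughput is $T(n,\kappa)=\Theta(n_{\mathrm{l}}R(n,\kappa)/2)=\Theta\bigl(\sqrt{n}\,R_{\mathrm{hop}}/(\log n)^{3/2}\bigr)$. Here I would invoke the asymptotics of $R_{\mathrm{hop}}$ already worked out in the text just before the lemma: when $\frac{n^{(\frac{1}{2}-\frac{\kappa}{2})(\alpha-2)}}{\sqrt{l}}$ vanishes, $\log(1+x)=x+o(x^2)$ gives $R_{\mathrm{hop}}=\Theta\bigl(\frac{n^{(\frac{1}{2}-\frac{\kappa}{2})(\alpha-2)}}{\sqrt{l}}\bigr)$, whereas when it does not vanish, $R_{\mathrm{hop}}=\Theta(1)$. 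Substituting these two cases yields $T(n,\kappa)=\Theta\bigl(\frac{\sqrt{n}}{(\log n)^{3/2}}\cdot\frac{n^{(\frac{1}{2}-\frac{\kappa}{2})(\alpha-2)}}{\sqrt{l}}\bigr)$ and $T(n,\kappa)=\Theta\bigl(\frac{\sqrt{n}}{(\log n)^{3/2}}\bigr)$ respectively. Finally, the polylogarithmic factor $(\log n)^{3/2}$ (and any additional $\log n$ factors from the cell area $\frac{2\log n}{n}$) is absorbed into the $n^{-\epsilon}$ slack, since for any fixed $\epsilon>0$ we have $(\log n)^{3/2}=O(n^{\epsilon})$; this converts the $\Theta$ estimates into the stated $\Omega(n^{\frac{1}{2}-\epsilon}\cdots)$ bounds.

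I expect the main technical obstacle to be bookkeeping the relaying load correctly rather than any deep inequality. Specifically, one must be careful that the $\Theta(\sqrt{n}(\log n)^{3/2})$ bound from Lemma \ref{lem:detouringmethod} already folds in both the nominal $\Theta(\sqrt{n/\log n})$ number of cells a typical path traverses and the worst-case detouring inflation near clusters of preservation regions; double-counting these would corrupt the exponent. I would also need to confirm that the $9$-TDMA scheduling (which contributes only the constant factor $\tfrac{1}{9}$ in $R_{\mathrm{hop}}$) together with the per-cell path count indeed allocates each path a throughput of order $R_{\mathrm{hop}}$ divided by the per-cell load, i.e.\ that time-sharing among the paths through a cell is the binding constraint. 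The remaining steps---the two-case expansion of $R_{\mathrm{hop}}$ and the absorption of polylog factors into $n^{-\epsilon}$---are routine, and all the probabilistic ``w.h.p.'' statements are inherited directly from Lemmas \ref{lem:num_cell}, \ref{lem:num_pre}, and \ref{lem:detouringmethod} via a union bound over the finitely many high-probability events.
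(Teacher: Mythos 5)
Your proposal is correct and follows essentially the same route as the paper, whose proof of this lemma is a single sentence declaring it ``straightforward from Lemma \ref{lem:detouringmethod} and \eqref{eqn:rate}''; your argument---per-pair rate bounded below by $R_{\mathrm{hop}}$ divided by the maximum per-cell relaying load $\Theta(\sqrt{n}(\log n)^{3/2})$, multiplied by $\Theta(n)$ S--D pairs, with the two-case expansion of $R_{\mathrm{hop}}$ and polylogarithmic factors absorbed into $n^{-\epsilon}$---is exactly the computation the paper leaves implicit. The only minor imprecision is stating the per-pair rate as a $\Theta$ rather than an $\Omega$, which is immaterial since the lemma only claims a lower bound.
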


\begin{proof}
The proof is straightforward from Lemma \ref{lem:detouringmethod} and \eqref{eqn:rate}.
\end{proof}

We observe that when $l$ is small relative to the number of LNs, the same throughput scaling of the MH without the covertness constraint can be achieved.

\subsection{Modified HC Scheme}\label{subsec:HC}
In this subsection, we present our modified HC scheme. We first summarize the original HC scheme proposed in \cite{Ozgur:07} and present a fundamental problem that is generated when one utilizes the original HC scheme in the covert communication scenario. We propose several modifications to the original HC scheme that can be the solution to the fundamental problem mentioned and at the same time do not degrade the asymptotic performance of the original scheme, and yet achieve the upper bounds on the throughput.
\subsubsection{Original HC Scheme}\label{subsubsec:originalHC}
There are two key ideas in the original HC scheme: Utilizing distributed MIMO systems by node cooperation and constructing the hierarchical structure of the distributed MIMO systems. We first summarize the application of the distributed MIMO system in a network unit square area as follows:
\begin{itemize}
\item Divide the network into square clusters with area of $\frac{M}{n}$ where $M$ will be optimized later.
\item \textbf{Phase 1}: In each cluster, each source node distributes its data to the other nodes in the cluster according to a certain network protocol. Clusters work in parallel.
\item \textbf{Phase 2}: The distributed data of each source node is transmitted to the cluster that contains the corresponding destination node by the distributed MIMO transmission between the two clusters that contain the source and destination nodes. These MIMO transmissions are performed serially until every source node finishes its transmission. In this phase, because each node operates for an $\frac{M}{n}$ fraction of the time, it uses bursty transmit power for the MIMO transmissions when a certain average transmit power constraint is given. 
\item \textbf{Phase 3}: In each cluster, each destination node collects the MIMO observations from the nodes in the same cluster and decodes the data. The protocol in phase 1 is used, and clusters work in parallel.
\end{itemize}
By optimizing $M$, the above procedure constitutes a new network protocol that can achieve a higher throughput scaling than that of the protocol that is used in phases 1 and 3. 

This procedure can be repeated and thus by applying this procedure hierarchically, the HC scheme is constructed. It turns out that the HC scheme requires an average transmit power of $\frac{P}{n}$ to achieve (almost) linear scaling. If the available average transmit power is less than $\frac{P}{n}$, one approach suggested in \cite{Ozgur:07} is to use a {\em bursty} HC scheme that allows bursty transmission, i.e., run the HC scheme for a fraction of time and stay idle for the remaining fraction to satisfy the average transmit power constraint of $< \frac{P}{n}$.

\subsubsection{Bursty Transmission Increases the Detectability of the Covert Communication}\label{subsubsec:bursty}
Under the covertness constraint \eqref{eqn:covertness}, the transmit powers of the LNs are constrained, and the resulting average transmit power constraint is more stringent than the average transmit power constraint of $<\frac{P}{n}$. Under this condition, the existing bursty HC scheme can be used. However, using the bursty transmit power is not a suitable approach in the covert communication setup. 

Similarly to the case of the detoured MH scheme, assume that each LN uses a codebook generated according to the distribution $\mathcal{CN}(0,P_{\mathrm{HC}})$. Then, at WN $i$, we have
\begin{align} 
D\left(\left. \hat{Q}_{Z_i^l} \right \| Q_{N_i'}^{\times l}\right) \label{eqn:first}
&= \sum_{t=1}^l D\left(\left. \hat{Q}_{Z_{i,t}} \right \| Q_{N_i'}\right) \\ \label{eqn:hclast}
&\leq \frac{1}{2}\sum_{t=1}^l\left( \frac{I_{i,t}}{N_0}\right)^2,
\end{align}
where $I_{i,t}$ is the power of the interference received by WN $i$ at time $t$. As shown in \eqref{eqn:hclast}, the upper bound on the relative entropy at each time slot in \eqref{eqn:first} is proportional to the square of the power of the interference. This shows that using the transmit power in bursty sense significantly increases the detectability in the covert communication scenario.
As mentioned above, the bursty transmission in the HC scheme is not suitable for covert network communication in adhoc networks. Thus, instead of using the bursty HC scheme, in our modified HC scheme, each LN uses the transmit power as regularly as possible over time. To realize this, we generalize \cite[Lemma~3.1]{Ozgur:07} that considers the case in which an average transmit power constraint of $<\frac{P}{n}$ is given to the case in which a more stringent average transmit power constraint is imposed. By this generalization, we can verify that by using a low transmit power, i.e., using a codebook generated according to a zero mean complex Gaussian distribution with a small variance, it is possible to achieve the same throughput scaling that can be achieved by the bursty HC scheme. Details are given in Section \ref{subsubsec:generalization}. 

Even if we do not use the bursty HC scheme, the MIMO transmissions from each cluster in phase 2 are operated in a bursty sense, and this increases the detectability. Our solution consists of two steps. We first distribute the MIMO transmisssions of each cluster evenly over phase 2. Second, since the LNs still use the bursty transmit power for MIMO transmission, we decrease the power of the MIMO transmissions up to a certain level so that the covertness constraint in \eqref{eqn:covertness} is satisfied. We observe that this does not influence the throughput scaling. The details are presented in Sections \ref{subsubsec:scheduling} and \ref{subsubsec:MIMOpower}.      

\subsubsection{Generalization of \cite[Lemma~3.1]{Ozgur:07}}\label{subsubsec:generalization}
The paper \cite{Ozgur:07} considers a network where the nodes are distributed according to a uniform distribution. However, the results in \cite{Ozgur:07} also hold in a network in which the nodes are distributed according to a p.p.p. Before generalizing \cite[Lemma~3.1]{Ozgur:07}, we restate \cite[Lemma~3.1]{Ozgur:07} for a network where $n_{\mathrm{l}}$ LNs are distributed according to a p.p.p. of density $n$. 

\begin{lemma}[\"{O}zg\"{u}r-L\'{e}v\^{e}que-Tse]\label{lem:Ozgurlemma}
Consider a network with $n_{\mathrm{l}}$ LNs and $\alpha > 2$ where the received signal of LN $i$ is given as  
\begin{align}
Y_i = \sum_{k=1}^{n_{\mathrm{l}}}H_{ik}X_k + N_i + V_i,
\end{align}
where $V_i$ is the interference signal from external sources received by LN $i$. The $V_i$ for $i = 1,2,...,n_{\mathrm{l}}$ are uncorrelated zero-mean stationary and ergodic random processes of which powers are upper bounded by a constant $V$ independent of $n$. 

Assume that there is a network scheme achieving the throughput scaling of $\Theta(n^b)$ w.h.p. for $0\leq b < 1$ under the average transmit power constraint of $<\frac{P}{n}$. Then, the increased throughput scaling of $\Theta(n^{\frac{1}{2-b}})$ is also achievable w.h.p. under the average transmit power constraint of $<\frac{P}{n}$ by constructing a new network scheme.
\end{lemma}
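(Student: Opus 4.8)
The plan is to reproduce the three-phase distributed-MIMO construction of \cite{Ozgur:07} and to re-derive each scaling estimate so that it survives the two new features of the present setting: the Poisson (rather than deterministic) placement of the $n_{\mathrm l}$ LNs, and the external interference $V_i$. First I would partition the unit square into $\Theta(n_{\mathrm l}/M)$ square clusters of area $\frac{M}{n}$, where $M$ is a free parameter fixed only at the very end. In each super-round the new scheme delivers a block of $\Theta(M)$ symbols per source through the usual three phases: Phase 1, in which each source spreads its block across the $\approx M$ nodes of its own cluster (an intra-cluster permutation-traffic problem); Phase 2, in which each source cluster and the matching destination cluster act as an $M\times M$ distributed-MIMO link, with the links scheduled across cluster pairs; and Phase 3, in which each destination cluster forwards its quantized MIMO observations to the intended destination for joint decoding. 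Phases 1 and 3 are handled by invoking the assumed scheme at scale $M$ (aggregate throughput $\Theta(M^b)$, hence per-node rate $\Theta(M^{b-1})$), so that each of them costs $\Theta(M^{2-b})$ channel uses, while Phase 2 costs $\Theta(n_{\mathrm l})$ channel uses. Balancing $M^{2-b}\asymp n_{\mathrm l}$ then gives $M=\Theta(n^{1/(2-b)})$ and aggregate throughput $\Theta(n^{1/(2-b)})$; this optimization is identical to \cite{Ozgur:07} and I would treat it as routine, including the power accounting that certifies the $<\frac{P}{n}$ average constraint because each node is active only a vanishing fraction of the time.

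The first genuinely new step is to replace the deterministic ``$M$ nodes per cluster'' guarantee of \cite{Ozgur:07} by a high-probability statement for the p.p.p. A cluster of area $\frac{M}{n}$ contains a Poisson random variable of rate $M$ LNs, so I would apply the tail bound \eqref{eqn:inequality} together with a union bound over the $\Theta(n_{\mathrm l}/M)$ clusters --- exactly as in the proof of Lemma \ref{lem:num_cell} --- to conclude that every cluster contains $\Theta(M)$ LNs w.h.p. Since the assumed scheme is already stated for p.p.p.\ networks, it applies verbatim to each cluster (conditioned on its count, a smaller network of the same type) during Phases 1 and 3. Once this regularity holds, all the counting arguments --- the number of simultaneous MIMO streams, the per-node relaying load in Phases 1 and 3, and the spatial-reuse factor in Phase 2 --- go through with only constant-factor changes, so the throughput exponent is unaffected.

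The second and main new step is to propagate the external interference $V_i$ through all three phases. Because the $V_i$ are zero-mean, uncorrelated across nodes, stationary and ergodic with power bounded by the $n$-independent constant $V$, I would absorb $N_i+V_i$ into an effective noise whose per-node power is still $\Theta(N_0)$; in Phases 1 and 3 this merely lowers the operating SNR of the inner scheme by a constant factor and hence leaves its $\Theta(M^b)$ aggregate intact. The delicate point is Phase 2: I would lower-bound the distributed-MIMO mutual information when the additive disturbance has covariance $N_0\bI+\mathrm{Cov}(\bV)$, whose eigenvalues lie in the bounded interval $[N_0,\,N_0+cV]$. Treating this disturbance as a worst-case Gaussian and whitening it costs only a bounded additive term per spatial stream, so the link still supports $\Theta(M)$ degrees of freedom and Phase 2 retains its $\Theta(n_{\mathrm l})$ cost. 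Establishing this uniform, $n$-independent control of the noise-plus-interference covariance --- so that neither the MIMO degrees-of-freedom count nor the fixed per-node quantization rate used in Phase 3 degrades with $n$ --- is the part I expect to require the most care, and is where the generalization beyond \cite{Ozgur:07} really lives; everything else follows by matching the estimates of the original proof.
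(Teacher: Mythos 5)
Your proposal is correct and follows exactly the three-phase clustering construction (parallel intra-cluster distribution, serial distributed-MIMO transmissions, quantize-and-forward decoding with the balance $M^{2-b}\asymp n$) that the paper relies on: the paper offers no proof of this lemma beyond the citation ``See \cite{Ozgur:07} for details,'' and reuses this same skeleton in its proof of Lemma \ref{lem:general}. Your two additions---the Poisson concentration argument over clusters (which mirrors Lemma \ref{lem:num_cell}) and the treatment of $V_i$ as bounded-power effective noise---are faithful to how \cite{Ozgur:07} itself handles these issues (the external interference term is already part of the original Lemma~3.1 there, to enable the recursion), so no substantive divergence or gap exists.
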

See \cite{Ozgur:07} for details.
By applying the above lemma iteratively, one can achieve a throughput scaling of $\Theta(n^{1-\epsilon})$ w.h.p., where $\epsilon > 0$ is an arbitrarily small constant. Consider the more stringent average transmit power constraint than the average power constraint of $<\frac{P}{n}$, i.e., power constraint of $<\frac{P}{n^{\gamma}}$ for $\gamma > 1$. Then, by using an approach in \cite{Ozgur:07} based on the bursty HC scheme, this scheme achieves a throughput scaling of $\Theta(n^{2-\gamma-\epsilon})$ where $\epsilon > 0$ is an arbitrarily small constant.

Because the bursty transmission increases the detectability of the covert communication, we do not use the bursty HC scheme but generalize \cite[Lemma~3.1]{Ozgur:07} to the case with a more stringent power constraint. 

\begin{lemma}[Generalization]\label{lem:general}
Consider a network with $n_{\mathrm{l}}$ LNs and $\alpha > 2$ where the received signal of LN $i$ is given as  
\begin{align}
Y_i = \sum_{k=1}^{n_{\mathrm{l}}}H_{ik}X_k + N_i + V_i,
\end{align}
where $V_i$ is defined similarly as in Lemma \ref{lem:Ozgurlemma}. 

Assume that there is a network scheme achieving the throughput scaling of $\Theta(n^b)$ w.h.p. for $-\infty < b < 1$ under an average transmit power constraint of $<\frac{P}{n^{\gamma}}$ for $ \gamma \geq 1$. Then, the increased throughput scaling of $\Theta(n^{\frac{\gamma}{2-b}+1-\gamma})$ is also achievable w.h.p. under the average transmit power constraint of $\frac{P}{n^{\gamma}}$ by constructing a new transmission scheme for the network. 
\end{lemma}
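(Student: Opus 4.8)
The plan is to mirror the hierarchical three-phase construction underlying Lemma~\ref{lem:Ozgurlemma} (the restatement of \cite[Lemma~3.1]{Ozgur:07}), but to carry the power exponent $\gamma$ explicitly through every rate and time estimate. I would partition the $n_{\mathrm{l}}$ LNs into $n/M$ square clusters of area $M/n$, for a parameter $M$ to be optimized; by Lemma~\ref{lem:num_cell} each cluster contains $\Theta(M)$ LNs w.h.p., so the p.p.p.\ model behaves like the uniform model of \cite{Ozgur:07}. Each cluster then runs the familiar three phases: Phase~1 distributes every source's block across the $M$ nodes of its cluster, Phase~2 performs the long-range distributed-MIMO transmissions between source and destination clusters, and Phase~3 collects and decodes inside each destination cluster. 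To deliver $Q$ bits per source I track the duration of each phase.

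For Phases~1 and~3 I would invoke the assumed scheme inside each cluster. The key bookkeeping point is that, after rescaling a cluster of area $M/n$ to unit area, the short-range transmissions see the same nearest-neighbour geometry as a fresh $M$-node network, and the per-node budget $P/n^{\gamma}$ maps to an in-cluster budget of order $P/M^{\gamma}$; this self-similarity is exactly what lets the hypothesis ``throughput $\Theta(n^{b})$ under power $P/n^{\gamma}$'' be reused at scale $M$ to give aggregate cluster throughput $\Theta(M^{b})$. Running all clusters in parallel generates mutual interference, which I would absorb into the external-interference term $V_i$ and bound by a constant using $\alpha>2$ (the inter-cluster interference sum converges, just as in Lemma~\ref{lem:insn}). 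Moving the $MQ$ in-cluster bits at aggregate rate $\Theta(M^{b})$ then costs $\tau_1=\tau_3=\Theta(Q M^{1-b})$ channel uses.

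Phase~2 is the crux and the place where $\gamma$ actually enters. Each node is active only during its own cluster's MIMO transmissions, a fraction $\Theta(M/n)$ of Phase~2, so to respect the average budget $P/n^{\gamma}$ the per-symbol (bursty) MIMO power is $\Theta\big(P/(n^{\gamma-1}M)\big)$. Summing the $M$ cooperating transmit antennas over an inter-cluster distance of order $1$ gives a per-receive-antenna $\SNR$ of order $n^{1-\gamma}$. When $\gamma=1$ this is $\Theta(1)$ and recovers the degrees-of-freedom-limited rate $\Theta(M)$ of \cite{Ozgur:07}; for $\gamma>1$ it is vanishing, so the distributed-MIMO link is power-limited and, using the Marchenko--Pastur-type bound that the eigenvalues of the normalized channel Gram matrix are $\Theta(M)$, its sum rate is $C_{\mathrm{MIMO}}=\Theta\big(M\,n^{1-\gamma}\big)$. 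Serializing the $n$ source transmissions, each delivering $Q$ bits over one cluster-pair link, then gives $\tau_2=\Theta\big(nQ/C_{\mathrm{MIMO}}\big)=\Theta\big(n^{\gamma}Q/M\big)$.

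Finally I would balance the phase durations. The aggregate throughput of the constructed scheme is $\Theta\big(nQ/(\tau_1+\tau_2+\tau_3)\big)=\Theta\big(n/(M^{1-b}+n^{\gamma}/M)\big)$, which is maximized by choosing $M^{2-b}=n^{\gamma}$, i.e.\ $M=n^{\gamma/(2-b)}$. Substituting yields $\Theta\big(n^{1-\gamma(1-b)/(2-b)}\big)=\Theta\big(n^{\frac{\gamma}{2-b}+1-\gamma}\big)$, the claimed scaling, and specializing to $\gamma=1$ reproduces the $\Theta(n^{1/(2-b)})$ of Lemma~\ref{lem:Ozgurlemma}. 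The main obstacle I anticipate is the Phase~2 analysis: rigorously establishing the power-limited rate $\Theta(Mn^{1-\gamma})$ (conditioning of the distributed channel matrix and control of the interference among simultaneously active cluster pairs), together with verifying the self-similar power rescaling so that the in-cluster sub-problem genuinely inherits the $P/M^{\gamma}$ hypothesis. I would also check the boundary conditions: the extension to $b<0$ is needed so the recursion can be seeded by a very-low-power base scheme, and feasibility requires $M=n^{\gamma/(2-b)}\le n$, i.e.\ $\gamma\le 2-b$, which holds in the regime of interest.
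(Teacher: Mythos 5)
Your proposal is correct and follows essentially the same route as the paper's proof: cluster the network into squares of area $M/n$, run the assumed scheme in phases 1 and 3, scale the bursty phase-2 MIMO power down by $n^{1-\gamma}$ (your $P/(n^{\gamma-1}M)$ per node) so that the power-limited linear MIMO scaling $\Theta(Mn^{1-\gamma})$ applies, and balance the phase durations to get $M=n^{\gamma/(2-b)}$ and the claimed exponent. The paper's proof is exactly this accounting (with $Q=\Theta(Mn^{1-\gamma})$ bits per node), delegating the MIMO-gain and interference details to \cite[Appendix~I]{Ozgur:07}, just as you do.
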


\begin{proof}
The proof is similar to the proof for \cite[Lemma~3.1]{Ozgur:07}. Divide the network into square clusters with area of $\frac{M}{n}$. In phases $1$ and $3$, the network scheme achieving the throughput scaling of $\Theta(n^b)$ is used. In phase 2, we multiply the power of the MIMO transmissions in \cite[Lemma~3.1]{Ozgur:07} by $n^{1-\gamma}$ and adapt the rate. Then, the linear scaling law \cite[Appendix~I]{Ozgur:07} for the MIMO gain still holds. Assume that each LN wants to send $\Theta(Mn^{1-\gamma})$ bits. Then, $\Theta(M^{2-b}n^{1-\gamma})$ time slots are needed in phases 1 and 3, and $\Theta(n)$ time slots are needed in phase 2. The resulting throughput is $\frac{c_1 \cdot Mn^{2-\gamma}}{c_2 \cdot M^{2-b}n^{1-\gamma} + c_3 \cdot n}$, where $c_1$, $c_2$, and $c_3$ are constants independent of $n$. If we set $M = n^{\frac{\gamma}{2-b}}$, the resulting throughput scaling is $\Theta(n^{\frac{\gamma}{2-b}+1-\gamma})$.    
\end{proof}

We can see that the case of $\gamma = 1$ corresponds to Lemma \ref{lem:Ozgurlemma}. If we apply this lemma iteratively, the exponents over $n$ of the achievable throughput scalings increase as $b$, $\frac{\gamma}{2-b} + 1 -\gamma$, $\frac{\gamma + \gamma(1-b)}{2-b+\gamma(1-b)}+1-\gamma$, $\frac{\gamma + \gamma(1-b) + \gamma^2(1-b)}{2-b + \gamma(1-b) + \gamma^2(1-b)} + 1 - \gamma$,$...$, $\frac{\gamma + \sum_{k}\gamma^{k}(1-b)}{2-b + \sum_{k}\gamma^{k}(1-b)} + 1 - \gamma$,$...$, and
\begin{align}
\frac{\gamma + \sum_{k}\gamma^{k}(1-b)}{2-b + \sum_{k}\gamma^{k}(1-b)} = 1 - \frac{2-b-\gamma}{2-b + \sum_{k}\gamma^{k}(1-b)}.
\end{align} 
The above term tends to one as $k$ increases and thus the throughput scaling of $\Theta(n^{2-\gamma-\epsilon})$ is achievable. We present an example of this lemma in the following.

\begin{example}
Consider a na\"ive TDMA scheme in which one LN transmits its data to its destination at one time slot using instantaneous transmit power $\frac{P}{n^{\gamma-1}}$. Then, an aggregate throughput scaling of $\Theta(n^{1-\gamma})$ is achievable. By applying Lemma \ref{lem:general} repeatedly, the exponents over $n$ of the achievable throughput scalings increase as $1-\gamma$, $\frac{\gamma}{1+\gamma} + 1 - \gamma$, $\frac{\gamma + \gamma^2}{1+\gamma + \gamma^2} + 1 - \gamma$,$...$, and these values tend to $2-\gamma$.
\end{example}

\subsubsection{Scheduling of the MIMO Transmissions}\label{subsubsec:scheduling}
\begin{figure}
 \centering
  {
  \includegraphics[width=90mm]{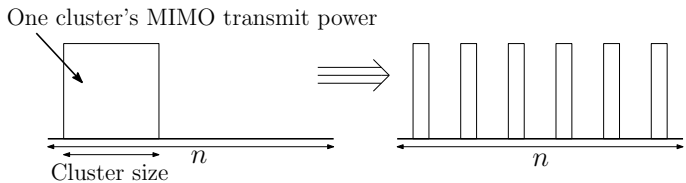}}
  \caption{Distribute the MIMO transmissions from each cluster evenly in phase 2.} \label{fig:mimo_ex}
\end{figure}
Although we do not use the bursty HC scheme, the MIMO transmissions in phase 2 are performed in a bursty sense and thus the detectability is still high. To solve this problem, we first want to avoid the scenario in which a certain WN receives the interference by the MIMO transmissions from a certain cluster for $l$ consecutive transmissions. Thus, we distribute the $M$ MIMO transmissions from each cluster as evenly as possible over $n$ time slots i.e., in every window of $l$ time slots, approximately $\frac{lM}{n}$ MIMO transmissions from each cluster are as shown in Fig. \ref{fig:mimo_ex}.

\subsubsection{Decreasing the Power of the MIMO Transmissions does not Degrade the Throughput Scaling}\label{subsubsec:MIMOpower} 
We can cancel out the increase of the detectability due to the bursty MIMO transmission by decreasing the power of the MIMO transmissions, i.e., multiply the original power for MIMO transmission by $\sqrt{\frac{M}{n}}$ where $M$ is the size of a cluster. One expects that this degrades the performance of the network scheme. However, we now show that this does not degrade the achievable throughput scaling.
\begin{lemma}\label{lem:scalingdown}
In phase 2 of the HC scheme, assume that each cluster contains approximately $M$ LNs. Then, multiplying the power of the MIMO transmissions by $\sqrt{\frac{M}{n}}$ does not degrade the achievable throughput scaling. 
\end{lemma}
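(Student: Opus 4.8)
The plan is to show that scaling the MIMO transmit power down by a factor of $\sqrt{M/n}$ in phase 2 keeps the per-node rate in the bottleneck phase at the same order, so the overall throughput scaling (which is governed by the recursion in Lemma \ref{lem:general}) is unchanged. The key observation I would exploit is that the distributed MIMO channel between two clusters of $M$ nodes each enjoys a \emph{linear} array gain: the achievable sum rate of the $M \times M$ MIMO link scales as $\Theta(M)$ times the per-stream $\log(1+\SNR)$, as established in \cite[Appendix~I]{Ozgur:07}. The question is therefore purely about how the per-stream $\SNR$ behaves when the symbol power is multiplied by $M/n$ (a factor of $\sqrt{M/n}$ on the amplitude).

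First I would recall the power budget: to respect the tightened average-power constraint, phase-2 symbols are already transmitted in a bursty fashion, and decreasing the instantaneous power by the factor $M/n$ exactly corresponds to spreading each cluster's $M$ transmissions evenly over the $n$ slots (the scheduling of Section \ref{subsubsec:scheduling}), so that the \emph{average} power is preserved. Second, I would quantify the resulting per-stream received $\SNR$ on the inter-cluster MIMO link and compare it against the noise-plus-interference floor. Because the original HC scheme already operates in a regime where the relevant per-stream $\SNR$ is of constant order (or is captured by the $\log(1+\SNR)$ term that appears inside the linear-gain expression), reducing the symbol power by a polynomial-in-$(M/n)$ factor only shifts the argument of the logarithm; provided this argument remains bounded below by a positive constant—or, in the low-$\SNR$ regime, the $\log(1+x)\approx x$ linearization applies—the sum rate stays $\Theta(M)$ and the number of time slots needed in phases 1, 2, and 3 is unaffected. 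Plugging the unchanged slot counts back into the throughput expression $\frac{c_1 M n^{2-\gamma}}{c_2 M^{2-b} n^{1-\gamma} + c_3 n}$ from the proof of Lemma \ref{lem:general} then yields the same optimized exponent.

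The main obstacle I anticipate is controlling the per-stream $\SNR$ carefully enough to guarantee that the logarithmic factor does not collapse. Specifically, one must verify that after multiplying the power by $M/n$, the effective $\SNR$ of each MIMO stream either remains $\Omega(1)$ or, if it vanishes, does so slowly enough (at most polynomially) that the linear approximation $\log(1+x)=\Theta(x)$ still leaves the \emph{product} of the array gain and the per-stream rate at the order required by the recursion. This is exactly the kind of bookkeeping done for $R_{\mathrm{hop}}$ in \eqref{eqn:rate}, where the vanishing short-range $\SNR$ entered as a clean linear factor; I would mirror that argument here, tracking how the $\sqrt{M/n}$ amplitude scaling interacts with the distance-based path loss between clusters. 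A secondary technical point is confirming that the aggregate interference at each receiving cluster, now also scaled down, continues to satisfy the bound needed to treat it as bounded-power noise as in Lemma \ref{lem:Ozgurlemma}, so that the linear-gain result of \cite[Appendix~I]{Ozgur:07} remains applicable.
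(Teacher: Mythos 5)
There is a genuine gap at the center of your plan. Your argument rests on the claim that after the power reduction the per-stream $\SNR$ either stays $\Omega(1)$ or, ``in the low-$\SNR$ regime, the $\log(1+x)\approx x$ linearization applies,'' and that in either case ``the sum rate stays $\Theta(M)$ and the number of time slots needed in phases 1, 2, and 3 is unaffected.'' This is self-contradictory in the only regime that matters here: the covertness constraint forces an average power far more stringent than $\frac{P}{n}$ (this is precisely why Lemma \ref{lem:general} is needed), so phase 2 operates at low $\SNR$, where the MIMO gain is \emph{linear} in the transmit power. If the linearization $\log(1+x)=\Theta(x)$ applies, then multiplying the power of each inter-cluster MIMO transmission by $\sqrt{M/n}$ multiplies its rate by $\sqrt{M/n}$ as well; the sum rate becomes $\Theta\bigl(M\sqrt{M/n}\bigr)$, not $\Theta(M)$. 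Consequently the slot counts and the throughput expression of Lemma \ref{lem:general} are \emph{not} unchanged, and plugging ``unchanged'' counts back into that expression proves nothing --- if the rates and slot counts really were unaffected, the lemma would be vacuous. (A smaller slip: the lemma scales the \emph{power} by $\sqrt{M/n}$, not the amplitude, and this reduction is applied on top of --- not as a reformulation of --- the even scheduling of Section \ref{subsubsec:scheduling}.)

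The paper's proof instead accepts the rate loss and compensates for it. Each LN is now asked to send only $\Theta\bigl(M\sqrt{M/n}\bigr)$ bits; phases 1 and 3 then need $\Theta\bigl(M^{2-b}\sqrt{M/n}\bigr)$ slots, while phase 2 still needs $\Theta(n)$ slots by the linear MIMO scaling law, giving throughput $\frac{c_1\, nM\sqrt{M/n}}{c_2\, M^{2-b}\sqrt{M/n}+c_3\, n}$. Re-optimizing the cluster size gives $M=n^{\frac{3}{5-2b}}$ and a per-iteration exponent $\frac{9}{10-4b}-\frac{1}{2}=\frac{2+b}{5-2b}$, which is \emph{strictly smaller} than the exponent $\frac{1}{2-b}$ of Lemma \ref{lem:Ozgurlemma} for every $b<1$. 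The lemma is nevertheless true because the claim concerns the scaling achieved after the hierarchical iteration: the modified recursion $b\mapsto\frac{2+b}{5-2b}$ has the same fixed point $b=1$ as the original one, so iterating the construction still yields $\Theta(n^{1-\epsilon})$; this is exactly the sense in which the power reduction ``does not degrade the achievable throughput scaling.'' Your proposal is missing both of these essential steps --- re-optimizing $M$ under the reduced rate and the fixed-point argument for the iteration --- and its substitute claim (unchanged per-transmission rate) is false in the operative regime.
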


Intuitively, as the number of times we invoke Lemma \ref{lem:general} increases, the optimal value of $M$ tends to $n$. This means that the effect of reducing the power diminishes and thus the performance of the MIMO transmissions tends to the performance without the reduction of the power. For the proof, assume that the average transmit power constraint of $<\frac{P}{n}$ is given for convenience. We then show that near linear scaling of the throughput is achievable even if we multiply the power of the MIMO transmissions in Lemma \ref{lem:Ozgurlemma} by $\sqrt{\frac{M}{n}}$. Assume that each LN wants to send $\Theta(M\sqrt{\frac{M}{n}})$ bits. Then, in phases 1 and 3, $\Theta(M^{2-b}\sqrt{\frac{M}{n}})$ time slots are needed, and in phase 2, $\Theta(n)$ time slots are needed because of the linear scaling law for the MIMO gain. Then, the resulting throughput is $\frac{c_1 \cdot nM \sqrt{M/n}}{c_2 \cdot M^{2-b}\sqrt{M/n} + c_3 \cdot n}$, where $c_1$, $c_2$, and $c_3$ are constants independent of $n$. If we set $M = n^{\frac{3}{5-2b}}$, the resulting achievable throughput scaling is $\Theta(n^{\frac{9}{10-4b}-\frac{1}{2}})$. 
\begin{figure}
 \centering
  {
  \includegraphics[width=48mm]{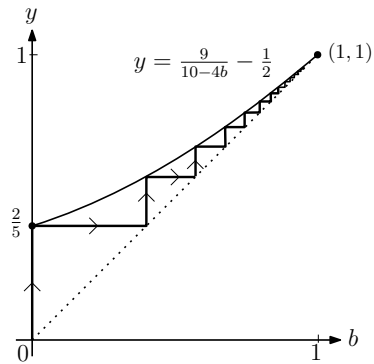}}
  \caption{Example of the worst case that the interference to a WN is strongest under the 9-TDMA scheme.} \label{fig:linear}
\end{figure}
By applying this method iteratively, the throughput scaling of $\Theta(n^{1-\epsilon})$ is achievable as shown in Fig. \ref{fig:linear}.

\subsubsection{Modified HC in Covert Communication}\label{subsubsec:modifiedHC}
Now consider our setting. In the presence of the covertness constraint \eqref{eqn:covertness} and the preservation regions, the achievable throughput scaling by the modified HC scheme is given in the following lemma. 
\begin{lemma} \label{lem:modifiedHC}
Fix $\epsilon > 0$. The following aggregate throughput $T(n,\kappa)$ is achievable under the covertness constraint w.h.p.:
\begin{align}
T(n,\kappa) = \Omega \left(n^{2-\frac{\alpha}{2}-\epsilon} \cdot \frac{n^{(\frac{1}{2}-\frac{\kappa}{2})(\alpha-2)}}{\sqrt{l}} \right).
\end{align}
\end{lemma}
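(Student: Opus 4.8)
The plan is to realize the target as the product of the extended-network HC throughput $n^{2-\frac{\alpha}{2}-\epsilon}$ (the relevant baseline for $2<\alpha\le 3$) and the covert linear scaling factor $\beta:=\frac{n^{(\frac{1}{2}-\frac{\kappa}{2})(\alpha-2)}}{\sqrt{l}}$. First I would fix the per-symbol power $P_{\mathrm{HC}}$ (the Gaussian codebook variance) that the modified HC scheme can afford while meeting \eqref{eqn:covertness}. The even scheduling of the phase-2 MIMO transmissions (Section \ref{subsubsec:scheduling}) guarantees that within any window of $l$ slots no single cluster interferes with a given WN for more than about $lM/n$ of the slots, so the burst-sensitive sum $\sum_{t=1}^{l}(I_{i,t}/N_0)^2$ appearing in \eqref{eqn:hclast} is prevented from being dominated by one nearby cluster hammering the WN for $l$ consecutive slots. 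Combining this with the $\sqrt{M/n}$ reduction of the MIMO power (Lemma \ref{lem:scalingdown}) and the preservation regions, which with $\gamma=\kappa/2$ forbid transmission within a square of side $\Theta(n^{-\kappa/2}\sqrt{\log n})$ about each WN and thereby enforce a minimum transmitter-to-WN distance yielding a per-transmission interference bound analogous to \eqref{eqn:iub}, I would show that this sum stays below $2\delta$ at every WN and that the admissible power is of order $P_{\mathrm{HC}}=\Theta\!\left(\beta\, n^{-\alpha/2}\right)$ up to polylogarithmic factors.

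Second, I would run the modified (non-bursty) HC scheme at this power. The rescaling that identifies a dense unit-area network at power $\Theta(n^{-\alpha/2})$ with the power-$P$ extended network of \cite{Ozgur:07} makes $n^{-\alpha/2}$ the baseline reference power, i.e.\ the exponent $\gamma=\alpha/2$ in the generalization Lemma \ref{lem:general}; iterating that lemma (together with Lemma \ref{lem:scalingdown}, which certifies that the burstiness-killing power reduction costs nothing in scaling) yields the baseline aggregate throughput $\Theta(n^{2-\alpha/2-\epsilon})$. The residual factor $\beta$ by which $P_{\mathrm{HC}}$ undershoots this reference then enters the throughput \emph{linearly}: the inter-cluster MIMO transmissions always operate at a vanishingly small long-range $\SNR$, so the achievable MIMO rate sits in the regime where $\log(1+\SNR)=\Theta(\SNR)$ is proportional to the transmit power. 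Scaling the power by $\beta$ therefore scales every MIMO rate, and hence the aggregate throughput, by $\beta$, producing exactly $\Omega(n^{2-\frac{\alpha}{2}-\epsilon}\cdot\beta)$. When $\beta$ is itself a power of $n$ one can instead fold it into $\gamma$ and invoke Lemma \ref{lem:general} once; the linear-rate argument handles general $l$.

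Finally I would confirm reliability: removing the preservation regions from the clusters leaves a vanishing outage fraction (deferred to Appendix \ref{app:outage}), while Lemma \ref{lem:num_cell} keeps $\Theta(M)$ nodes in each cluster so the distributed-MIMO degrees of freedom are intact. I expect the main obstacle to be the covertness bookkeeping of the first step, namely rigorously bounding $\sum_{t=1}^{l}(I_{i,t}/N_0)^2$ by $2\delta$ \emph{simultaneously} at every WN and over \emph{every} choice of the $l$-slot observation window, even though the MIMO power is still bursty at the symbol level. Reconciling this peak-versus-average tension with the preservation-region geometry, where a cluster can be far larger than a single preservation region once $M$ is close to $n$, is the delicate part; the throughput computation of the second step is then a comparatively routine consequence of Lemmas \ref{lem:general} and \ref{lem:scalingdown}.
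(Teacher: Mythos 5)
Your proposal is correct and takes essentially the same route as the paper's proof: fix the covert-admissible average power of order $\frac{n^{(\frac{1}{2}-\frac{\kappa}{2})(\alpha-2)}}{\sqrt{l}}\, n^{-\alpha/2}$ up to polylogarithmic factors (the paper verifies covertness via Lemma \ref{lem:power} and \eqref{eqn:hclast}), run the non-bursty HC scheme built from Lemma \ref{lem:general}, Lemma \ref{lem:scalingdown}, and the even phase-2 scheduling, and handle the preservation regions by outage plus supplementation of nodes inside them. Your only stylistic difference---assembling the scaling directly from Lemma \ref{lem:general} with $\gamma=\alpha/2$ and the linearity of the low-$\SNR$ MIMO rate in the transmit power, rather than quoting the bursty-HC benchmark under the same average power and arguing that the modifications are lossless---is the same argument in substance.
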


\begin{proof}
For convenience of the proof, similar to \eqref{eqn:pMH}, assume that the average transmit power over an arbitrary window of $l$ channel uses at each LN is upper bounded by
\begin{align}
\frac{n^{(\frac{1}{2}-\frac{\kappa}{2})(\alpha-2)}}{2 \log n}\sqrt{\frac{\delta}{l}}\left(\sqrt{\frac{2\log n}{n}}\right)^{\alpha}.
\end{align}  
The above value is smaller than $\frac{P}{n}$ regardless of $\alpha$, $\kappa$, and $l$. Then, using the average transmit power above, the bursty HC scheme without considering the covertness constraint can achieve the throughput scaling of $\Theta \left( n^{2-\frac{\alpha}{2}-\epsilon}\cdot \frac{n^{(\frac{1}{2}-\frac{\kappa}{2})(\alpha-2)}}{\sqrt{l}}\right)$. 
In our modified HC scheme, the modifications in Sections \ref{subsubsec:generalization} to \ref{subsubsec:MIMOpower} are applied, and we observed that these modifications do not degrade the performance of the HC scheme compared to the bursty HC scheme. Furthermore, according to Lemma \ref{lem:power} and \eqref{eqn:hclast}, we can verify that the covertness constraint is satisfied. 

Now we have to consider the preservation regions. 
\begin{figure}
 \centering
  {
  \includegraphics[width=40mm]{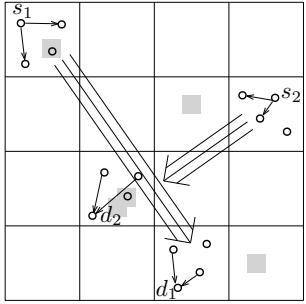}}
  \caption{The modified HC scheme in the presence of the preservation regions. The gray squares are the preservation regions.} \label{fig:hie}
\end{figure}
Let each of the LN pairs that the LSN or the LDN is in the preservation regions be in outage. In the HC scheme, there are two cases of the (sub)clusters for MIMO transmission: The clusters with size smaller than the size of a preservation region and the clusters with size larger than the size of a preservation region. If the size of the MIMO cluster is smaller than the size of a preservation region, let the region of the clusters that partially or completely belong to the preservation regions (and thus are impaired) be extra preservation regions. This does not influence the scaling of the throughput. If the size of the cluster is larger than the size of a preservation region, the LNs outside the preservation regions supplement the LNs in the preservation regions by increasing the rate as shown in Fig. \ref{fig:hie}. Because the fraction of the LNs in the preservation regions in a cluster is negligible, this does not influence the scaling of the throughput. Therefore, by adding the negligible fraction of extra preservation regions and supplementing the LNs in the preservation regions, the modified HC scheme can operate without any degradation of the throughput scaling.       
\end{proof}

\subsection{Hybrid HC-MH}\label{subsec:hybrid}
In this subsection, we present a modified version of the hybrid HC-MH scheme proposed in \cite{regime}. We first summarize the original hybrid HC-MH scheme, and then present a modified scheme in the presence of the covertness constraint and the preservation regions.

\subsubsection{Original Hybrid HC-MH Scheme}\label{subsubsec:originalhybrid}
\begin{figure}
 \centering
  {
  \includegraphics[width=70mm]{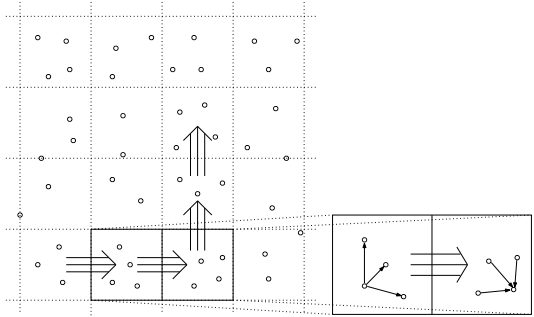}}
  \caption{Example of the hybrid HC-MH scheme. The HC scheme is used in each square cell and the MH scheme is used for the global MH MIMO transmissions.} \label{fig:originalhybrid}
\end{figure}
Consider a network in which the long-range $\SNR$ is insufficient and thus the HC scheme is outperformed by the MH scheme. In this network, if the short-range $\SNR$ (the received $\SNR$ in the MH scheme) is non-vanishing, the MH scheme cannot efficiently utilize the sufficient received $\SNR$. One approach to efficiently utilize the $\SNR$ is constructing small size of the HC structures locally and delivering the data by the MH MIMO transmissions as shown in Fig. \ref{fig:originalhybrid}. The hybrid scheme in a network of unit square area is summarized as follows: 
\begin{itemize}  
\item Divide the network into square cells with area of $\frac{M}{n}$ where $M$ will be optimized later. 
\item Locally in each cell, the HC scheme is performed to distribute the data of each node (or the data to be relayed) and gather the data to be decoded.
\item Globally, the data of each node that is distributed to the nodes in the cell is delivered to the neighboring cell by distributed MIMO transmission. The sequential MIMO transmissions are performed based on the MH scheme as described in Section \ref{subsubsec:originalMH}.  
\end{itemize}

Assume that the average transmit power is upper bounded by $P$, and let $P_{\mathrm{short}} \coloneqq Pn^{\alpha/2}$, i.e., short-range $\SNR$. Then, the optimal value of $M$ is $P_{\mathrm{short}}^{1/(\alpha/2-1)}$, and the resulting achievable throughput scaling is $\Theta(n^{\frac{1}{2}-\epsilon} \cdot P_{\mathrm{short}}^{\frac{1}{\alpha-2}})$.

\subsubsection{Detoured Hybrid HC-MH Scheme}\label{subsubsec:detouredhybrid}
Now consider our setting and assume that $\frac{n^{(\frac{1}{2}-\frac{\kappa}{2})(\alpha-2)}}{\sqrt{l}}$ is non-vanishing. Our modified hybrid scheme, termed {\em detoured} hybrid scheme is based on the original hybrid scheme and achieves the throughput scaling that is presented in the following lemma. 
\begin{lemma} \label{lem:originalhybrid}
Fix $\epsilon > 0$. If $\frac{n^{(\frac{1}{2}-\frac{\kappa}{2})(\alpha-2)}}{\sqrt{l}}$ does not vanish as $n$ tends to infinity, the following aggregate throughput $T(n,\kappa)$ is achievable under the covertness constraint w.h.p.:  
\begin{align}
T(n,\kappa) = \Omega \left(n^{\frac{1}{2}-\epsilon} \left( \frac{n^{(\frac{1}{2}-\frac{\kappa}{2})(\alpha-2)}}{\sqrt{l}}\right)^{\frac{1}{\alpha-2}} \right). 
\end{align}
\end{lemma}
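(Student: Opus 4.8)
The plan is to build the detoured hybrid scheme as a direct composition of the two schemes already analyzed: the modified HC scheme of Section \ref{subsec:HC} runs \emph{locally} inside each cell, while the detoured MH scheme of Section \ref{subsec:MH} governs the \emph{global} routing of the cell-to-cell distributed MIMO transmissions. The conceptual observation driving the whole argument is that the covertness constraint enters only through the admissible transmit power, and that power is exactly what fixes the short-range $\SNR$ appearing in the throughput of the original hybrid scheme. Thus I expect the proof to reduce, after an appropriate identification of parameters, to the original hybrid analysis of Section \ref{subsubsec:originalhybrid}.

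First I would fix the cell area to $\frac{M}{n}$ and determine the largest transmit power admissible under the covertness constraint. Repeating the worst-case interference computation of Lemma \ref{lem:power} (again taking the preservation-region width $b=\Theta(n^{-\kappa/2}\sqrt{\log n})$, i.e.\ $\gamma=\kappa/2$) and bounding the relative entropy at each WN through \eqref{eqn:hclast} shows that covertness is met precisely when the effective short-range $\SNR$ between two LNs at the typical nearest-neighbour distance $\sqrt{1/n}$ equals
\begin{align*}
\SNR_{\mathrm{s}} = \Theta\!\left(\frac{n^{(\frac{1}{2}-\frac{\kappa}{2})(\alpha-2)}}{\sqrt{l}}\right),
\end{align*}
exactly as in the detoured MH analysis leading to \eqref{eqn:rate}. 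Since this quantity is assumed non-vanishing, the local HC structure operates in the high-$\SNR$ regime, which is the regime in which the hybrid construction is advantageous over pure MH.

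Next I would invoke the original hybrid scheme of Section \ref{subsubsec:originalhybrid} under the substitution $P_{\mathrm{short}} \mapsto \SNR_{\mathrm{s}}$: the optimal cluster size becomes $M = \SNR_{\mathrm{s}}^{1/(\alpha/2-1)} = \SNR_{\mathrm{s}}^{2/(\alpha-2)}$, and the resulting aggregate throughput is $\Theta\!\big(n^{\frac{1}{2}-\epsilon}\,\SNR_{\mathrm{s}}^{1/(\alpha-2)}\big)$, which is precisely the claimed scaling once $\SNR_{\mathrm{s}}$ is written out. To legitimize this substitution in the covert setting, two families of modifications must be layered on and shown to be harmless. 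For the local HC inside each cell I would re-use the machinery of Sections \ref{subsubsec:scheduling} and \ref{subsubsec:MIMOpower}: spread each cell's MIMO transmissions evenly over every window of $l$ slots and scale the MIMO power down by $\sqrt{M/n}$, so that Lemma \ref{lem:scalingdown} guarantees no loss in the local near-linear scaling while \eqref{eqn:hclast} together with Lemma \ref{lem:power} guarantees covertness. For the global MH among cells I would re-use the detouring construction of Lemma \ref{lem:detouringmethod}: route the cell-level MIMO hops around the expanded preservation regions and distribute the detours over a band of width $\Theta(b)$, so that each cell still carries at most $\Theta(\sqrt{n}(\log n)^{3/2})$ data paths. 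Finally, LN pairs with an endpoint inside a preservation region are declared in outage, and---exactly as in the proof of Lemma \ref{lem:modifiedHC}---clusters smaller than a preservation region are absorbed into a slightly enlarged (still negligible) preservation region, while in clusters larger than a preservation region the outside LNs supplement the inside ones by a rate increase; both operations touch only a vanishing fraction of nodes, so the outage fraction vanishes by the analysis of Appendix \ref{app:outage}.

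The step I expect to be the main obstacle is showing that the two independent modifications compose without interfering. The even-scheduling and power-reduction argument for the local HC was established in Section \ref{subsubsec:MIMOpower} under an unperturbed geometry and a clean average-power budget, whereas here each cell's usable LN set and each hop's geometry are distorted by the detours around preservation regions. I must verify that the detouring of Lemma \ref{lem:detouringmethod}, originally designed for single-node MH hops, still yields a $\Theta(\sqrt{n}(\log n)^{3/2})$ relaying load when the hops are cell-level MIMO transmissions, and that the covertness bound \eqref{eqn:hclast}---a sum of squared per-slot interference powers---remains below $\delta$ once the bursty MIMO power and the detour-induced spatial reshuffling are accounted for \emph{simultaneously} rather than separately. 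Once this compositional step is settled, the target scaling follows immediately from the parameter identification $P_{\mathrm{short}} \mapsto \SNR_{\mathrm{s}}$, completing the proof.
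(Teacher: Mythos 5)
Your proposal is correct and takes essentially the same route as the paper's proof: local modified HC inside cells of area $\frac{M}{n}$ with $M = \left(\frac{n^{(\frac{1}{2}-\frac{\kappa}{2})(\alpha-2)}}{\sqrt{l}}\right)^{\frac{2}{\alpha-2}}$, global detoured MH for the inter-cell MIMO transmissions, covertness via Lemma \ref{lem:power}, and vanishing outage via Appendix \ref{app:outage}. The compositional obstacle you flag at the end is dispatched in the paper by a single geometric observation you did not state explicitly but which follows from your own parameter choices: with $\gamma = \frac{\kappa}{2}$ and this $M$, every cell is smaller than a preservation region, so expanded preservation regions block whole cells and the detouring argument of Lemma \ref{lem:detouringmethod} applies verbatim at the cell level, making the two modifications trivially compatible.
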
 
\begin{proof}
For convenience, similar to the proof for Lemma \ref{lem:modifiedHC}, assume that the average transmit power over an arbitrary window of $l$ channel uses at each LN is upper bounded by $\frac{n^{(\frac{1}{2}-\frac{\kappa}{2})(\alpha-2)}}{2 \log n}\sqrt{\frac{\delta}{l}}\left(\sqrt{\frac{2\log n}{n}}\right)^{\alpha}$. Then, the role of $\frac{n^{(\frac{1}{2}-\frac{\kappa}{2})(\alpha-2)}}{\sqrt{l}}$ is similar to the role of $P_{\mathrm{short}}$. Set $M$ tbe $\left(  \frac{n^{(\frac{1}{2}-\frac{\kappa}{2})(\alpha-2)}}{\sqrt{l}} \right)^{\frac{1}{\alpha/2-1}}$ and the preservation region exponent $\gamma$ to be $\frac{\kappa}{2}$. Then, any cell is smaller than its corresponding preservation region. 
\begin{figure}
 \centering
  {
  \includegraphics[width=80mm]{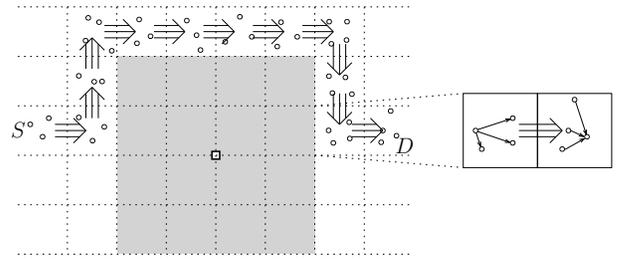}}
  \caption{Example of the modified hybrid scheme. In the global MH, the detouring method used in the modified MH scheme is applied.} \label{fig:detouredhybrid}
\end{figure}
In each cell, the modified HC scheme in Section \ref{subsubsec:modifiedHC} (without considering the preservation regions) is used, and for the global MH MIMO transmissions, the same detouring method used in the detoured MH scheme is utilized as shown in Fig. \ref{fig:detouredhybrid}. Then, the effect of the extra relaying burden is negligible and the outage fraction vanishes. Furthermore, from Lemma \ref{lem:power} and similarly to the case of the modified HC scheme, the covertness constraint is satisfied and the proof of the lemma is completed. 
\end{proof}

\section{Converse}\label{sec:conv}
In this section, we present the proofs for Theorems \ref{thm:conv1} and \ref{thm:conv2}. A cutset bound technique similar to the technique in \cite{regime} is used for the proofs. 

\subsection{A Necessary Condition on the Covertness Constraint}\label{subsec:necessary}
In this subsection, we derive a condition that is implied by the covertness constraint in \eqref{eqn:covertness}. Since the distribution of a complex random variable is equivalent to the joint distribution of the magnitude and the phase of the random variable, the relative entropy in \eqref{eqn:covertness} can be written as  
\begin{align}
D\left(\left. \hat{Q}_{Z_i^l} \right \| Q_{N_i'}^{\times l}\right) = D\left( \left. \hat{Q}_{\left(|Z_i|,e^{j\angle Z_i}\right)^l} \right \| Q_{\left(|N_i'|,e^{j\angle N_i'}\right)}^{\times l}\right). \label{eqn:qq}
\end{align}
Since the transmitters do not know the channel phases, the magnitudes and phases in \eqref{eqn:qq}  are independent. Thus, we have
\begin{align}
&D\left( \left. \hat{Q}_{\left (|Z_i|,e^{j\angle Z_i}\right )^l} \right \| Q_{\left (|N_i'|,e^{j\angle N_i'}\right )}^{\times l}\right) \nonumber \\
&= D\left(\left. \hat{Q}_{\left (|Z_i|^l,\left(e^{j\angle Z_i}\right )^l \right)} \right \| Q_{\left (|N_i'|,e^{j\angle N_i'}\right )}^{\times l}\right) \\
&= D\left (\left. \hat{Q}_{|Z_i|^l} \right \| Q_{|N_i'|}^{\times l}\right ) + D\left (\left. \hat{Q}_{(e^{j\angle Z_i})^l} \right \| Q_{e^{j\angle N_i'}}^{\times l}\right) \\ \label{eqn:phaseD}
&= D\left (\left. \hat{Q}_{|Z_i|^l} \right \| Q_{|N_i'|}^{\times l}\right ),
\end{align}
where \eqref{eqn:phaseD} is because both $\hat{Q}_{(e^{j\angle Z_i})^l}$ and $Q_{e^{j\angle N_i'}}^{\times l}$ are i.i.d. process according to the uniform distribution over $[0,2\pi]$.

Next, let $\hat{Q}_{|\bar{Z}_i|}$ denote the average distribution of $|Z_i|$ over $l$ channel uses, and let $|Z_i|_{k}$ denote the magnitude of the received signal by WN $i$ at time $k$. Then, we have
\begin{align}
&D\left (\left. \hat{Q}_{|Z_i|^l} \right \| Q_{|N_i'|}^{\times l}\right) \nonumber \\
&= -h(|Z_i|^l) + \mathbb{E}_{|Z_i|^l}\left[ \log \frac{1}{Q_{|N_i'|}^{\times l}(|Z_i|^l)  }\right] \\
&= - \sum_{k=1}^l h(|Z_i|_k | |Z_i|^{k-1}) + \mathbb{E}_{|Z_i|^l}\left[ \log \frac{1}{Q_{|N_i'|}(|Z_i|_k)  }\right] \\
&= - \sum_{k=1}^l h(|Z_i|_k | |Z_i|^{k-1}) + \mathbb{E}_{|Z_i|_k}\left[ \log \frac{1}{Q_{|N_i'|}(|Z_i|_k)  }\right] \\
&\geq  - \sum_{k=1}^l h(|Z_i|_k) + \mathbb{E}_{|Z_i|_k}\left[ \log \frac{1}{Q_{|N_i'|}(|Z_i|_k)  }\right] \\
&= \sum_{k=1}^l D\left (\left. \hat{Q}_{|Z_i|_k} \right \| Q_{|N_i'|}\right) \\ \label{eqn:Dconvexity}
&\geq  l\cdot D\left (\left. \hat{Q}_{|\bar{Z}_i|} \right \| Q_{|N_i'|}\right ), 
\end{align}
where \eqref{eqn:Dconvexity} is due to the convexity of the relative entropy. In addition, from some manipulations, we obtain that
\begin{align} \label{eqn:square}
D\left (\left. \hat{Q}_{|\bar{Z}_i|} \right \| Q_{|N_i'|}\right ) = D\left (\left. \hat{Q}_{|\bar{Z}_i|^2} \right \| Q_{|N_i'|^2}\right).
\end{align}
The proof of the above equation is in Appendix \ref{app:square}. Then, the following holds if the covertness constraint \eqref{eqn:covertness} is satisfied:
\begin{align}
D\left (\left. \hat{Q}_{|\bar{Z}_i|^2} \right \| Q_{|N_i'|^2}\right) \leq \frac{\delta}{l}. \label{eqn:single}
\end{align}

 Since the interference and the noise are independent at each WN, the first moment of the $|\bar{Z}_i|^2$ is given by $\rho_i + N_0$, where $\rho_i = \frac{1}{l}(\rho_{i1} + \rho_{i2} + \cdots + \rho_{il})$, and $\rho_{ik}$ is the total received power in WN $i$ at the $k^{\mathrm{th}}$ time slot used for the hypothesis test. Furthermore, since $|N_i'|^2$ follows an exponential distribution,
\begin{align}
& D\left (\left. \hat{Q}_{|\bar{Z}_i|^2} \right \| Q_{|N_i'|^2}\right ) \nonumber \\
&= -h(|\bar{Z}_i|^2) + \mathbb{E}_{|\bar{Z}_i|^2}\left[ \log\frac{1}{Q_{|N_i'|^2}(|\bar{Z}_i|^2)}  \right]  \\
&= -h(|\bar{Z}_i|^2) + \mathbb{E}_{|\bar{Z}_i|^2}\left[ \log\left(N_0e^{\frac{|\bar{Z}_i|^2}{N_0}} \right)  \right]\\
&= -h(|\bar{Z}_i|^2) + \log N_0 + \mathbb{E}_{|\bar{Z}_i|^2}\left[ \frac{|\bar{Z}_i|^2}{N_0}  \right]\\
&= -h(|\bar{Z}_i|^2) + \log N_0 + \frac{\rho_i + N_0}{N_0}\\ \label{eqn:exponential}
&\geq - 1 - \log (\rho_i + N_0) + \log N_0 + \frac{\rho_i + N_0}{N_0}\\ 
&= - \log \frac{\rho_i + N_0}{N_0} + \frac{\rho_i}{N_0},
\end{align}
where \eqref{eqn:exponential} is because exponential distribution maximizes the differential entropy when the first moment is given. 

To satisfy the covertness constraint \eqref{eqn:covertness}, $D\left (\left. \hat{Q}_{|\bar{Z}_i|^2} \right \| Q_{|N_i'|^2}\right )$ and therefore $\rho_i$ must tend to zero as $l$ tends to infinity. Then, 
\begin{align}
D\left (\left. \hat{Q}_{|\bar{Z}_i|^2} \right \| Q_{|N_i'|^2}\right ) \geq \frac{\rho_i^2}{2N_0^2} + o(\rho_i^2).
\end{align}
In addition, from \eqref{eqn:single}, the above yields
\begin{align}\label{eqn:constraint}
\rho_i \leq \sqrt{2}N_0\sqrt{\frac{\delta}{l}} + o(l^{-1/2}).
\end{align}

Thus, the covertness constraint implies that the $\INR$ at WN $i$ over an arbitrary window of  $l$ channel uses denoted as $\INR_{\mathrm{w},i}$, is upper bounded as follows:
\begin{align}\label{eqn:constraint}
\INR_{\mathrm{w},i} \leq \sqrt{2}\sqrt{\frac{\delta}{l}} + o(l^{-1/2}), \quad i=1,2,\ldots,n_{\mathrm{w}}.
\end{align}  

\begin{remark}
Since CSI is available only on the receivers, the transmitted signals are independent of the phase of the channels and hence the received signals are pairwise uncorrelated. Thus, $\rho_{ik}$ is the sum of the powers of the received signals at WN $i$ at the $k^{\mathrm{th}}$ time slot over a window.
\end{remark}

\subsection{Cutset Bound}\label{subsec:cutset}
Now we derive an upper bound on the throughput based on the cutset bound. At each LN, let the average transmit power over an arbitrary window of $l$ channel uses be $P_{\mathrm{CB}}$. Under condition \eqref{eqn:constraint}, the best way to allocate vanishing outage fraction to maximize $P_{\mathrm{CB}}$ is setting the preservation region around each WN to be of the same size. This is a simple $\max \min$ problem because a LN near to a small preservation region cannot use high transmit power. We set $\gamma = \frac{\kappa}{2}$ for brevity. Then, $P_{\mathrm{CB}}$ is upper bounded as 
\begin{align}\label{eqn:uppower}
P_{\mathrm{CB}} \leq c\cdot \frac{n^{(\frac{1}{2}-\frac{\kappa}{2})(\alpha-2)}}{\log n}\sqrt{\frac{\delta}{l}}\left(\sqrt{\frac{2\log n}{n}}\right)^{\alpha},
\end{align}
where $c$ is a constant independent of $\alpha$, $\kappa$, $\delta$, $l$, and $n$.

The proof of \eqref{eqn:uppower} is similar to the proof for Lemma \ref{lem:power}, but is modified appropriately for the converse step. Consider the case in which every active LN uses the average transmit power $P_{\mathrm{CB}}$ over an arbitrary window of $l$ channel uses. Then, at WN $i$, we have   
\begin{align}
 \INR_{\mathrm{w},i} &\geq \frac{GP_{\mathrm{CB}}}{N_0}\sum_{i=n^{\frac{1}{2}-\frac{\kappa}{2}}}^{\sqrt{n/\log n}}16i \log n\left( i\sqrt{\frac{2\log n}{n}}\right)^{-\alpha}\\ \label{eqn:sum}
&= kP_{\mathrm{CB}}\log n \left(\sqrt{\frac{2\log n}{n}}\right)^{-\alpha}\sum_{i=n^{\frac{1}{2}-\frac{\kappa}{2}}}^{\sqrt{n/\log n}}i^{1-\alpha} \\
&= k'P_{\mathrm{CB}}\frac{\log n}{n^{(\frac{1}{2}-\frac{\kappa}{2})(\alpha-2)}}\left(\sqrt{\frac{2\log n}{n}}\right)^{-\alpha},
\end{align}
where $k$ and $k'$ are constants independent of $\alpha$, $\kappa$, $\delta$, $l$, and $n$.
Then, from condition \eqref{eqn:constraint}, we can obtain \eqref{eqn:uppower}. Henceforth, we fix $P_{\mathrm{CB}} = c\cdot \frac{n^{(\frac{1}{2}-\frac{\kappa}{2})(\alpha-2)}}{\log n}\sqrt{\frac{\delta}{l}}\left(\sqrt{\frac{2\log n}{n}}\right)^{\alpha}$.

To derive the promised cutset bound, consider a vertical cut dividing the network into two equal halves. Let us define $T_{\mathrm{cut}}$ as the throughput of the information flow from the left half of the network to the right half of the network. 
\begin{figure}
 \centering
  {
  \includegraphics[width=50mm]{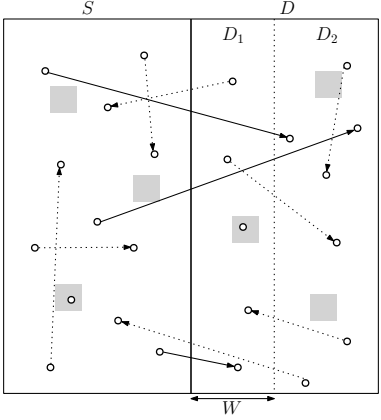}}
  \caption{Example of the information flow in the network. The arrows with bold lines represent the information delivered from the left half to the right half through the cut. The gray squares are the preservation regions, and the circles are the LNs.} \label{fig:cutset}
\end{figure}
Because the LNs are distributed and are chosen in pairs randomly, $T_{\mathrm{cut}}$ belongs to $((1-\epsilon)T(n,\kappa)/4, (1+\epsilon)T(n,\kappa)/4)$ w.h.p., where $\epsilon > 0$ is an arbitrarily small constant (Fig. \ref{fig:cutset}). 

Let $S$ be the set of LNs in the left half of the network and $D$ be the set of LNs in the right half. Then, $T_{\mathrm{cut}}$ is upper bounded by the capacity of the MIMO channel between $S$ and $D$ as follows:  
\begin{align} \label{eqn:MIMOcap}
&T_{\mathrm{cut}} \leq \nonumber \\
& \max_{ \substack{ \Sigma(H) \geq 0 \\ \mathbb{E}[\Sigma_{kk}(H)]\leq P_{\mathrm{CB}}, \forall k \in S } } \mathbb{E} \left[ \log \det \left(I + \frac{1}{N_0B}H\Sigma(H)H^{\mathrm{H}} \right)   \right],
\end{align}
where $H$ is the channel matrix whose components are 
\begin{align}
H_{ik} = \frac{\sqrt{G}}{d_{ik}^{\alpha/2}}\exp\left(j\theta_{ik} \right),\quad k \in S,\ i \in D,
\end{align}
and $A^{\mathrm{H}}$ is the Hermitian of the matrix $A$.
In \eqref{eqn:MIMOcap}$, \Sigma(\cdot)$ is a function that maps a given channel matrix to a positive semi-definite transmit covariance matrix, and $\Sigma_{kk}(H)$ is the $k^{\mathrm{th}}$ component along the diagonal of $\Sigma(H)$. 

If we define $\tilde{H}$ as the matrix whose components are
\begin{align}
\tilde{H}_{ik} = \frac{1}{(\sqrt{n})^{\alpha/2}}H_{ik},\quad k \in S,\ i \in D,
\end{align}
and define $P_{\mathrm{CB}}'$ as $\frac{n^{\alpha/2}}{N_0B} P_{\mathrm{CB}}$, then the following holds.
\begin{align}
&\max_{ \substack{ \Sigma(H) \geq 0 \\ \mathbb{E}[\Sigma_{kk}(H)]\leq P_{\mathrm{CB}}, \forall k \in S } } \mathbb{E} \left[ \log \det \left(I + \frac{1}{N_0B}H\Sigma(H)H^{\mathrm{H}} \right)   \right] \nonumber \\
&= \max_{ \substack{ \Sigma(\tilde{H}) \geq 0 \\ \mathbb{E}[\Sigma_{kk}(\tilde{H})]\leq P_{\mathrm{CB}}', \forall k \in S } } \mathbb{E} \left[ \log \det \left(I + \tilde{H}\Sigma(\tilde{H})\tilde{H}^{\mathrm{H}} \right)   \right] \\
&= \max_{ \substack{ \Sigma(\tilde{H}) \geq 0 \\ \mathbb{E}[\Sigma_{kk}(\tilde{H})]\leq 1, \forall k \in S } } \mathbb{E} \left[ \log \det \left(I + P_{\mathrm{CB}}'\tilde{H}\Sigma(\tilde{H})\tilde{H}^{\mathrm{H}} \right)   \right].
\end{align}
In the chain of equations above, $\tilde{H}$ can be interpreted as the channel matrix of the network of a square of area $n$ that is obtained by scaling our network. In addition, the role of $P_{\mathrm{CB}}' = \Theta \left( \frac{n^{(\frac{1}{2}-\frac{\kappa}{2})(\alpha-2)}}{\sqrt{l}} \right)$ is similar to that of the important factor $\SNR_s$ in \cite{regime}, and is a key parameter in the analyses contained in this paper.

Now divide the right half of the network into two rectangular regions that the size of the rectangular adjacent to the cut is $W \times 1$ as shown in Fig. \ref{fig:cutset}, where $W$ will be chosen later. Define $D_1$ as the set of LNs in this rectangular and define $D_2$ as $D \backslash D_1$. Then, due to the generalized Hadamard inequality,  
\begin{align}
&\log \det \left(I + P_{\mathrm{CB}}'\tilde{H}\Sigma(\tilde{H})\tilde{H}^{\mathrm{H}} \right)  \nonumber \\
&\leq \log \det \left(I + P_{\mathrm{CB}}'\tilde{H}_1\Sigma(\tilde{H}_1)\tilde{H}_1^{\mathrm{H}} \right) \nonumber \\
&\quad + \log \det \left(I + P_{\mathrm{CB}}'\tilde{H}_2\Sigma(\tilde{H}_2)\tilde{H}_2^{\mathrm{H}} \right),
\end{align}
where $\tilde{H}_1$ is the $|D_1|$ by $|S|$ submatrix of $\tilde{H}$ whose components are $\tilde{H}_{ik}, k \in S,\ i \in D_1$, and $\tilde{H}_2$ is the $|D_2|$ by $|S|$ submatrix of $\tilde{H}$ whose components are $\tilde{H}_{ik}, k \in S,\ i \in D_2$. Then, we have 
\begin{align}
&T_{\mathrm{cut}} \leq \nonumber \\
& \max_{ \substack{ \Sigma(\tilde{H}_1) \geq 0 \\ \mathbb{E}[\Sigma_{kk}(\tilde{H}_1)]\leq 1, \forall k \in S } } \mathbb{E} \left[ \log \det \left(I + P_{\mathrm{CB}}'\tilde{H}_1\Sigma(\tilde{H}_1)\tilde{H}_1^{\mathrm{H}} \right)   \right] \nonumber \\ \label{eqn:MMMM}
& \quad + \max_{ \substack{ \Sigma(\tilde{H}_2) \geq 0 \\ \mathbb{E}[\Sigma_{kk}(\tilde{H}_2)]\leq 1, \forall k \in S } } \mathbb{E} \left[ \log \det \left(I + P_{\mathrm{CB}}'\tilde{H}_2\Sigma(\tilde{H}_2)\tilde{H}_2^{\mathrm{H}} \right)   \right].
\end{align}

The first term in \eqref{eqn:MMMM} can be upper bounded by the sum of the multiple-input single-output (MISO) capacities as follows
 \begin{align}
& \max_{ \substack{ \Sigma(\tilde{H}_1) \geq 0 \\ \mathbb{E}[\Sigma_{kk}(\tilde{H}_1)] \leq 1, \forall k \in S } } \mathbb{E} \left[ \log \det \left(I + P_{\mathrm{CB}}'\tilde{H}_1\Sigma(\tilde{H}_1)\tilde{H}_1^{\mathrm{H}} \right)   \right] \nonumber \\  \label{eqn:MISOs1}
& \leq \sum_{i \in D_1} \log \left( 1 + \frac{nP_{\mathrm{CB}}'}{2}\sum_{k \in S}|\tilde{H}_{ik}|^2  \right).
\end{align}
In \eqref{eqn:MISOs1}, the value of $|\tilde{H}_{ik}|^2 $ is large if LN $k$ is close to the cut and thus this can result in a loose upper bound. Therefore, similarly to \cite{regime}, we assume that there is no LN in the rectangular region with size $\frac{1}{\sqrt{n}} \times 1$ to the right of the cut. The validity of this assumption can be checked in \cite[Appendix I]{regime}. Then, the right hand side in \eqref{eqn:MISOs1} can be upper bounded as 
\begin{align}
& \sum_{i \in D_1} \log \left( 1 + \frac{nP_{\mathrm{CB}}'}{2}\sum_{k \in S}|\tilde{H}_{ik}|^2  \right)  \nonumber \\ 
&= \sum_{i \in D_1} \log \left( 1 + \frac{nP_{\mathrm{CB}}'}{2n^{\alpha/2}}\sum_{k \in S}|H_{ik}|^2  \right)  \\ \label{eqn:MISOs2}
& \leq \left( W - \frac{1}{\sqrt{n}} \right) n (\log n) \left(\log \left( 1 + c_1 P_{\mathrm{CB}}' n^{1+\alpha(1/2 + \delta)} \right) \right)
\end{align} 
w.h.p. for any $\delta > 0$ where $c_1$ is a constant independent of $n$. The term $\left( W - \frac{1}{\sqrt{n}} \right)$ represents the area of the rectangular region containing $D_1$ except the rectangular region with size $\frac{1}{\sqrt{n}} \times 1$ to the right of the cut. The term $n^{\alpha(1/2 + \delta)}$ is from the fact that the distance between any two LNs in our network is larger than $\frac{1}{n^{1+\delta}}$ w.h.p. for any $\delta > 0$.  

Now consider the second term in \eqref{eqn:MMMM}. Since $\tilde{H}_2\Sigma(\tilde{H}_2)\tilde{H}_2^{\mathrm{H}}$ is a positive semi-definite matrix, $\log \lambda_i \leq \lambda_i - 1$ holds for each positive eigenvalue $\lambda_i$ of ($I+\tilde{H}_2\Sigma(\tilde{H}_2)\tilde{H}_2^{\mathrm{H}}$). Then, we have
\begin{align} \label{eqn:trace}
\log \det \left(I + P_{\mathrm{CB}}'\tilde{H}_2\Sigma(\tilde{H}_2)\tilde{H}_2^{\mathrm{H}} \right) \leq \trace \left( P_{\mathrm{CB}}'\tilde{H}_2\Sigma(\tilde{H}_2)\tilde{H}_2^{\mathrm{H}} \right).
\end{align}  
Next define $P_{\mathrm{cut}}$ as the total received power of the LNs in $D_2$ when each LN in $S$ transmits its signal with average transmit power $P_{\mathrm{CB}}$. Then, for any $\epsilon > 0$, 
\begin{align} \label{eqn:Pcut}
\trace \left( P_{\mathrm{CB}}'\tilde{H}_2\Sigma(\tilde{H}_2)\tilde{H}_2^{\mathrm{H}} \right) \leq n^{\epsilon}P_{\mathrm{cut}}.
\end{align}  
The proof of the above inequality can be checked in \cite[Lemma 5.2]{Ozgur:07}. Furthermore, if $W \neq \frac{1}{2}$, $P_{\mathrm{cut}}$ is upper bounded as 
\begin{align} \label{eqn:Pcut}
P_{\mathrm{cut}} \leq 
\begin{cases}
kP_{\mathrm{CB}}'n^{2-\alpha/2}(\log n)^2, & 2 < \alpha < 3\\
kP_{\mathrm{CB}}'\sqrt{n}(\log n)^3, & \alpha = 3 \\
kP_{\mathrm{CB}}'W^{3-\alpha}n(\log n)^2, & \alpha > 3,
\end{cases}
\end{align}
where $k$ is a constant independent of $\alpha$, $P_{\mathrm{CB}}'$, and $n$. The proof of the above inequality can be checked in the proof of \cite[Eq.~(19)]{regime}. 

Finally, by combining \eqref{eqn:MISOs2} and \eqref{eqn:Pcut}, we can obtain
\begin{align}
T_{\mathrm{cut}} &\leq \left( W - \frac{1}{\sqrt{n}} \right) n (\log n) \left(\log \left( 1 + c_1 P_{\mathrm{CB}}' n^{1+\alpha(1/2 + \delta)} \right) \right)  \nonumber \\ \label{eqn:CB} 
& \quad + n^{\epsilon}P_{\mathrm{cut}}.
\end{align}
To obtain a tight upper bound from \eqref{eqn:CB}, we can minimize the right hand side in \eqref{eqn:CB} by setting $W$ to be an appropriate function of $P_{\mathrm{CB}}$, i.e., set $W$ to be a function of $\alpha$, $\kappa$, and $l$. This is a simple $\min \max$ problem and can be solved by making the two terms asymptotically equal. To do so, simply, if $P_{\mathrm{CB}}' < 1$, we set $W = \frac{1}{\sqrt{n}}$ and $D_1 = \emptyset$. If $P_{\mathrm{CB}}' \geq 1$, we set $W = P_{\mathrm{CB}}'^{\frac{1}{\alpha-2}}$. From this, we complete the proof.

\section{Conclusion and Future Work}\label{sec:conclusion}
In this paper, we established throughput scaling laws of covert communication over wireless adhoc networks where a number of LNs and WNs are distributed randomly. Preservation regions around each WN played key roles in improving the performance of covert communication in the network while allowing a vanishing fraction of LNs that are in outage. We utilized and modified three existing network communication schemes so that they are amenable in the covert communication setting. We showed that each proposed scheme can achieve asymptotically same throughput scaling that the original scheme can achieve if the proposed and original schemes are performed under the same average transmit power constraint. We observed that the proposed schemes asymptotically achieve upper bounds on the throughput scaling that are derived under the assumption that every active LN uses the same average transmit power over an arbitrary window of $l$ channel uses. Although this assumption violates the generality of the converse theorems, it is not unreasonable to believe that in some network scenarios, this assumption is valid. Our results can be extended to a network of arbitrary size by simply adapting the size of a preservation region.   

We present some possible future directions for research in the following:
\begin{itemize}
\item Deriving a general converse theorem without the assumption on the transmit power of LNs in Theorems \ref{thm:conv1} and \ref{thm:conv2}.
\item In this paper, we assume that the WNs are non-communicating. What is the effect of possible cooperation between the WNs? How will the covertness constraint be further strengthened? In addition, what are the similarities between the cooperation between the WNs and the interferences suffered by the LNs? 
\item We assumed that the number of LNs is larger than the number of WNs. If this is not true, what is the capacity scaling in this new class of networks? In this case, we conjecture that the use of preservation regions may not be as effective as in this paper. 
\end{itemize}

\appendices

\section{Proof of \eqref{eqn:square}} \label{app:square}
To prove \eqref{eqn:square}, we show that the relative entropy between two non-negative random variables is invariant when each of the random variables is squared. Let $Z$ and $N$ be two non-negative random variables. In addition, let $f_{(\cdot)}$ and $F_{(\cdot)}$ be the density function and the cumulative distribution function (CDF) of random variable $(\cdot)$, respectively. Then, for any $a \geq 0$, we can relate the CDFs of $Z$ and $Z^2$ as 
\begin{align}
F_{Z^2}(a) = \P(Z^2 \leq a) = \P(Z \leq \sqrt{a}) = F_
Z(\sqrt{a}).
\end{align}  
Thus, we have
\begin{align}
f_{Z^2}(a) = \frac{\mathrm{d}}{\mathrm{d}a}F_{Z^2}(a) = \frac{\mathrm{d}}{\mathrm{d}a}F_{Z}(\sqrt{a}) = \frac{1}{2\sqrt{a}}f_{Z}(\sqrt{a}).
\end{align}  
Substituting this into the formula for the relative entropy, we have
\begin{align}
D(f_{Z^2} \| f_{N^2}) &= \int_{0}^{\infty}f_{Z^2}(z)\log \frac{f_{Z^2}(z)}{f_{N^2}(z)}\, \mathrm{d}z \\
&= \int_{0}^{\infty} \frac{1}{2\sqrt{z}} f_{Z}(\sqrt{z})\log \frac{\frac{1}{2\sqrt{z}}f_{Z}(\sqrt{z})}{\frac{1}{2\sqrt{z}}f_{N}(\sqrt{z})}\, \mathrm{d}z \\
&= \int_{0}^{\infty} \frac{1}{2\sqrt{z}} f_{Z}(\sqrt{z})\log \frac{f_{Z}(\sqrt{z})}{f_{N}(\sqrt{z})}\, \mathrm{d}z.
\end{align}  
Now let $t = \sqrt{z}$. Then $\mathrm{d}z = 2t \, \mathrm{d}t$ and the range of integration is unchanged. Thus,
\begin{align}
D(f_{Z^2} \| f_{N^2}) &= \int_{0}^{\infty} \frac{1}{2t} f_{Z}(t)\log \frac{f_{Z}(t)}{f_{N}(t)}2t \, \mathrm{d}t \\
&= \int_{0}^{\infty}  f_{Z}(t)\log \frac{f_{Z}(t)}{f_{N}(t)} \, \mathrm{d}t \\
&= D(f_Z \| f_N),
\end{align} 
which proves \eqref{eqn:square}.

\section{Vanishing Outage Fraction} \label{app:outage}
We prove that in the three proposed schemes, the outage fraction vanishes if $\gamma = \frac{\kappa}{2} + \epsilon$ for any $\epsilon > 0$. First consider the detoured MH scheme. In the detoured MH scheme, there are two cases of LNs in outage: the set of LN pairs that the LSN or the LDN in each pair belongs to a expanded preservation region (denote this set by $T_1$) and the set of LN pairs such that there is no possible data path between the LSN and the LDN of each pair (denote this set by $T_2$). Then, we have to show that $\frac{|T_1 \cup T_2|}{n}$ tends to zero as $n$ tends to infinity. Clearly, we have
\begin{align} \label{eqn:union}  
\frac{|T_1 \cup T_2|}{n} \leq \frac{|T_1|}{n} + \frac{|T_2|}{n},
\end{align}
and we will instead show that each of the two terms in the right hand side above converges to zero. 

Let us consider the first term. We use a simple fact that a preservation region of area $n^{-\kappa-2\epsilon}\log n$ can result in at most $c_1n(n^{-\kappa-2\epsilon}\log n)$ outage LSNs where $c_1$ is a constant independent of $\kappa$ and $n$. Then, we have
\begin{align}  
\frac{|T_1|}{n} \leq \frac{2c_1n^{\kappa}n(n^{-\kappa-2\epsilon}\log n)}{n} = \frac{2c_1 \log n}{n^{2\epsilon}}.
\end{align}
The last term tends to zero as $n$ tends to infinity. 

\begin{figure}
 \centering
  {
  \includegraphics[width=45mm]{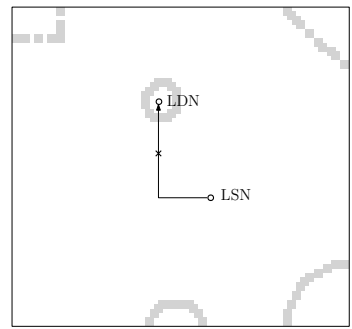}}
  \caption{Examples of expanded preservation regions that block the data paths of some LN pairs.} \label{fig:outagebound}
\end{figure}
Next consider the second term in \eqref{eqn:union}. Fig. \ref{fig:outagebound} shows several cases of expanded preservation regions that block the data paths of some LN pairs and thus result in the extra outage fraction. To upper bound the second term, let us focus on the case of the expanded preservation region that has the highest ratio of the number of induced outage LNs per the number of preservation regions in the expanded preservation regions. This case has the following properties: The expanded preservation 
\begin{itemize}  
\item Surrounds the corner of the network.
\item Contains $4\kappa \log n$ preservation regions (upper bound on the number of preservation regions w.h.p.) because the area of a region enclosed by a certain shape is proportional to the square of the perimeter of the shape. 
\item Has shape of circle because the area of a region is maximized by a circle when a perimeter is given.  
\item The preservation regions in the expanded preservation region do not overlap with each other and maintain a maximum interval (asymptotically $b$) each other. 
\end{itemize}
We now calculate the area of the region surrounded by the expanded preservation region that has the properties described above. In this case, the region has the shape of quartile circle, and the perimeter of the quartile circle is less than $c_2(4\kappa \log n)(\sqrt{2}n^{-\frac{\kappa}{2}-\epsilon}\sqrt{\log n})$, where $c_2$ is a constant independent of $\kappa$ and $n$. Then, the area of the quartile circle is less than $[c_2(4\kappa \log n)(\sqrt{2}n^{-\frac{\kappa}{2}-\epsilon}\sqrt{\log n})]^2 / \pi = c_3 (\log n)^3 n^{-\kappa-2\epsilon}$, and there are at most $2c_3 (\log n)^3 n^{1-\kappa-2\epsilon}$ LNs w.h.p., where $c_3$ is a constant independent of $\kappa$ and $n$. If we assume that there are $\frac{n^{\kappa}}{4\kappa \log n}$ expanded preservation regions of this case, we have
\begin{align}  
\frac{|T_2|}{n} \leq \frac{2c_3 (\log n)^3 n^{1-\kappa-2\epsilon}n^{\kappa}}{4\kappa n\log n} = \frac{c_3 (\log n)^2}{2\kappa n^{2\epsilon}}.
\end{align} 
The last term tends to zero as $n$ tends to infinity. This proves that \eqref{eqn:union} vanishes. 

The case of the modified HC scheme is a trivial case because the set of outage LNs is the set of LN pairs that the LSN or the LDN in each pair is in a preservation region. 
Finally, in the detoured hybrid scheme, the proof can be completed by following the same steps in the proof for the case of the detoured MH scheme.

\bibliographystyle{IEEEtran}
\bibliography{Covert_scaling_ChoLeeTan}

\begin{thebibliography}{10}
\providecommand{\url}[1]{#1}
\csname url@samestyle\endcsname
\providecommand{\newblock}{\relax}
\providecommand{\bibinfo}[2]{#2}
\providecommand{\BIBentrySTDinterwordspacing}{\spaceskip=0pt\relax}
\providecommand{\BIBentryALTinterwordstretchfactor}{4}
\providecommand{\BIBentryALTinterwordspacing}{\spaceskip=\fontdimen2\font plus
\BIBentryALTinterwordstretchfactor\fontdimen3\font minus
  \fontdimen4\font\relax}
\providecommand{\BIBforeignlanguage}[2]{{%
\expandafter\ifx\csname l@#1\endcsname\relax
\typeout{** WARNING: IEEEtran.bst: No hyphenation pattern has been}%
\typeout{** loaded for the language `#1'. Using the pattern for}%
\typeout{** the default language instead.}%
\else
\language=\csname l@#1\endcsname
\fi
#2}}
\providecommand{\BIBdecl}{\relax}
\BIBdecl

\bibitem{Bash:13}
B.~A. Bash, D.~Goeckel, and D.~Towsley, ``Limits of reliable communication with
  low probability of detection on {AWGN} channels,'' \emph{IEEE Journal on
  Selected Areas in Communications}, vol.~31, no.~9, pp. 1921--1930, Sep. 2013.

\bibitem{Wang:16}
L.~Wang, G.~W. Wornell, and L.~Zheng, ``Fundamental limits of communication
  with low probability of detection,'' \emph{IEEE Transactions on Information
  Theory}, vol.~62, no.~6, pp. 3493--3503, June 2016.

\bibitem{Bloch:16}
M.~R. Bloch, ``Covert communication over noisy channels: A resolvability
  perspective,'' \emph{IEEE Transactions on Information Theory}, vol.~62,
  no.~5, pp. 2334--2354, May 2016.

\bibitem{deniableuncertainty}
P.~H. {Che}, M.~{Bakshi}, C.~{Chan}, and S.~{Jaggi}, ``Reliable deniable
  communication with channel uncertainty,'' in \emph{Proc. IEEE Information
  Theory Workshop}, Hobart, TAS, Australia, Nov 2014, pp. 30--34.

\bibitem{undetectable}
S.~{Lee}, R.~J. {Baxley}, M.~A. {Weitnauer}, and B.~{Walkenhorst}, ``Achieving
  undetectable communication,'' \emph{IEEE Journal of Selected Topics in Signal
  Processing}, vol.~9, no.~7, pp. 1195--1205, Oct 2015.

\bibitem{MIMOAWGN}
A.~{Abdelaziz} and C.~E. {Koksal}, ``Fundamental limits of covert communication
  over {MIMO AWGN} channel,'' in \emph{Proc. IEEE Conference on Communications
  and Network Security}, Las Vegas, NV, USA, Oct 2017, pp. 1--9.

\bibitem{noncoherent}
M.~Tahmasbi, A.~Savard, and M.~R. Bloch, ``Covert capacity of non-coherent
  {Rayleigh}-fading channels,'' 2018, arXiv:1810.07687.

\bibitem{csit}
S.-H. {Lee}, L.~{Wang}, A.~{Khisti}, and G.~W. {Wornell}, ``Covert
  communication with channel-state information at the transmitter,'' \emph{IEEE
  Transactions on Information Forensics and Security}, vol.~13, no.~9, pp.
  2310--2319, Sep. 2018.

\bibitem{MAC}
K.~S.~K. Arumugam and M.~R. Bloch, ``Keyless covert communication over
  multiple-access channels,'' in \emph{Proc. IEEE International Symposium on
  Information Theory}, Jul. 2016, pp. 2229--2233.

\bibitem{BC}
V.~Y.~F. {Tan} and S.-H. {Lee}, ``Time-division is optimal for covert
  communication over some broadcast channels,'' \emph{IEEE Transactions on
  Information Forensics and Security}, vol.~14, no.~5, pp. 1377--1389, May
  2019.

\bibitem{Soltani:18}
R.~{Soltani}, D.~{Goeckel}, D.~{Towsley}, B.~A. {Bash}, and S.~{Guha}, ``Covert
  wireless communication with artificial noise generation,'' \emph{IEEE
  Transactions on Wireless Communications}, vol.~17, no.~11, pp. 7252--7267,
  Nov 2018.

\bibitem{ste1}
A.~D. Ker, T.~Pevn\'{y}, J.~Kodovsk\'{y}, and J.~Fridrich, ``The square root
  law of steganographic capacity,'' in \emph{Proc. 10th ACM Workshop on
  Multimedia and Security}, Oxford, UK, Sep. 2008, pp. 107--116.

\bibitem{ste2}
T.~Filler and J.~Fridrich, ``Fisher information determines capacity of
  $\epsilon$-secure steganography,'' in \emph{Proc. 11th International
  Conference on Information Hiding}, Darmstadt, Germany, Jun. 2009, pp. 31--47.

\bibitem{Gupta-Kumar:00}
P.~Gupta and P.~R. Kumar, ``The capacity of wireless networks,'' \emph{IEEE
  Transactions on Information Theory}, vol.~46, no.~2, pp. 388--404, March
  2000.

\bibitem{percol}
M.~{Franceschetti}, O.~{Dousse}, D.~N.~C. {Tse}, and P.~{Thiran}, ``Closing the
  gap in the capacity of wireless networks via percolation theory,'' \emph{IEEE
  Transactions on Information Theory}, vol.~53, no.~3, pp. 1009--1018, March
  2007.

\bibitem{upper}
O.~{L\'{e}v\^{e}que} and E.~{Telatar}, ``Information theoretic upper bounds on
  the capacity of large extended ad-hoc wireless networks,'' in \emph{Proc.
  International Symposium on Information Theory}, Chicago Downtown Marriott,
  Chicago, USA, Jun. 2004, pp. 370--370.

\bibitem{Ozgur:07}
A.~\"{O}zg\"{u}r, O.~L\'{e}v\^{e}que, and D.~N.~C. Tse, ``Hierarchical
  cooperation achieves optimal capacity scaling in ad hoc networks,''
  \emph{IEEE Transactions on Information Theory}, vol.~53, no.~10, pp.
  3549--3572, Oct 2007.

\bibitem{arbitrary}
U.~{Niesen}, P.~{Gupta}, and D.~{Shah}, ``On capacity scaling in arbitrary
  wireless networks,'' \emph{IEEE Transactions on Information Theory}, vol.~55,
  no.~9, pp. 3959--3982, Sep. 2009.

\bibitem{regime}
A.~\"{O}zg\"{u}r, R.~Johari, D.~N.~C. Tse, and O.~L\'{e}v\^{e}que,
  ``Information-theoretic operating regimes of large wireless networks,''
  \emph{IEEE Transactions on Information Theory}, vol.~56, no.~1, pp. 427--437,
  Jan 2010.

\bibitem{Jeon:11}
S.-W. Jeon, N.~Devroye, M.~Vu, S.-Y. Chung, and V.~Tarokh, ``Cognitive networks
  achieve throughput scaling of a homogeneous network,'' \emph{IEEE
  Transactions on Information Theory}, vol.~57, no.~8, pp. 5103--5115, Aug
  2011.

\bibitem{Cover:2006}
T.~M. Cover and J.~A. Thomas, \emph{Elements of Information Theory (Wiley
  Series in Telecommunications and Signal Processing)}.\hskip 1em plus 0.5em
  minus 0.4em\relax New York, NY, USA: Wiley-Interscience, 2006.

\end{thebibliography}

\end{document}